\newcommand{\ignore}[1]{}
\newtheorem{theorem}{Theorem}
\newtheorem{corollary}[theorem]{Corollary}
\newtheorem{lemma}[theorem]{Lemma}
\newtheorem{definition}[theorem]{Definition}
\newtheorem{proposition}{Proposition}
\newtheorem{conjecture}{Conjecture}
\newtheorem{claim}[theorem]{Claim}
\newcommand{\claimproof}[2]%
{\noindent{\em Proof of Claim \ref{#1}.}
#2\hspace*{\fill}$\Box$~~~~\vspace{3.5mm} }
\newcommand{\F}{\mathbb{F}}
\newcommand{\R}{\mathbb{R}}
\newcommand{\Q}{\mathbb{Q}}
\newcommand{\C}{\mathbb{C}}
\newcommand{\poly}{\text{poly}}
\newcommand{\siz}{\text{size}}
\newcommand{\rad}{\text{rad}}
\begin{document}

\pagenumbering{gobble}


\title{\bf Discovering the roots: Uniform closure results for algebraic classes under factoring}

\author{
Pranjal Dutta \thanks{Chennai Mathematical Institute, \texttt{pranjal@cmi.ac.in} }
\and
Nitin Saxena \thanks{CSE, Indian Institute of Technology, Kanpur, \texttt{nitin@cse.iitk.ac.in} }
\and
Amit Sinhababu \thanks{CSE, Indian Institute of Technology, Kanpur, \texttt{amitks@cse.iitk.ac.in} }
}

\date{}
\maketitle

\begin{abstract}
Newton iteration (NI) is an almost 350 years old recursive formula that  approximates a simple root of a polynomial quite rapidly. We generalize it to a matrix recurrence (allRootsNI) that approximates {\em all} the roots simultaneously. In this form, the process yields a better circuit complexity in the case when the number of roots $r$ is small but the multiplicities are exponentially large. Our method sets up a linear system in $r$ unknowns and iteratively builds the roots as formal power series. For an algebraic circuit $f(x_1,\ldots,x_n)$ of size $s$ we prove that {\em each} factor has size at most a polynomial in: $s$ and the degree of the squarefree part of $f$. Consequently, if $f_1$ is a $2^{\Omega(n)}$-hard polynomial then any nonzero multiple $\prod_{i} f_i^{e_i}$ is equally hard for {\em arbitrary} positive $e_i$'s, assuming that $\sum_i\deg(f_i)$ is at most $2^{O(n)}$.

It is an old open question whether the class of poly($n$)-sized formulas (resp.~algebraic branching programs) is closed under factoring. We show that given a polynomial $f$ of degree $n^{O(1)}$ and formula (resp.~ABP) size $n^{O(\log n)}$ we can find a similar size formula (resp.~ABP) factor in randomized poly($n^{\log n}$)-time. 
Consequently, if determinant requires $n^{\Omega(\log n)}$ size formula, then the same can be said about any of its nonzero multiples.

As part of our proofs, we identify a new property of multivariate polynomial factorization. We show that under a random linear transformation $\tau$, $f(\tau\overline{x})$ {\em completely} factors via power series roots. Moreover, the factorization adapts well to circuit complexity analysis. This with allRootsNI are the techniques that help us make progress towards the old open problems; supplementing the large body of classical results and concepts in algebraic circuit factorization (eg.~Zassenhaus, J.NT 1969; Kaltofen, STOC 1985-7 \& B\"urgisser, FOCS 2001).

\end{abstract}

\vspace{-.30mm}
\noindent
{\bf 2012 ACM CCS concept:} Theory of computation-- Algebraic complexity theory,  Problems, reductions and completeness; Computing methodologies-- Algebraic algorithms, Hybrid symbolic-numeric methods; Mathematics of computing-- Combinatoric problems.

\vspace{-.45mm}
\noindent
{\bf Keywords:} circuit factoring, formula, ABP, randomized, hard, VF, VBP, VP, VNP, quasipoly.  


\pagenumbering{arabic}

\vspace{-1mm}

\section{Introduction}
\vspace{-1mm}

Algebraic circuits provide a way, alternate to Turing machines, to study computation. Here, the complexity classes contain (multivariate) polynomial families instead of languages. It is a natural question whether an algebraic complexity class is closed under factors. This is also a useful, and hence, a very well studied question both from the point of view of practice and theory.  
We study the following two questions related to multivariate polynomial factorization: 
{\bf (1)} Let $\{f_n(x_1,\ldots,x_n)\}_{n}$ be a polynomial family in an algebraic complexity class $\mathcal{C}$ (egs.~VP, VF, VBP, VNP or $\overline{\text{VP}}$ etc.). Let $g_n$ be an arbitrary factor of $f_n$. Can we say that $\{g_n\}_{n} \in \mathcal{C} $? Equivalently, is the class $\mathcal{C}$ {\em closed under factoring}?
{\bf (2)} Can we design an {\em efficient}, i.e.~randomized poly($n$)-time, algorithm to output the factor $g_n$ with a representation in $\mathcal{C}$? ({\em Uniformity})

Different classes give rise to new challenges for the closure questions. Before discussing further, we give a brief overview of the algebraic complexity classes relevant for our paper. For more details, see \cite{mahajan2014algebraic, shpilka2010arithmetic,burgisser2013algebraic}.

Algebraic circuit is a natural model to represent a polynomial compactly.
An {\em algebraic circuit} has the structure of a layered directed acyclic graph. It has leaf nodes labelled as input variables $x_1,\ldots,x_n$ and constants from the underlying field $\F$. All the other nodes are labelled as addition and multiplication gates. It has a root node that outputs the polynomial computed by the circuit. Some of the complexity parameters of a circuit are
\emph{size} (number of edges and nodes), \emph{depth} (number of layers), syntactic {\em degree} (the maximum degree polynomial computed by any node), \emph{fan-in} (maximum number of inputs to a node) and \emph{fan-out}. An {\em algebraic formula} is a circuit whose underlying graph is a \emph{directed tree}. In a formula, the fan-out of the nodes is at most one, i.e.~`reuse' of intermediate computation is not allowed.

The class VP (resp.\ VF) contains the families of $n$-variate polynomials of degree $n^{O(1)}$ over $\F$, computed by $n^{O(1)}$-sized circuits (resp.~ formulas). The class VF is sometimes denoted as $\text{VP}_e$, for it collects `expressions' which is another name for formulas. Similarly, one can define VQP (resp.\ VQF) which contains the families of $n$-variate polynomials of degree $n^{O(1)}$ over $\F$, computed by $2^{\text{poly}(\log n)}$-sized circuits (resp.\ formulas). If we relax the condition on the degree in the definition of VP, by allowing the degree to be possibly exponential, then we define the class $\text{VP}_{nb}$. Such circuits can compute constants of exponential bit-size (unlike VP).
 
Algebraic branching program (ABP) is another model for computing polynomials which we define in Sec.\ref{prel}.
The class VBP contains the families of polynomials computed by $n^{O(1)}$-sized ABPs. We have the easy containments: VF $\subseteq$ VBP $\subseteq$ VP $\subseteq$ VQP $=$ VQF \cite{ben1992computing, valiant1983fast}.

Finally, we give an overview of the class VNP, which can be seen as a non-deterministic analog of the class VP.
A family of polynomials $\{f_n\}_n$ over $\F$ is in $\text{VNP}$ if there exist polynomials $t(n), s(n)$ and a family $\{g_n\}_n$ in $\text{VP}$ such that for every $n$, $ f_n(\overline{x})=\sum_{w \in \{0,1\}^{t(n)}} g_n(\overline{x},w_1,\hdots,w_{t(n)})$. Here, \emph{witness} size is $t(n)$ and \emph{verifier} circuit $g_n$ has size $s(n)$. VP is contained in VNP and it is believed that this containment is strict (Valiant's Hypothesis \cite{V79}). 

\smallskip
Newton iteration is one of the most popular numerical methods in engineering \cite{ortega2000iterative, gill1986projected}. This work introduces a new process to approximate all the roots of a circuit assuming that they are few and their multiplicites are known. This is based on a matrix recurrence, which in turn is derived from a new identity (Claim \ref{lem-logDer-Id}). Based on the process (called {\em allRootsNI} in Section \ref{sec-tech}) we get several consequences in high-degree circuit factoring (eg.~Theorem \ref{thm1}):

{\em Every factor of a given circuit C has size polynomial in: size(C) and the degree of the squarefree part of C.}

\smallskip\noindent
and in factoring other poly-degree algebraic models (eg.~Theorems \ref{thm3} \& \ref{thm-vf-bar}):

{\em Every factor, of a degree-$d$ polynomial with VF (respectively VBP, VNP) complexity $s$, has VF (respectively VBP, VNP) complexity poly($s,d^{\log d}$). The latter is poly($s$) if degree $d=2^{O(\sqrt{\log s})}$.}

\smallskip
Now, we briefly discuss the state of the art on the closure questions for various algebraic complexity classes.
To cover more depth and breadth, see \cite{kaltofen1990polynomial, kaltofen1992polynomial, forbes2015complexity}.

\vspace{-2mm}
\subsection{Previously known closure results}
\vspace{-1mm} 
 
Famously, Kaltofen \cite{Kaltofen85, Kaltofen86, kaltofen1987single, kaltofen1989factorization} showed that VP is {\em uniformly} closed under factoring, i.e.~for a given $d$ degree $n$ variate polynomial $f$ of circuit size $s$, there exists a randomized poly$(snd)$-time algorithm that outputs its factor as a circuit whose size is bounded by poly$(snd)$. This fundamental result has several applications such as  `hardness versus randomness' in algebraic complexity \cite{kabanets2003derandomizing,agrawal2008arithmetic,  dvir2009hardness, AFGS17}, derandomization of Noether Normalization Lemma \cite{mulmuley2017geometric}, in the problem of circuit reconstruction  \cite{karnin2009reconstruction, sinha2016reconstruction}, and polynomial equivalence testing \cite{kayal2011efficient}. In general, multivariate polynomial factoring has several applications including decoding of Reed-Solomon, Reed-Muller codes \cite{guruswami1998improved,sudan1997decoding}, integer factoring \cite{lenstra1990number}, primary decomposition of polynomial ideals \cite{gianni1988grobner} and algebra isomorphism \cite{KS06, IKRS12}.

   It is natural to ask whether Kaltofen's VP factoring result can be extended to $\text{VP}_{nb}$ which allows degree of the polynomials to be exponentially high. It is known that {\em not every} factor of a high degree polynomial has a small sized circuit. For example, the polynomial $x^{2^s}-1$ can be computed in size $s$, but it has factors over $\mathbb{C}$ that require circuit size $\Omega\left(2^{s/2}/\sqrt{s}\right)$ \cite{lipton1978evaluation,schnorr1977improved}. It is conjectured \cite[Conj.8.3]{burgisser2013completeness} that \emph{low} degree factors of high degree small-sized circuits have \emph{small} circuits. 
   Partial results towards it are known. It was shown in \cite{kaltofen1987single} that if  polynomial $f$ given by a circuit of size $s$ factors as $g^eh$, where $g$ and $h$ are coprime, then $g$ can be computed by a circuit of size $\text{poly}(e,\text{deg}(g),s)$. The question left open is to remove the dependency on $e$. In the special case where $f=g^e$, it was established that $g$ has circuit size $\text{poly}(\text{deg}(g),\text{size}(f))$. On the other hand, several algorithmic problems are NP-hard, eg.~computing the degree of the squarefree part, gcd, or lcm; even in the case of supersparse univariate polynomials \cite{plaisted1977sparse}.
   
Now, we discuss the closure results for classes more restrictive than VP (such as VF, VBP etc.). 
Unfortunately, Kaltofen's technique \cite{kaltofen1989factorization} for VF will give a superpolynomial-sized factor formula; as it heavily \emph{reuses} intermediate computations while working with linear algebra and Euclid gcd. The same holds for the class VBP. In contrast, extending the idea of \cite{dvir2009hardness}, Oliveira \cite{oliveira2016factors} showed that an $n$-variate polynomial with \emph{bounded individual degree} and computed by a formula of size $s$, has factors of formula size poly$(n,s)$. Furthermore, it was established that for a given $n$-variate individual-degree-$r$ polynomial, computed by a circuit (resp.~formula) of size $s$ and depth $\Delta$, there exists a $\text{poly}(n^r,s)$-time randomized algorithm that outputs any factor of $f$ computed by a circuit (resp.~formula) of depth $\Delta+5$ and size $\text{poly}(n^r,s)$.  We are not aware of any work specifically on VBP factoring, except a special case in \cite{kaltofen2008expressing}---it dealt with the elimination of a {\em single} division gate from skew circuits (also see Section \ref{ABP} \& Lemma \ref{div-elm})---and another special case result in \cite{jansen2011extracting} that was weakened later owing to proof errors.

Going beyond VP we can ask about the closure of VNP. B\"urgisser conjectured \cite[Conj.2.1]{burgisser2013completeness} that VNP is closed under factoring. Kaltofen's technique \cite{kaltofen1989factorization} for factoring VP circuits does not yield the closure of VNP and we are not aware of any further work on this. 

\smallskip
Recently, \emph{approximative} algebraic complexity classes like $\overline{\text{VP}}$ \cite{grochow2016boundaries} have become objects of interest, especially in the context of the geometric complexity program \cite{Mul12b, mul12, G15}. Interestingly, \cite[Thm.4.9]{mulmuley2017geometric} shows that the following three fundamental concepts are tightly related mainly due to circuit factoring results: {\bf 1)} efficient blackbox polynomial identity testing (PIT) for $\overline{\text{VP}}$, {\bf 2)} strong lower bounds against $\overline{\text{VP}}$, and {\bf 3)} efficiently computing an `explicit system of parameters' for the invariant ring of an explicit variety with a given group action.

$\overline{\text{VP}}$ contains families of polynomials of degree poly($n$) that can be approximated (infinitesimally closely) by poly($n$)-sized circuits. 
B\"urgisser \cite{burgisser2004complexity, burgisser2001complexity} discusses approximative complexity of factors, proving that low degree factors of high degree circuits have small approximative complexity. In particular, $\overline{\text{VP}}$ is closed under factoring \cite[Thm.4.1]{burgisser2001complexity}. Like the standard versions, closure of $\overline{\text{VF}}$ resp.~$\overline{\text{VBP}}$ is an open question. Recently, it has been shown that $\overline{\text{VF}}=$ width-$2$-$\overline{\text{VBP}}$ \cite{bringmann2017algebraic} while classically it is false \cite{allender2011power}. The new methods that we present extend nicely to approximative classes because of their analytic nature (Theorem \ref{thm-vf-bar}).

\smallskip
We conclude by stating a few reasons why closure results under factoring are interesting and non-trivial.
First, there are classes that are {\em not} closed under factors.
For example, the class of sparse polynomials; as a factor's sparsity may blowup super-polynomially \cite{von1985factoring}.
Closure under factoring indicates the robustness of an algebraic complexity class, as, it proves that all nonzero multiples of a \emph{hard} polynomial remain hard. For this reason, closure results are also important for proving lower bounds on the power of some algebraic proof systems \cite{forbes2016proof}.

Finally, factoring is the key reason why PIT, for VP, can be reduced to very special cases, and gets tightly related to circuit lower bound questions (like VP$\ne$VNP?). See \cite[Thm.4.1]{kabanets2003derandomizing} for whitebox PIT connection and \cite{AFGS17} for blackbox PIT. One of the central reasons is: Suppose a polynomial $f(\overline{y})$ is such that for a nonzero size-$s$ circuit $C$, $C(f(\overline{y}))=0$. Then, using factoring results for low degree $C$, one deduces that $f$ also has circuit size $\poly(s)$. This gives us the connection: {\em If we picked a ``hard'' polynomial $f$ then $f(\overline{y})$ would be a hitting-set generator (hsg) for $C$} \cite[Thm.7.7]{kabanets2003derandomizing}. Our work is strongly motivated by the open question of proving such a result for size-$s$ circuits $C$ that have high degree (i.e.~$s^{\omega(1)}$). Our first factoring result (Theorem \ref{thm1}) implies such a `hardness to hitting-set' connection for arbitrarily high degree circuits $C$ assuming that: the squarefree part $C_\text{sqfree}$ of $C$ has low degree. In such a case we only have to find a hitting-set for $C_\text{sqfree}$ which, as our result proves, has low algebraic circuit complexity. 


\vspace{-2mm}
\subsection{Our results }\label{sec-results}
\vspace{-1mm}

 Before stating the results, we describe some of the assumptions and notations used throughout the paper. Set $[n]$ refers to $\{1,2,\ldots,n\}$. Logarithms are wrt base $2$.
 
{\bf Field.} 
We denote the underlying field as $\F$ and assume that it is of characteristic $0$ and algebraically closed. For eg.~complex $\C$, algebraic numbers $\overline{\Q}$ or algebraic $p$-adics $\overline{\Q}_p$. All the results partially hold for other fields (such as $\R, \mathbb{Q}, \Q_p$ or finite fields of characteristic $>$degree of the input polynomial). For a brief discussion on this issue, see Section \ref{sec-extn}. 

{\bf Ideal.} We denote the variables $(x_1,\hdots,x_n)$ as $\overline{x}$. The {\em ideal} $I:= \langle \overline{x} \rangle$ of the polynomial ring will be of special interest, and its power ideal $I^d$, whose generators are all degree $d$ monomials in $n$ variables. Often we will reduce the polynomial ring modulo $I^d$ (inspired from {\em Taylor series of an analytic function around $\overline{0}$} \cite{taylor1715}).

{\bf Radical.} For a polynomial $f=\prod_i f_i^{e_i}$, with $f_i$'s coprime irreducible nonconstant polynomials and multiplicity $e_i>0$, we define the squarefree part as the {\em radical} {\em rad$(f):=\prod_i f_i$}. 

What can we say about these $f_i$'s if $f$ has a circuit of size $s$? Our main result gives a good circuit size bound when $\text{rad}(f)$ has small degree. A slightly more general formulation is:

\vspace{-1mm}
\begin{theorem}
\label{thm1}
If $f=u_0u_1$ in the polynomial ring $\F[\overline{x}]$, with  $\text{size}(f)+\text{size}(u_0) \leq s$, then {\em every} factor of $u_1$ has a circuit of size poly$(s+\text{deg}(\text{rad}(u_1)) )$. 
\end{theorem}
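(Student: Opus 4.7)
The approach is to recover each factor of $u_1$ from its (few) power series roots in one distinguished variable, exploiting the fact that the number of distinct roots is only $r := \deg(\rad(u_1))$, not $\deg(u_1)$. The role of $u_0$ is purely indirect: once a generic linear change of coordinates ensures $u_0(\overline{0}) \neq 0$, we can invert $u_0$ as a formal power series and hence read off $u_1 = f/u_0$ from the circuits for $f$ and $u_0$. First I would apply a random invertible linear transformation $\tau$, replacing $f, u_0, u_1$ by their pullbacks $\tilde f, \tilde u_0, \tilde u_1$ at only $O(n)$ size blowup. A Schwartz--Zippel argument yields generically: (i) $\tilde u_0(\overline{0}) \neq 0$; (ii) $\tilde u_1$ is monic (up to a scalar) in $y := x_n$ viewed over $\F[\overline{x}']$, where $\overline{x}' := (x_1,\ldots,x_{n-1})$; and (iii) $\rad(\tilde u_1)(\overline{0}', y) \in \F[y]$ is squarefree of full $y$-degree, giving $r$ distinct seeds $\beta_1,\ldots,\beta_r \in \F$. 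This places $\tilde u_1$ in the setting where it splits completely via power series roots $\alpha_i(\overline{x}') \in \F[[\overline{x}']]$ with $\alpha_i(\overline{0}') = \beta_i$.

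\textbf{Power series roots via allRootsNI.} Fix the truncation precision $D := r+1$, and let $I := \langle \overline{x}' \rangle$. Standard power-series Newton iteration on $\tilde u_0$ ($O(\log D)$ doublings) gives a circuit of size $\poly(s,r)$ for $\tilde u_0^{-1} \bmod I^D$, and hence for $\tilde u_1 \bmod I^D$ after multiplication by $\tilde f$. Feeding this truncated $\tilde u_1$ together with the seeds $\beta_i$ (and multiplicities $e_i$) into the allRootsNI procedure yields circuits of size $\poly(s,r)$ computing each root $\alpha_i \bmod I^D$: because the $\beta_i$ are pairwise distinct, the $r\times r$ Jacobian matrix that drives the iteration is invertible at $\overline{0}'$---this is the content of Claim \ref{lem-logDer-Id}---so only $O(\log D)$ quadratic doublings are needed and each step costs $\poly(r,s)$.

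\textbf{Assembling any given factor.} Any factor of $u_1$ has the form $g = \prod_i f_i^{a_i}$ with $a_i \le e_i$, where the $f_i$ are the distinct irreducibles of $u_1$ and $\deg(f_i) \le r$. In the transformed coordinates each $\tilde f_i = \prod_{j \in S_i}(y - \alpha_j(\overline{x}'))$ for a specific subset $S_i \subseteq \{1,\ldots,r\}$ determined by the Galois orbits; we use this partition purely existentially. Because $\deg_{\overline{x}'}(\tilde f_i) \le r$, the truncations $\alpha_j \bmod I^{r+1}$ already determine $\tilde f_i$ exactly, once we retain only the degree-$\le r$ part in $\overline{x}'$ of the product---extractable from the circuit via Strassen-style homogeneous-component extraction in a fresh variable at $\poly(r)$ cost. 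Repeated squaring then builds $\tilde f_i^{a_i}$ in size $O(\log a_i)\cdot \poly(s,r)$, and since $a_i \le \deg(u_1) \le 2^{O(s)}$ we have $\log a_i = O(s)$, preserving the bound. Multiplying the $\tilde f_i^{a_i}$'s and inverting $\tau$ yields the claimed $\poly(s+r)$-size circuit for $g$. The principal obstacle is step two: showing that the matrix Newton iteration converges quadratically while each iteration stays cheap (essentially one $r\times r$ linear system solve plus truncated-power-series arithmetic). This is precisely where the log-derivative identity does its work and where the cost becomes independent of the potentially enormous $\deg(u_1)$, controlled instead by the much smaller $r = \deg(\rad(u_1))$.
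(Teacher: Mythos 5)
Your scaffolding matches the paper's (random linear transformation, complete split over a power series ring with only $r=\deg(\rad(u_1))$ distinct roots, truncation at precision $r$, reassembly of a factor from a subset of roots, repeated squaring for the multiplicities), but the central step fails: you claim a $\poly(s,r)$-size circuit for ``$\tilde u_1 \bmod I^D$'' obtained by power-series inversion of $\tilde u_0$ and multiplication by $\tilde f$. With $I=\langle \overline{x}'\rangle$ this inverse does not exist: modulo $I$, $\tilde u_0$ reduces to $\tilde u_0(\overline{0}',y)$, a nonconstant polynomial in $y$ (of degree up to $\deg(u_0)$, possibly $2^{\Theta(s)}$), which is not a unit, so $\tilde u_0$ is not invertible in $\F[y][[\overline{x}']]$. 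If instead you invert in $\F[[x_1,\ldots,x_n]]$ (using $\tilde u_0(\overline{0})\neq 0$) and truncate in all variables, the truncated object is an approximation only around the origin in $y$ as well; it cannot be evaluated at $y$-values with nonzero constant term $\beta_i$, which is exactly what your Newton/allRootsNI step must do, and it no longer carries the monic-in-$y$ structure whose roots you want. Repairing this amounts to eliminating a division by a non-unit (the quotient $\tilde f/\tilde u_0$ is a polynomial, but Strassen's trick does not apply since $\deg_y$ is exponential) --- precisely the open problem the paper isolates in the discussion of Theorem \ref{thm2}. Note also that a small circuit for $u_1$ itself is a \emph{consequence} of Theorem \ref{thm1} ($u_1$ is a factor of $u_1$), so assuming a small truncated circuit for it at this stage is dangerously close to circular.

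The paper never constructs any circuit for $u_1$. It uses the logarithmic-derivative identity $\partial_y\tilde f/\tilde f-\partial_y\tilde u_0/\tilde u_0=\sum_i \gamma_i/(y-g_i)$ and, crucially, fixes $y$ to $d_0$ random constants $c_j\in\F$ \emph{before} inverting anything: then $\tilde f\rvert_{y=c_j}$ and $\tilde u_0\rvert_{y=c_j}$ have nonzero constant terms, hence are units in $\F[[\overline{x}]]$, so the left side is computable from the size-$s$ circuits of $f$ and $u_0$ alone. The homogeneous parts $g_i^{=\delta}$ of \emph{all} roots are then recovered simultaneously, degree by degree, by solving the fixed $d_0\times d_0$ system $Mv_\delta=W_\delta$ with $M(i,j)=\gamma_j/(c_i-\mu_j)^2$; its invertibility is Lemma \ref{lem-inv-det} (a Vandermonde-type argument), not Claim \ref{lem-logDer-Id}, which is the recurrence identity, and the self-correction property (Claim \ref{clm-selfCorr}) avoids per-step truncation blowup. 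This iteration converges \emph{linearly}, one degree per step for $d_0$ steps, which suffices; your claimed ``$O(\log D)$ quadratic doublings'' pertains to NI-with-multiplicity on a single root, which again presupposes a circuit for the polynomial being iterated on ($\tilde u_1$) --- the very object your construction fails to supply. Once the roots are obtained via the log-derivative route, the remainder of your outline (assembly over the orbit subsets, truncation to degree $r$, repeated squaring using $e_i\le\deg(f)\le 2^{O(s)}$, and applying $\tau^{-1}$) goes through essentially as in the paper.
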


Note that Kaltofen's proof technique in the VP factoring paper \cite{kaltofen1989factorization} does not extend to the {\em exponential} degree regime (even when degree of rad$(f)$ is small) because it requires solving equations with deg$_{x_i}(f)$ many unknowns for some $x_i$, where deg$_{x_i}(f)$ denotes {\em individual degree} of $x_i$ in $f$, which can be very high. Also, basic operations like `determining the coefficient of a univariate monomial' become \#P-hard in the exponential-degree regime \cite{valiant1982reducibility}. The proof technique in Kaltofen's single factor Hensel lifting paper \cite[Thm.2]{kaltofen1987single} works only in the perfect-power case of $f=g^e$. It can be seen that rad$(f)$ ``almost'' equals $f/\gcd(f,\partial_{x_i}(f))$, but the gcd itself can be of exponential-degree and so one cannot hope to use \cite[Thm.4]{kaltofen1987single} to compute the gcd either. Univariate high-degree gcd computation is NP-hard \cite{plaisted1977new, plaisted1977sparse}.

Interestingly, our result when combined with \cite[Thm.3]{kaltofen1987single} implies that every factor $g$ of $f$ has a circuit of size polynomial in: $\siz(f)$, $\deg(g)$ and $\min\{\deg(\text{rad}(f)), \siz(\text{rad}(f)) \}$. We leave it as an open question whether the latter expression is polynomially related to $\siz(f)$. 

Theorem \ref{thm1} shows an interesting way to create {\em hard} polynomials. In the theorem statement let the size concluded be $(s+\text{deg}(\text{rad}(u_1)) )^e$, for some constant $e$. If one has a polynomial $f_1(x_1,\ldots,x_n)$ that is $2^{cn}$-hard, then any nonzero $f:= \prod_{i} f_i^{e_i}$ is also $2^{\Omega(n)}$-hard for {\em arbitrary} positive $e_i$'s, as long as $\sum_i\deg(f_i)\le 2^{\frac{cn}{e}-1}$.

\smallskip
In general, for a high degree circuit $f$, rad$(f)$ can be of high degree (exponential in size of the circuit). Ideally, we would like to show that every  degree $d$ factor of $f$ has poly$(\siz(f),d)$-size circuit. The next theorem reduces the above question to a special kind of modular division, where the denominator polynomial may {\em not} be invertible but the quotient is well-defined (eg.~$x^2/x \mod x$). All that remains is to somehow eliminate this kind of {\em non-unit division} operator (which we leave as an open question).

\vspace{-0.7mm}
\begin{theorem}
\label{thm2}
If $f \in \F[\overline{x}]$ can be computed by a circuit of size $s$, then any degree $d$ factor of $f$ is of the form $A/B \bmod \langle \overline{x} \rangle^{d+1}$ where polynomials $A, B$ have circuits of size $\poly(sd)$.
\end{theorem}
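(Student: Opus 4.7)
\smallskip
The plan is to prove Theorem~\ref{thm2} via a multivariate Hensel lifting (Newton iteration for factorization) applied to the factorization $f = g\cdot h$, where $h := f/g$, carried out along a distinguished variable after a generic change of coordinates.

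\textbf{Setup.} Apply a random invertible linear transformation $\tau$ and a random shift $\overline{a}$ to the variables; set $\tilde f(\overline{x}) := f(\tau(\overline{x}+\overline{a}))$, and similarly $\tilde g, \tilde h$. Standard genericity arguments show that, with high probability, (i)~$\tilde g$ has degree exactly $d$ in $x_n$; (ii)~the univariate restrictions $\tilde g_0(x_n):=\tilde g(\overline{0},x_n)$ and $\tilde h_0(x_n):=\tilde h(\overline{0},x_n)$ are nonzero and coprime in $\F[x_n]$; and (iii)~$\tilde f$ completely factors via power series roots in $x_n$ over $\F[[x_1,\dots,x_{n-1}]]$, so the initial factorization $\tilde f(\overline{0},x_n) = \tilde g_0\cdot\tilde h_0$ admits a unique Hensel lift.

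\textbf{Newton--Hensel iteration.} Starting from this base case, iterate the standard Hensel update
\[
\tilde g_{k+1} \equiv \tilde g_k + \tau_0\,(\tilde f - \tilde g_k \tilde h_k) \pmod{\tilde g_k},\qquad \tilde h_{k+1} \equiv \tilde h_k + \sigma_0\,(\tilde f - \tilde g_k \tilde h_k) \pmod{\tilde h_k},
\]
where $\sigma_0\tilde g_0 + \tau_0\tilde h_0 = 1$ in $\F[x_n]$ with $\deg_{x_n}\tau_0 < d$; each step doubles the $\langle\overline{x}'\rangle$-adic precision, where $\overline{x}' := (x_1,\dots,x_{n-1})$. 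After $O(\log d)$ iterations, $\tilde g_k \equiv \tilde g \pmod{\langle\overline{x}'\rangle^{d+1}}$, and since $\deg \tilde g = d$ this also gives $\tilde g \bmod \langle \overline{x}\rangle^{d+1}$. Each iteration only manipulates the residues modulo $\tilde g_0$ and $\tilde h_0$ (of degree $<d$ in $x_n$), so each step contributes only $\poly(s,d)$ arithmetic on circuits of size $\poly(s,d)$, yielding a total circuit size of $\poly(s,d)$.

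\textbf{Ratio form.} The one non-standard step is obtaining $\tau_0\bmod\tilde g_0$ from the inputs: computing $\tilde h_0^{-1}\bmod\tilde g_0$ requires knowing $\tilde h_0\bmod\tilde g_0$, which arises as the polynomial division $(\tilde f \bmod \tilde g_0^2)/\tilde g_0$ in $\F[x_n]$---well-defined since $\tilde g_0$ exactly divides $\tilde f\bmod\tilde g_0^2$, but \emph{not} a unit division modulo $\langle\overline{x}\rangle$ (exactly the $x^2/x \bmod x$ phenomenon). Unrolling the entire Hensel iteration into one closed-form expression thus yields a quotient $A/B$, where $A$ and $B$ are small-circuit polynomials of size $\poly(sd)$ and $B$ packages this single non-unit division step.

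\textbf{Main obstacle.} The main obstacle is precisely this non-unit division: if one could simulate $x^2/x \bmod x$-style divisions within a standard arithmetic circuit, one could collapse $A/B$ to a single $\poly(sd)$-size circuit for $g \bmod \langle\overline{x}\rangle^{d+1}$ and thereby establish unconditional closure of $\text{VP}_{nb}$ under factoring. The paper leaves this simulation as an open question, and the $A/B$-form statement of Theorem~\ref{thm2} is exactly what the Hensel iteration delivers up to that one irremovable division.
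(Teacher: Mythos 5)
There is a genuine gap: your genericity assumption (ii) --- that after a random transformation the univariate restrictions $\tilde g_0$ and $\tilde h_0$ of $g$ and $h=f/g$ are coprime --- fails in exactly the case that makes Theorem~\ref{thm2} nontrivial. A random linear map and shift preserve coprimality (this is Lemma~\ref{lem-coprime}) but cannot create it: if the degree-$d$ factor $g$ involves an irreducible factor of $f$ of multiplicity at least $2$ (and in the size-$s$, degree-up-to-$2^s$ regime multiplicities can be exponentially large), then $\gcd(g,f/g)\neq 1$, so $\tilde g_0$ and $\tilde h_0$ share a root for every choice of the transformation, no Bezout identity $\sigma_0\tilde g_0+\tau_0\tilde h_0=1$ exists, and the Hensel iteration cannot even be initialized. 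In the complementary case where $g$ and $f/g$ are coprime, your argument is essentially Kaltofen's single-factor Hensel lifting, which is already known to give a genuine $\poly(sd)$-size circuit; moreover the division you single out, $(\tilde f\bmod\tilde g_0^2)/\tilde g_0$, is an exact univariate division by a fixed degree-$d$ polynomial whose coefficients are field constants (known non-uniformly), i.e.\ a linear map on coefficient vectors, so it is removable and does not produce the $x^2/x\bmod x$ phenomenon. (A further bookkeeping slip: residues modulo $\tilde h_0$ have degree up to $\deg\tilde h_0-1$, which may be exponential, so the ``degree $<d$'' accounting for the $\tilde h$-side of the lift does not stand as written.) In short, the proposal only covers the coprime case and never reaches the actual source of the $A/B$ form.

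The paper's proof is built precisely to handle arbitrary (possibly exponential) multiplicities: it introduces a fresh variable $y$ via $\tau:\overline{x}\mapsto\overline{x}+\overline{\alpha}y+\overline{\beta}$ so that $f(\tau\overline{x})$ splits completely as $k\prod_i(y-g_i)^{\gamma_i}$ over $\F[[\overline{x}]]$ (Theorem~\ref{thm-complete-split}), guesses the multiplicities $\gamma_i$ non-uniformly, and runs Newton iteration \emph{with multiplicity}, $\hat g_{i,t+1}=\hat g_{i,t}-\gamma_i\cdot(\tilde f/\partial_y\tilde f)\rvert_{y=\hat g_{i,t}}$, for $O(\log d)$ quadratically convergent steps, tracking numerator and denominator separately. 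The single irremovable non-unit division is this ratio: for $\gamma_i\ge 2$ both $\tilde f$ and $\partial_y\tilde f$ vanish modulo $\langle\overline{x}\rangle$ at the approximate root, and pushing the division to the top yields the stated $A/B\bmod\langle\overline{x}\rangle^{d+1}$ with $A,B$ of size $\poly(sd)$. If you wish to salvage a Hensel-style route, you must lift a factorization into coprime parts such as $f=g_1^{e}\cdot h'$; but then you recover $g_1^{e}$ with $e$ possibly exponential, and extracting the low-degree factor from a huge perfect power is again the multiplicity obstacle that the paper's generalized Newton iteration is designed to overcome.
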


Note that in Theorem \ref{thm2}, $B$ may be non-invertible in $\F[\overline{x}]/\langle \overline{x} \rangle^{d+1}$ and may have a high degree (eg.~$2^s$). So, we cannot use the famous trick of Strassen to do division elimination here \cite{strassen1973vermeidung}. 

\smallskip
We prove uniform closure results, under factoring, for the algebraic complexity classes defined below. Let $s : \mathbb{N} \longrightarrow \mathbb{N}$ be a function. Define the class VF$(s)$ to contain families $\{f_n\}_n$ such that $n$-variate $f_n$  can be computed by an algebraic formula of size $\poly(s(n))$ and has degree $\poly(n)$. Similarly, $\text{VBP}(s)$ contains families $\{f_n\}_n$ such that $f_n$ can be computed by an ABP of size $\poly(s(n))$ and has degree $\poly(n)$. Finally, VNP$(s)$ denotes the class of families $\{f_n\}_n$ such that $f_n $ has witness size $\poly(s(n))$, verifier circuit size $\poly(s(n))$, and has degree $\poly(n)$.

\vspace{-0.7mm}
\begin{theorem}
\label{thm3}
The classes $\text{VF} (n^{\log n}),\text{VBP} (n^{\log n}),\text{VNP} (n^{\log n})$ are all closed under factoring. 

Moreover, there exists a randomized $\poly(n^{\log n})$-time algorithm that: for a given $n^{O(\log n)}$ sized formula (resp.\ ABP) $f$ of $\poly(n)$-degree, outputs $n^{O(\log n)}$ sized formula (resp.\ ABP) of a nontrivial factor of $f$ (if one exists). 
\end{theorem}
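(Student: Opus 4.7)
The plan is to reduce Theorem \ref{thm3} to a uniform model-preserving size bound: \emph{every degree-$d$ factor of a polynomial of $\text{VF}$ (resp.\ $\text{VBP}$, $\text{VNP}$) complexity $s$ has $\text{VF}$ (resp.\ $\text{VBP}$, $\text{VNP}$) complexity $\poly(s, d^{\log d})$}, as announced in the abstract. Given this bound, the closure statement is immediate: for $\text{VF}(n^{\log n})$ the class definition gives $s = n^{O(\log n)}$ and $d = \poly(n)$, whence $d^{\log d} = \poly(n)^{O(\log n)} = n^{O(\log n)}$ and so $\poly(s,d^{\log d}) = n^{O(\log n)}$. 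The same arithmetic handles $\text{VBP}(n^{\log n})$ and $\text{VNP}(n^{\log n})$.

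To establish the $\poly(s, d^{\log d})$ bound I would mimic the allRootsNI / power-series-root strategy that drives Theorem \ref{thm1}, but now track model preservation at every step. The construction begins by applying a random invertible linear change of coordinates $\tau$ so that, by the new factorisation property announced in the abstract, $f(\tau\overline{x})$ splits completely into power series roots of a suitable auxiliary equation. Any irreducible factor of $f$ equals, up to units, a product of a subset of these power series roots truncated modulo $\langle\overline{x}\rangle^{d+1}$ and then re-homogenised back to a degree-$d$ polynomial. Each power series root is built by $O(\log d)$ doubling Newton (or allRootsNI) iterations, which reduce at each round to a truncated polynomial multiplication together with a truncated modular inversion.

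The main obstacle is showing that a single Newton step can be executed inside each of $\text{VF}$, $\text{VBP}$, and $\text{VNP}$ with a multiplicative blowup of only $\poly(s,d)$. Kaltofen's classical factoring uses Euclidean GCD, which does not preserve formula or ABP structure and so is unusable here. The workaround is that the only divisions we need are by polynomials which are units modulo $\langle\overline{x}\rangle^{d+1}$, and these can themselves be inverted by an inner Newton iteration built entirely from truncated multiplications, so the overall construction never leaves the model. Truncated multiplication preserves $\text{VF}/\text{VBP}$ complexity up to a $\poly(s,d)$ factor via standard homogeneous-component extraction; for $\text{VNP}$ one additionally threads the witness-variable summation through the iteration. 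Composing $\log d$ rounds of $\poly(s,d)$ blowup yields the bound $s\cdot d^{O(\log d)}$. Randomised uniformity is essentially free: $\tau$ is sampled randomly, the allRootsNI recursion and Newton inversions are deterministic on representations of size $\poly(n^{\log n})$, and correctness of the extracted candidate factor is certified by a standard randomised PIT, giving total runtime $\poly(n^{\log n})$.
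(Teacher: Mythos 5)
Your existential outline follows the paper's route (random $\tau$, complete power-series split, quadratically convergent Newton iteration in $O(\log d)$ rounds with $\poly(s,d)$ blowup per round, then multiplying a truncated subset of roots), but it has a genuine gap: you never deal with \emph{repeated} roots. Your assertion that ``the only divisions we need are by polynomials which are units modulo $\langle\overline{x}\rangle^{d+1}$'' is only true for a \emph{simple} root; if $f$ is not squarefree, the root $g_i$ has multiplicity $\gamma_i>1$, classical NI does not converge to it, and the generalized NI with multiplicity has denominator $\partial_y \tilde{f}\rvert_{y=y_t}\equiv 0 \bmod \langle\overline{x}\rangle$, i.e.\ a non-unit division that cannot be eliminated inside VF/VBP. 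The paper's fix is a preprocessing step you are missing: differentiate $\tilde{f}$ with respect to $y$ exactly $(\gamma_i-1)$ times (cheap in all three models by interpolation, Lemma \ref{lem-derivative}), so that $(y-g_i)\,||\,\tilde{f}^{(\gamma_i-1)}$ and classical NI with unit divisions applies. Also, your parenthetical ``(or allRootsNI)'' is wrong in this context: allRootsNI converges linearly (one degree of precision per step, so $d$ steps, each forcing a multiplicative blowup in a formula/ABP), which is exactly why the paper switches to classical NI for Theorem \ref{thm3}.

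The second gap is the algorithmic half of the theorem, which you essentially wave away. Saying that correctness ``is certified by a standard randomised PIT'' does not explain how a candidate factor is \emph{produced}: the existential argument only says some subset of the $d_0$ truncated roots multiplies to a factor, and searching subsets is exponential. The paper's algorithm needs Kaltofen-style linear algebra: map $\overline{x}\mapsto T\overline{x}$, compute one root $\tilde{g}_{i_0}$ to precision $2^k>2d^2$, solve the linear system $u=(y-\tilde{g}_{i_0})v \bmod T^{2^k}$ with degree bounds (via Lemma \ref{lem-linsyst}, so that the solution itself is a small formula/ABP), take the minimal solution in $\deg_y$, and then recover the irreducible factor as $\gcd_y(u,\tilde{f}(T\overline{x},y))$ --- where even the gcd needs a non-Euclidean method (Claim \ref{clm-gcd}, or the division-by-leading-coefficient alternative from Claim \ref{clm-irred-fac}), since Euclid's algorithm blows up formula/ABP size. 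None of this machinery, nor the resultant-based correctness argument (Claims \ref{clm-exist}--\ref{clm-irred-fac}), appears in your proposal, so the ``randomized $\poly(n^{\log n})$-time'' part of the theorem is not established by your argument.
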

\vspace{-1mm}
\noindent {\bf Remark.} The ``time-complexity'' in the algorithmic part makes sense only in certain cases. For example, when $\F\in\{\Q, \Q_p, \F_q\}$, or when one allows computation in the BSS-model \cite{blum1989theory}. In the former case our algorithm takes $\poly(n^{\log n})$ bit operations (assuming that the characteristic is zero or larger than the degree; see Theorem \ref{thm-betterThm3} in Section \ref{sec-not-clos}).

\smallskip
It is important to note that Theorem \ref{thm3} does not follow by invoking Kaltofen circuit factoring \cite{kaltofen1989factorization} and VSBR transformation \cite{valiant1983fast} from circuit to log-depth formula. Formally, if we are given a formula (resp.\ ABP) of size $n^{O(\log n)}$ and degree $\poly(n)$, then it has factors which can be computed by a circuit of size $n^{O(\log n)}$ and depth $O(\log n)$. If one converts the factor circuit to a formula (resp.\ ABP), one would get the size upper bound of the factor formula to be a much larger $( n^{O(\log n)})^{\log n}= n^{O(\log^2 n)}$. Moreover, Kaltofen's methods crucially rely on the circuit representation to do linear algebra, division with remainder, and Euclid gcd in an efficient way; a nice overview of the implementation level details to keep in mind is \cite[Sec.3]{kopparty2015equivalence}.

Our proof methods extend to the approximative versions $\mathcal{C}(n^{\log n})$ for $\mathcal{C}\in$ $\{\overline{\text{VF}}, \overline{\text{VBP}}, \overline{\text{VNP}} \}$ as well (Theorem \ref{thm-vf-bar}). 

As before, Theorem \ref{thm3} has an interesting lower bound consequence: If $f$ has VF (resp.~VBP resp.~VNP) complexity $n^{\omega(\log n)}$ then any nonzero $fg$ has similar hardness (for $\deg(g)\le\poly(n)$). 

In fact, the method of Theorem \ref{thm3} yields a formula factor of size $s^e d^{2\log d}$ for a given degree-$d$ size-$s$ formula ($e$ is a constant). This means--- If determinant $\text{det}_n$ requires $n^{a\log n}$ size formula, for $a>2$, then {\em any} nonzero degree-$O(n)$ multiple of $\text{det}_n$ requires $n^{\Omega(\log n)}$ size formula. 

Similarly, if we conjecture that a VP-complete polynomial $f_n$ (say the homomorphism polynomial in \cite[Thm.19]{DurandMMRS14}) has $n^{a\log n}$ ABP complexity, for $a>4$, then {\em any} nonzero degree-$O(n)$ multiple of $f_n$ has $n^{\Omega(\log n)}$ ABP complexity.

\vspace{-2mm}
\subsection{Proof techniques} \label{sec-tech}
\vspace{-1mm}

We begin by describing the new techniques that we have developed. Since they also give a new viewpoint on classic properties, they may be of independent interest. The techniques are {\em analytic} at heart (\cite{book-KP12} has a good historical perspective). The way they appear in algebra is through the {\em formal power series} ring $\F[[x_1,\ldots,x_n]]$. The elements of this ring are multivariate formal power series, with degree as precision. So, an element $f$ is written as $f= \sum_{i=0}^{\infty} f^{=i}$, where $f^{=i}$ is the {\em homogeneous part} of degree $i$ of $f$. In algebra texts it is also called the {\em completion} of $\F[x_1,\ldots,x_n]$ wrt the ideal $\langle x_1,\ldots,x_n\rangle$ (see \cite[Chap.13]{kemper2010course}). The {\em truncation} $f^{\le d}$, i.e.~homogeneous parts up to degree $d$, can be obtained by reducing modulo the ideal $\langle \overline{x}\rangle^{d+1}$. Here $d$ is seen as the {\em precision} parameter of the respective approximation of $f$.

The advantages of the ring $\F[[\overline{x}]]$ are many. They usually emerge because of the {\em inverse} identity $(1-x_1)^{-1} = \sum_{i\ge0}x_1^i$ , which would not have made sense in $\F[\overline{x}]$ but is available now. 
First, we introduce a factorization pattern of a polynomial $f$, over the power series ring, under a random linear transformation. Next, we discuss how this factorization helps us to bound the size of factors of the original polynomial.

\medskip\noindent
\textbf{Power series complete split:}
We are interested in the {\em complete} factorization pattern of a polynomial $f(x_1,\hdots,x_n)$.
We can view $f$ as a univariate polynomial in one variable, say $x_n$, with coefficients coming from $\F[x_1,\hdots,x_{n-1}]$. It is easy to connect linear factors with the roots: $x_n-g$ is a factor of $f$ iff $f(x_1,\hdots,x_{n-1},g(x_1,\hdots,x_{n-1}))=0$. 

Of course, one should not expect that a polynomial always has a factor which is linear in one variable. But, if one works with an algebraically closed field, then a univariate polynomial completely splits into linear factors (also see the {\em fundamental theorem of algebra} \cite[\S 2.5.4]{courant1996mathematics}). So, if we go to the algebraic closure of $\F(x_1,\hdots,x_{n-1})$, any multivariate polynomial which is monic in $x_n$ will split into factors all linear in $x_n$. A representation of the elements of $\overline{\F(x_1,\hdots,x_{n-1}) }$ as a finite circuit is impossible (eg.~$\sqrt{x_1}$). On the other hand, we show in this work that {\em all} the roots (wrt a new variable $y$) are actually elements from $\F[[x_1,\hdots,x_{n}]]$, after a \emph{random} linear transformation on the variables, $\tau: \overline{x}\mapsto \overline{x}+ \overline{\alpha}y + \overline{\beta}$, is applied (Theorem \ref{thm-complete-split}). Note-- By a random choice $\alpha\in_r\F$ we will mean that choose randomly from a fixed finite set $S\subseteq \F$ of appropriate size (namely $>\deg(f)$). This will be in the spirit of \cite{Sch80}.

Our proof of the existence of power series roots is {\em constructive}, as it also gives an algorithm to find approximation of the roots up to any precision, using  formal power series version of the Newton iteration method (see \cite[Thm.2.31]{burgisser2013algebraic}). 
We try to explain the above idea using the following example. Consider $f=(y^2-x^3)\in \F[x,y]$. Does it have a factor of the form $y-g$ where $g \in \F[[x]]$ ? The answer is clearly `no' as $x^{3/2}$ does not have any power series representation in $\F[[x]]$. But, what if we shift $x$ randomly? For example, if we use the shift $y \mapsto y, x \mapsto x+1$. Then, by Taylor series around $1$, we see that $(x+1)^{3/2}$ has a power series expansion, namely $1+\frac{3}{2}x+\frac{3/2 \times 1/2}{2!} x^2+ \hdots$.

Formally, Theorem \ref{thm-complete-split} shows that under a random $\tau:\overline{x} \mapsto \overline{x}+ \overline{\alpha}y + \overline{\beta}$ where $\overline{\alpha}, \overline{\beta} \in_r \F^n$, polynomial $f$ can be factored as  $f(\tau\overline{x})= \prod_{i=1}^{d_0}(y-g_{i})^{\gamma_{i}}$, where $g_{i} \in  \F[[\overline{x}]]$ with the constant terms $g_{i}(\overline{0})$ being distinct, $d_0:= \deg(\text{rad}(f))$ and $\gamma_{i}>0$. 


\medskip\noindent
\textbf{Reducing factoring to computing power series root approximations:}
Using the split Theorem \ref{thm-complete-split},  we show that multivariate polynomial factoring reduces to power series root finding up to certain precision. Following the above notation $f$ splits as $f(\tau\overline{x})= \prod_{i=1}^{d_0}(y-g_{i})^{\gamma_{i}}$. For all $t\geq 0$, it is easy to see that
$f(\tau\overline{x}) \,\equiv\, \prod_{i=1}^{d_0}(y-g_{i}^{\le t})^{\gamma_{i}} \mod I^{t+1}$, where $I:= \langle x_1,\hdots,x_{n} \rangle$. Note that there is a one-one correspondence, induced by $\tau$, between the polynomial factors of $f$ and $f(\tau\overline{x})$ ($\because \tau$ is invertible and $f$ is $y$-free). We remark that the leading-coefficient of $f(\tau\overline{x})$ wrt $y$ is a nonzero element in $\F$; so, we call it {\em monic} (Lemma \ref{lem-monic}). Next, we show case by case how to find a {\em polynomial} factor of $f(\tau\overline{x})$ from the approximate power series roots.

\smallskip \noindent
{\em Case 1- Computing a linear factor of the form $y-g(\overline{x})$:}
If the degree of the input polynomial is $d$, all the non-trivial factors have degree $\leq (d-1)$. So, if we compute the approximations of all the power series roots (wrt $y$) up to precision of degree $t=d-1$, then we can recover all the factors of the form  $y-g(x_1,\ldots,x_{n})$. Technically, this is supported by the uniqueness of the power series factorization (Proposition \ref{prop-ufd}).

\smallskip\noindent
{\em Case 2- Computing a monic non-linear factor:} Assume that a factor $g$ of total degree $t$ is of the form $y^k+c_{k-1}y^{k-1}+...+c_1y+c_0$, where for all $i$, $c_i\in \F[\overline{x}]$. Now this factor $g$ also splits into linear (in $y$) factors above $\F[[\overline{x}]]$ and obviously these linear factors are also linear factors of the original polynomial $f(\tau\overline{x})$. So we have to take the right combination of some $k$ power series roots, with their approximations (up to the degree $t$ wrt $\overline{x}$), and take the product mod  $I^{t+1}$. Note that if we only want to give an existential proof of the size bound of the factors, we need not find the combination of the power series roots forming a factor algorithmically. 
 Doing it through brute-force search takes exponential time (${d\choose k}$ choices). Interestingly, using a classical (linear algebra) idea due to Kaltofen, it can be done in randomized polynomial time. We will spell out the ideas later, while discussing the algorithm part of Theorem \ref{thm3}.
 
\medskip
Once we are convinced that looking at approximate (power series) roots is enough, we need to investigate methods to compute them. We will now sketch two methods. The first one approximates all the roots {\em simultaneously} up to precision $\delta$. The next ones approximate the roots {\em one at a time}. In the latter, multiplicity of the root plays an important role.
 
\smallskip \noindent
\textbf{Recursive root finding (allRootsNI):} We \emph{simultaneously} find the approximations of all the power series roots $g_{i}$ of $f(\tau\overline{x})$. At each recursive step we get a better precision wrt degree. We show that knowing approximations $g_{i}^{<\delta}$, of $g_{i}$ up to degree $\delta-1$, is enough to (simultaneously for all $i$) calculate approximations of $g_{i}$ up to degree $\delta$. This new technique, of finding approximations of the power series roots, is at the core of Theorem \ref{thm1}. 

First, let us introduce a nice identity. From now on we assume $f(\overline{x},y)= \prod_i(y-g_{i})^{\gamma_{i}}$ (i.e.~relabel $f(\tau\overline{x})$). By applying the derivative operator $\partial_{y}$, we get a classic identity (which we call {\em logarithmic derivative identity}):
$(\partial_{y} f)/ f=$ $\sum_{i} \gamma_{i}/(y-g_{i})$ .
Reduce the above identity modulo $I^{\delta+1}$ and define $\mu_{i}:= g_{i}(\overline{0}) \equiv g_i \mod I$. This gives us (see Claim \ref{lem-logDer-Id}):
$$\frac{\partial_{y} f}{ f} \,=\, \sum_{i=1}^{d_0} \frac{\gamma_{i}}{y-g_{i}} \,\,\equiv\,\, \sum_{i=1}^{d_0}\frac{\gamma_{i}}{y-g_{i}^{<\delta}} \,+\, \sum_{i=1}^{d_0}\frac{\gamma_{i}\cdot g_{i}^{=\delta}}{(y-\mu_{i})^2} \,\;\bmod {I^{\delta+1}} .$$  

In terms of the $d_0$ unknowns $g_{i}^{=\delta}$, the above is a linear equation. (Note- We treat $\gamma_i, \mu_i$'s as known.) 
As $y$ is a free variable above, we can fix it to $d_0$ ``random'' elements $c_i$ in $\F$, $i\in[d_0]$. One would expect these fixings to give a linear system with a unique solution for the unknowns. We can express the system of linear equations succinctly in the following matrix representation:  
$M\cdot v_\delta \,=\, W_\delta \,\bmod I^{\delta+1}$. 
Here $M$ is a $d_0 \times d_0$ matrix; each entry is denoted by $M(i,j):=\frac{\gamma_{i}}{(c_i-\mu_{j})^2}$. Vector $v_\delta$ resp.~$W_\delta$ is a $d_0 \times 1$ matrix where each entry is denoted by $v_{\delta}(i):=g_{i}^{=\delta}$ resp.~$W_\delta(i):= \frac{\partial_{y}f}{f}\big\rvert_{y=c_i} - G_{i,\delta}$, where $ G_{i,\delta}:= \sum_{k=1}^{d_0} \gamma_{k}/(c_i-g_{k}^{<\delta})$ . We ensure that $\{c_i, \mu_{i} \mid i\in[d_0]\}$ are distinct, and show that the determinant of $M$ is non-zero (Lemma \ref{lem-inv-det}). So, by knowing approximations up to $\delta-1$, we can recover $\delta$-th part by solving the above system as $v_\delta \,=\, M^{-1}W_\delta \,\bmod I^{\delta+1}$. An important point is that the random $c_i$'s will ensure: all the reciprocals involved in the calculation above do exist mod $I^{\delta+1}$.

{\em Self-correction property:} Does the above recursive step need an exact $g_{i}^{<\delta}$? We show the self correcting behavior of this process of root finding, i.e.~in this iterative process there is no need to filter out the ``garbage'' terms of degree $\ge\delta$ in each step. If one has recovered $g_{i}$ correct up to degree $\delta-1$, i.e.~say we have calculated $g_{i,\delta-1}' \in \F(\overline{x})$ such that $g_{i,\delta-1}'\equiv g_{i}^{< \delta} \bmod  I^{\delta}$, and say we solve $M \tilde{v}_{\delta}= \widetilde{W}_{\delta}$ exactly, where  $\widetilde{W}_\delta(i):= \frac{\partial_{y} f}{f}\big\rvert_{y=c_i} -\widetilde{G}_{i,\delta}$, and $ \widetilde{G}_{i,\delta}:=\sum_{k=1}^{d_0} \gamma_{k}/(c_i-g_{k,\delta-1}')$. Still, we can show that $g_{i,\delta}':=$ $g_{i,\delta-1}'+ \tilde{v}_{\delta}(i)$ $\equiv g_{i}^{\leq \delta} \bmod I^{\delta+1}$ (Claim \ref{clm-selfCorr}). So, we made progress in terms of the precision (wrt degree).

\medskip\noindent
\textbf{Rapid Newton Iteration with multiplicity:} We show that from allRootsNI, we can derive a formula that finds $g_{1}^{<2^{t+1}}$ using {\em only} $g_{1}^{<2^{t}}$, i.e.~the process has quadratic convergence and it does not involve roots other than $g_1$. Rewrite $\partial_{y} f/f= \sum_{i=1}^{d_0} \gamma_{i}/(y-g_{i})= (1+ L_1)\gamma_{1}/(y-g_{1})$, where $L_1:= \sum_{1<i\le d_0} \frac{\gamma_{i}}{y-g_{i}} \cdot \frac{y-g_{1}}{\gamma_{1}}$. This implies $f/\partial_{y}f=$ $(1+L_{1})^{-1} \cdot (y-g_{1})/\gamma_{1}$. 
Now, if we put $y= y_t:= g_{1}^{< 2^t}$, then $y_t-g_i= g_{1}^{< 2^t}-g_{i}$ is a unit in $\F[[\overline{x}]]$ for $i\ne1$ ($\because$ it is a nonzero constant mod $I$). Also, $y_t-g_1=$ $g_{1}^{<2^t}-g_{1} \equiv 0 \mod I^{2^{t}}$, implying $L_1\rvert_{y=y_t}\equiv 0 \mod I^{2^{t}}$. Thus, $\left(L_{1}\cdot(y-g_1)\right)\big\rvert_{y= y_t} \equiv 0 \mod I^{2^{t+1}}$.  

Hence, $f/\partial_{y}f\big \rvert_{y=y_t} =$ $(y_t- g_1)/\gamma_{1} \mod I^{2^{t+1}}$.

This shows that, if $f(\overline{x},y)=(y-g)^eh$, where $h\vert_{y=g}\ne0 \mod I$ and $e>0$, then the power series for $g$ can be approximated by the recurrence:
$$y_{t+1} \,:=\, y_t \,-\, e\cdot \frac{f}{\partial_y f}\bigg\rvert_{y=y_t} $$ where $y_t \equiv g \mod I^{2^t}$. This we call a {\em generalized Newton Iteration} formula, as it works with any multiplicity $e>0$. In fact, when $e=1$, $g$ is called a {\em simple root} of $f$; the above is an alternate proof of the classical Newton Iteration (NI) \cite{newton1669} that finds a simple root in a recursive way (see Lemma \ref{lem-NI}). It is well known that NI fails to approximate the roots that repeat (see \cite{lecerf2002quadratic}). In that case  either NI is used on the function $f/\partial_{y}f$ or, though less frequently, the generalized NI is used in numerical methods (see \cite[Eqn.6.3.13]{dahlquist2008numerical}).

There is a technical point about our formula for $e\ge2$. The denominator $\partial_{y}f\rvert_{y=y_t}$ is zero mod $I$, thus, its reciprocal does not exist! However, the ratio $(f/\partial_{y}f) \big \rvert_{y=y_t}$ does exist in $\F[[\overline{x}]]$. On the other hand, if $e=1$ then the denominator $\partial_{y}f\rvert_{y=y_t}$ is nonzero mod $I$, thus, it is invertible in $\F[[\overline{x}]]$ and that allows fast algebraic circuit computations ({\em classical NI}).

\medskip
We can compare the NI formula with the recurrence formula (which we call {\em slow} Newton Iteration) used in \cite[Eqn.5]{dvir2009hardness}, \cite[Lem.4.1]{oliveira2016factors} for root finding. The slow NI formula is $Y_{t+1}= Y_t - \frac{f(\overline{x},Y_t)}{\partial_y f(\overline{0},Y_1)}$, where  $Y_t  \equiv g \mod I^{t}$. The rate of convergence of this iteration is linear, as it takes $\delta$ many steps (instead of $\log\delta$) to get precision up to degree $\delta$.
One can also compare NI with other widespread processes like multifactor Hensel lifting \cite[Sec.15.5]{von2013modern}, \cite{zassenhaus1969hensel} and the implicit function theorem paradigm \cite[Sec.1.3]{book-KP12}, \cite{KS16, PSS16}; however, we would not like to digress too much here as the latter concept covers a whole lot of ground in mathematics.

\vspace{-2mm}
\subsection{Proof overview}
\vspace{-1mm}

In all our proofs, we use the reduction of factoring to power series root approximation, and then find the latter using various techniques described before.

\medskip\noindent
\textbf{Proof idea of Theorem \ref{thm1}:} We use the technique of allRootsNI to find the approximations of all the power series roots of $f(\tau\overline{x})$. As we already discussed how to find a polynomial factor $g$ of $u_1$ (that divides $f$) from the roots of $f(\tau\overline{x})$, what remains is to analyze the size bound for power series roots that we get from allRootsNI process. We note a few crucial points that help to prove the size bound. 

Let $d_0$ be the degree of $\rad(u_1)$. The number of distinct power series roots, of $u_1(\tau\overline{x})$ wrt $y$, is $d_0$. It suffices to approximate the power series roots up to degree $d_0$, as any nontrivial polynomial factor of $\rad(u_1(\tau\overline{x}))$ has degree less than $d_0$. Also, a size bound on these factors of the radical directly gives a size bound on the polynomial factor $g$.

The logarithmic derivative satisfies: $\partial_y \log f(\tau\overline{x}) =$ $\partial_y \log u_0(\tau\overline{x})+$ $\partial_y \log u_1(\tau\overline{x})$.
Since we have size $s$ circuits for both $f$ and $u_0$, and $y$ is later fixed to random $c_i$'s in $\F$, we can approximate the first two logarithmic derivative circuits modulo $I^{d_0+1}$. This approximates $\partial_y u_1(\tau\overline{x})/u_1(\tau\overline{x})$.

On this, allRootsNI process is used to approximate the power series roots of $u_1(\tau\overline{x})$ up to degree $d_0$. The self correcting behavior of the allRootsNI is crucial in the size analysis. If one had to truncate modulo $I^{d_0+1}$ at each recursive step, there would have been a multiplicative blowup (by $d_0$) in each step, which would end up with an exponential blow up in the size of the roots. The self correcting property allows to complete allRootsNI process, with division gates and partially correct roots $g_{i,\delta}'$, to get a circuit of size $\poly(sd_0)$. The truncation modulo $I^{d_0+1}$, to get a root of degree $\le d_0$, is performed only once in the end. See Section \ref{sec-pf-thm1}.

The steps in the proof of Theorem \ref{thm1} are constructive. However, to claim that we have an efficient algorithm we will need, in advance, the multiplicity of each of the $d_0$ roots. It is not clear how to find them efficiently, even in the univariate case $n=1$, as the multiplicity could be exponentially large.

\medskip\noindent
\textbf{Proof idea of Theorem \ref{thm2}:} The main technique used is NI with multiplicity. The main barrier in resolving high degree case is handling roots with high multiplicities (i.e. super-polynomial in size $s$). If all the roots of the polynomial have multiplicity equal to one, then we can use classical Newton iteration. If the multiplicity of a root is low (up to poly($s$)), we can differentiate and bring down the multiplicity to one. In Theorem \ref{thm1}, we handled the case of high multiplicity by assuming that the radical has small degree. 

So, the only remaining case is when both the number of roots, and their  multiplicities, are high. Newton iteration with multiplicity helps here. Note that we need to know the multiplicity of the root \emph{exactly} to apply NI with multiplicity; here, we will simply guess them non-uniformly. In the end, the process gives a circuit of size $\poly(sd)$ with division gates, giving the root mod $I^{d+1}$. By using a standard method the division gates can all be pushed ``out'' to the root. See Section \ref{sec-pf-thm2}.

\medskip\noindent
\textbf{Proof idea of Theorem \ref{thm3}:} Here, we show the closure under factoring for the algebraic complexity classes $VF(n^{\log n}), VBP(n^{\log n}), VNP(n^{\log n})$. In fact, we also give randomized $n^{O(\log n)}$-time algorithm to output the factors as formula (resp.~algebraic branching program). The key technique here is the classical Newton Iteration. 
The crucial advantage of NI over other approaches of power series root finding is that NI requires only $\log d$ steps to get precision up to degree $d$, whereas allRootsNI, \cite[Eqn.5]{dvir2009hardness} or \cite[Lem.4.1]{oliveira2016factors} require $d$ steps. This leads to a slower size blow up in the case of restricted models like formula or ABP. 

In a formula resp.~ABP, we cannot reuse intermediate computations. So each recursive step of NI incurs a blow up by $d^2$, as one needs to substitute $y_t$ in a degree $d$ polynomial $f(y)$ which may require that many copies of $y_t$-powers. But, as the NI process has only $\log d$ steps, ultimately, we get $d^{2\log d}$ blow up in the size bound. This is the main idea of the existential results in Theorem \ref{thm3}. Moreover, an interesting by-product is that VF, VBP and VNP are closed under factors if we only consider polynomials with individual degree {\em constant} (also see \cite{oliveira2016factors}).

All the steps in the proof of the existential result are algorithmically efficient except for one. We are recovering all the power series roots and multiplying a few of them to get a non-trivial factor. How do we choose the right combination of the roots which gives a non-trivial factor? If we search for the right combination in a brute-force way, it would need exponential (like $2^d$) time complexity. Here, linear algebra saves us; the idea dates back to Kaltofen's algorithm for bivariate factoring. Our contribution lies in the careful analysis of the different steps,  coming up with a new algorithm for computing gcd, and making sure that everything  works with formulas resp.~ABPs.  

Consider the transformed polynomial $f(\tau\overline{x})$ that is monic and degree $d$ in $y$. It will help us if we think of this polynomial as a bivariate (i.e.~in $y$ and a new degree-counter $T$). This somewhat reduces the problem to a two-dimensional case and makes the modular computations feasible (see \cite[Sec.1.2.2]{kopparty2015equivalence}). So, we need to apply the map $\overline{x} \mapsto T\overline{x}$, where $T$ is a new formal variable; call the resulting polynomial $\tilde{f}(\overline{x},T,y)$. This map preserves the power series roots; in fact, we can get the roots of $f(\tau\overline{x})$ by putting $T=1$. Now comes the most important idea in the algorithm.
Approximate a root $g_i$ up to large enough precision (say $k:=2d^2$). Solve the system of linear equations $u = (y \,-\, g_i^{\leq k}(Tx))\cdot v \bmod T^{k+1}$ for monic polynomials $u, v$. Then, $u$ will give a non-trivial factor when we compute $\gcd_y(u,\tilde{f})$. Intuitively, the gcd gives us the irreducible polynomial factor whose root is the power series $g_i$ that we had earlier computed by NI.

Note that a modified gcd computation is needed to actually get a factor as a formula resp.~ABP. If one uses the classical Euclidean algorithm, there are $d$ recursive steps to execute; at each step there would be a blow up of $d$ (as for formula or ABP, we cannot reuse any intermediate computation). So, in this approach (eg.~the one used in \cite{kopparty2015equivalence}), gcd of the two formulas will be of exponential size. The way we achieve a better bound is by first using NI to approximate {\em all} the power series roots of $u$ and $\tilde{f}$. Subsequently, we filter the ones that appear in both to learn the gcd. There is an alternate way as well based on our Claim \ref{clm-irred-fac}. See Section \ref{sec-pf-thm3}.


\section{Preliminaries}
\label{sec-pre}
\vspace{-1mm}

In our proofs we will need some basic results about formulas, ABPs and circuits. 
In particular, we can efficiently eliminate a division gate, we can extract a homogeneous part, and we can compute a (first-order) derivative. Also, see \cite[Sec.2]{kopparty2015equivalence}.

Determinant is in VBP and is computable by a $n^{O(\log n)}$ size formula.

We will use properties of gcd($f,g$) and a related determinant polynomial called {\em resultant}. 

To save space we have moved the well known details to Section \ref{prel}.

\section{Power series factorization of polynomials}
\label{sec-split}
\vspace{-1mm}

  Instead of looking into the factorization over $\F[\overline{x}]$, we look into the more analytic factorization pattern of a polynomial over $\F[[x_1,\hdots,x_n]]$, namely, formal power series of $n$-variables over field $\F$. To talk about factorization, we need the notion of \emph{uniqueness} which the following proposition ensures.

\begin{proposition}\cite[Chap.VII]{zariski1975commutative}\label{prop-ufd}
Power series ring $\F[[x_1,\ldots,x_n]]$ is a unique factorization domain (UFD), and so is $\F[[\overline{x}]][y]$.
\end{proposition}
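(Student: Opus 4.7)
The plan is to prove the two statements in sequence, using the classical Weierstrass preparation machinery for the first and Gauss's lemma for the second.

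First I would establish that $R_n := \F[[x_1,\ldots,x_n]]$ is a UFD by induction on $n$. The base case $n=1$ is immediate: every nonzero element of $\F[[x_1]]$ has the form $x_1^k\cdot u$ for a unique $k\ge 0$ and a unit $u$, so $\F[[x_1]]$ is a discrete valuation ring, hence a PID, hence a UFD. For the inductive step, given $n\ge 2$ and a nonzero $f\in R_n$, I would first make $f$ \emph{$x_n$-regular} (meaning $f(0,\ldots,0,x_n)\ne 0$ as a series in $x_n$) via a generic linear change of variables $x_i\mapsto x_i+\alpha_i x_n$ for $i<n$; this succeeds because the leading homogeneous form of $f$ evaluated at $(\alpha_1,\ldots,\alpha_{n-1},1)$ is nonzero for generic $\alpha_i$.

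Next I would invoke the Weierstrass preparation theorem: any $x_n$-regular $f$ of order $d$ admits a unique decomposition $f=u\cdot p$ where $u\in R_n^\times$ and $p$ is a \emph{distinguished polynomial} in $x_n$ over $R_{n-1}$, i.e.\ $p=x_n^d+a_{d-1}(x_1,\ldots,x_{n-1})x_n^{d-1}+\cdots+a_0(x_1,\ldots,x_{n-1})$ with $a_i(\overline 0)=0$. By the inductive hypothesis $R_{n-1}$ is a UFD, and then Gauss's lemma gives that $R_{n-1}[x_n]$ is a UFD. One then checks that irreducible factors of $p$ in $R_n$ can be made distinguished by Weierstrass (absorbing the unit factor), and that distinguished irreducibles in $R_n$ are exactly the irreducible distinguished polynomials of $R_{n-1}[x_n]$---the key point being that a unit of $R_n$ which is also a polynomial in $x_n$ must be a constant unit of $R_{n-1}$. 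This transfers unique factorization from $R_{n-1}[x_n]$ to $R_n$.

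The second assertion is then immediate from one more application of Gauss's lemma: if $R$ is a UFD, so is $R[y]$; take $R=\F[[\overline x]]$. The main obstacle in this plan is the Weierstrass preparation theorem itself, which is the engine that turns an analytic-looking object (a power series regular in $x_n$) into an honest polynomial in $x_n$ over a smaller power series ring. Rather than reprove this classical fact, I would simply cite \cite{zariski1975commutative} as the statement does.
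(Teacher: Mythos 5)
Your proposal is correct and is essentially the argument behind the paper's citation: the paper does not prove this proposition itself but refers to Zariski--Samuel, where the UFD property of $\F[[x_1,\ldots,x_n]]$ is obtained exactly as you describe, by induction on $n$ via a generic change of variables, the Weierstrass preparation theorem, and Gauss's lemma, with the extension to $\F[[\overline{x}]][y]$ being one further application of Gauss's lemma. The only small point to polish is that the relevant form in your genericity argument is the \emph{lowest-degree} (initial) homogeneous form of $f$, whose nonvanishing at $(\alpha_1,\ldots,\alpha_{n-1},1)$ guarantees $x_n$-regularity of the transformed series.
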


As discussed before, we need to first apply a random linear map, that will make sure that the resulting polynomial splits completely over the ring $\F[[\overline{x}]]$. (Recall: $\F$ is algebraically closed.)

\begin{theorem}[Power Series Complete Split]
\label{thm-complete-split}
Let $f \in \F[\overline{x}] $ with deg$(\text{rad}(f))=: d_0>0$. Consider $\alpha_i, \beta_i \in_{r} \mathbb{F} $ and the map $\tau : x_i\mapsto \alpha_{i}y+x_i + \beta_i$, $i\in[n]$, where $y$ is a new variable.

Then, over $\F[[\overline{x}]]$, $f(\tau \overline{x})= k\cdot \prod_{i\in[d_0]} (y-g_i)^{\gamma_i}$, where $k\in\F^*$, $\gamma_i > 0$, and $g_i(\overline{0}):= \mu_i$. Moreover, $\mu_i$'s are distinct nonzero field elements.
\end{theorem}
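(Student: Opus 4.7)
The plan is to reduce the factorization over $\F[[\overline{x}]]$ to a univariate root problem by specializing $\overline{x} = 0$, use squarefreeness of $\mathrm{rad}(f)$ to argue that the specialized polynomial has $d_0$ distinct nonzero roots, and then lift these roots to formal power series by the formal implicit function theorem (Hensel's lemma, \cite[Thm.2.31]{burgisser2013algebraic}).

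First I would write $f = \prod_j f_j^{e_j}$ as a product of distinct irreducibles, so that $R := \mathrm{rad}(f) = \prod_j f_j$ has degree $d_0$. The effect of the random shift is captured by three polynomial non-vanishing conditions on $(\alpha,\beta)\in\F^{2n}$: (a) the top-degree form $f^{(\deg f)}(\alpha)\neq 0$, which ensures $\deg_y f(\tau\overline{x}) = \deg f$ with leading $y$-coefficient $k := f^{(\deg f)}(\alpha) \in \F^*$; (b) $R(\beta)\neq 0$, which forces each $\mu_i \neq 0$ (since otherwise $y=0$ would be a root of $R(\alpha y+\beta)$); and (c) $\mathrm{disc}_y R(\alpha y+\beta)\neq 0$, which, together with (a), guarantees that $R(\alpha y+\beta)\in\F[y]$ has exactly $d_0$ distinct roots $\mu_1,\ldots,\mu_{d_0}$. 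Each of (a)--(c) cuts out a Zariski-closed proper subset of $\F^{2n}$, so random sampling of $(\alpha,\beta)$ from a sufficiently large finite set avoids all three simultaneously.

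The one non-routine step is verifying that the polynomial defining (c) is not identically zero. I would prove this by exhibiting one explicit witness $(\alpha_0,\beta_0)$: pick $A\in GL_n(\F)$ so that $R(A\overline{x})$ attains full $x_1$-degree $d_0$ (possible because $R^{(d_0)}\neq 0$). Since $A$ is invertible, $R(A\overline{x})$ remains squarefree in $\F[\overline{x}]$, and its leading $x_1$-coefficient lies in $\F^*$; by Gauss's lemma (using $\mathrm{char}\,\F = 0$) it is therefore squarefree in $\F(x_2,\ldots,x_n)[x_1]$, so $\mathrm{disc}_{x_1} R(A\overline{x}) \not\equiv 0$ in $\F[x_2,\ldots,x_n]$. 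Specializing $x_2,\ldots,x_n$ to generic constants $\beta'$, the univariate polynomial $R(A(y,\beta')^T)$ is squarefree in $y$. Taking $\alpha_0 := Ae_1$ and $\beta_0 := A(0,\beta')^T$, this is exactly $R(\alpha_0 y + \beta_0)$, certifying $\mathrm{disc}_y \not\equiv 0$.

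With (a)--(c) in hand, $R(\tau\overline{x}) \in \F[[\overline{x}]][y]$ is $y$-monic up to the unit $R^{(d_0)}(\alpha)\in\F^*$, and its reduction mod $\langle\overline{x}\rangle$ is $R(\alpha y+\beta)$ with $d_0$ distinct simple roots $\mu_i$. The formal implicit function theorem then lifts each $\mu_i$ uniquely to $g_i \in \F[[\overline{x}]]$ with $g_i(\overline{0}) = \mu_i$ and $R(\tau\overline{x})\big|_{y=g_i} = 0$. Since the $\mu_i$ are distinct, the $(y-g_i)$ are pairwise coprime primes in the UFD $\F[[\overline{x}]][y]$ (Proposition \ref{prop-ufd}); comparing $y$-degrees forces $R(\tau\overline{x}) = R^{(d_0)}(\alpha)\prod_i (y-g_i)$. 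For each irreducible $f_j$, dividing $R(\tau\overline{x})$ picks out a subset $I_j\subseteq[d_0]$ and gives $f_j(\tau\overline{x}) = f_j^{(\deg f_j)}(\alpha)\prod_{i\in I_j}(y-g_i)$; raising to $e_j$ and multiplying across $j$ yields $f(\tau\overline{x}) = k\prod_i (y-g_i)^{\gamma_i}$ with $k = f^{(\deg f)}(\alpha)\in\F^*$ and $\gamma_i = e_{j(i)} > 0$. The main obstacle is the squarefreeness-preservation claim of step (c); the remaining steps (Hensel lifting, UFD bookkeeping, and matching the leading coefficient) are essentially formal.
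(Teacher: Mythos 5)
Your proof is correct, and it reaches the same endpoint via the same skeleton as the paper (random shift to make the polynomial $y$-monic, distinct simple nonzero roots at $\overline{x}=\overline{0}$, Newton/Hensel lifting as in Lemma \ref{lem-NI}, then UFD bookkeeping in $\F[[\overline{x}]][y]$ via Proposition \ref{prop-ufd}), but you justify the crucial genericity step differently. The paper works irreducible factor by irreducible factor: each $f_i(\tau\overline{x})$ is irreducible, so squarefree in $y$ over $\F(\overline{x})$ in characteristic $0$, and Lemma \ref{lem-coprime} (resultant non-vanishing under random specialization) is invoked twice --- once to make each $\tilde{f_i}(\overline{0},y)$ squarefree, once to make the specializations of distinct factors mutually coprime. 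You instead work with the radical $R=\rad(f)$ globally, packaging distinctness of all $d_0$ roots (within and across factors simultaneously) into the single condition $\mathrm{disc}_y R(\overline{\alpha}y+\overline{\beta})\neq 0$, and you certify that this is a nonzero polynomial condition on $(\overline{\alpha},\overline{\beta})$ by an explicit witness: an invertible linear change of coordinates giving $R$ full $x_1$-degree with constant leading coefficient, Gauss's lemma to transfer squarefreeness to $\F(x_2,\ldots,x_n)[x_1]$, and a generic specialization of the remaining variables. This witness construction is the only non-routine step and it is sound; it is somewhat more laborious but self-contained, whereas the paper's route is more modular since Lemma \ref{lem-coprime} is reused elsewhere. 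You are also more explicit than the paper on two small points: the condition $R(\overline{\beta})\neq 0$ guaranteeing the $\mu_i$ are nonzero, and the identification of the unit $k$ and of each $f_j(\tau\overline{x})$ as a constant times a subproduct of the $(y-g_i)$ by comparing leading $y$-coefficients.
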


\begin{proof}
Let the irreducible factorization of  $f$ be $\prod_{i\in[m]} f_i^{e_i}$. We apply a random $\tau$ so that $f$, thus all its factors, become monic in $y$ (Lemma \ref{lem-monic}). 
The monic factors $\tilde{f_i}:= f_i(\tau \overline{x})$ remain irreducible ($\because$ $\tau$ is invertible). Also, $\tilde{f_i}(\overline{0},y)= f_i(\overline{\alpha}y+ \overline{\beta})$ and $\partial_y\tilde{f_i}(\overline{0},y)$ remain coprime ($\because \overline{\beta}$ is random, apply Lemma \ref{lem-coprime}). In other words, $\tilde{f_i}(\overline{0},y)$ is square free (Lemma \ref{lem-sq-free}).

In particular, one can write $\tilde{f_1}(\overline{0},y)$ as $\prod_{i=1}^{\deg(f_1)} (y-\mu_{1,i})$ for distinct nonzero field elements $\mu_{1,i}$ (ignoring the constant which is the coefficient of the highest degree of $y$ in $\tilde{f_1}$). Using classical Newton Iteration (see Lemma \ref{lem-NI} or \cite[Thm.2.31]{burgisser2013algebraic}), one can write $\tilde{f_1}(\overline{x},y)$ as a product of power series $\prod_{i=1}^{\text{deg}(f_1)} (y-g_{1,i})$, with $g_{1,i}(\overline{0}):= \mu_{1,i}$. Thus, each $f_i(\tau \overline{x})$ can be factored into linear factors in $\F[[\overline{x}]][y]$.

As $f_i$'s are irreducible coprime polynomials, by Lemma \ref{lem-coprime}, it is clear that $\tilde{f_i}(\overline{0},y)$, $i\in[m]$, are mutually coprime. In other words, $\mu_{j,i}$ are distinct and they are $\sum_i\deg(f_i)= d_0$ many. Hence, $f(\tau \overline{x})$ can be completely factored as $\prod_{i\in[m]} f_i(\tau \overline{x})^{e_i}=$ $\prod_{i\in[d_0]} (y-g_i)^{\gamma_i}$, with $\gamma_i>0$ and the field constants $g_i(\overline{0})$ being distinct.
\end{proof}

\begin{corollary} \label{cor1}
Suppose $g$ is a polynomial factor of $f$. As before let $f(\tau \overline{x})=\prod_{i\in[m]} f_i(\tau \overline{x})^{e_i}$ $= k\cdot \prod_{i\in[d_0]} (y-g_i)^{\gamma_i}$. As $g(\tau x) \mid f(\tau \overline{x})$ we deduce that $g(\tau x)= k'\prod (y-g_i)^{c_i}$ with $0\le c_i \leq \gamma_i$. 

Moreover, we can get back $g$ by applying $\tau^{-1}$ on the resulting polynomial $g(\tau\overline{x})$.
\end{corollary}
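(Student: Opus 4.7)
The plan is to reduce the statement to the unique factorization property of $\F[[\overline{x}]][y]$ recorded in Proposition \ref{prop-ufd}, together with the complete split produced by Theorem \ref{thm-complete-split}. First, because $\tau$ acts on variables by an invertible affine map, it induces a ring automorphism of $\F[\overline{x},y]$, so $g \mid f$ in $\F[\overline{x}]$ immediately yields $g(\tau\overline{x}) \mid f(\tau\overline{x})$ in the larger ring $\F[[\overline{x}]][y]$. After the random shift, Lemma \ref{lem-monic} moreover guarantees that $f(\tau\overline{x})$, and hence each of its divisors $g(\tau\overline{x})$, is monic in $y$ up to a nonzero field constant.

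Next I would observe that the factors $(y-g_i)$ in the complete split are irreducible in $\F[[\overline{x}]][y]$: each is monic of degree $1$ in $y$, and by Theorem \ref{thm-complete-split} the constants $g_i(\overline{0})=\mu_i$ are pairwise distinct, so these irreducibles are pairwise non-associate (no two differ by a unit of $\F[[\overline{x}]]$). Then I would appeal to Proposition \ref{prop-ufd}: in the UFD $\F[[\overline{x}]][y]$, the factorization $f(\tau\overline{x}) = k\cdot\prod_{i=1}^{d_0}(y-g_i)^{\gamma_i}$ is unique up to units and reordering. Therefore, any divisor of $f(\tau\overline{x})$ must be, up to a unit, a subproduct $\prod_i(y-g_i)^{c_i}$ with $0\le c_i\le \gamma_i$. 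Applying this to $g(\tau\overline{x})$ gives the required form $g(\tau\overline{x}) = k'\prod_i(y-g_i)^{c_i}$, and since $g(\tau\overline{x})$ is a polynomial in $\F[\overline{x},y]$ that becomes a polynomial (in fact a unit times this product) in $\F[[\overline{x}]][y]$, the unit $k'$ collapses to an element of $\F^*$ because $g(\tau\overline{x})$ is monic up to scaling in $y$.

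Finally, for the ``moreover'' clause, I would note that $\tau$ is the invertible linear change of variables $x_i\mapsto \alpha_i y + x_i + \beta_i$, whose inverse $\tau^{-1}$ is also a polynomial map. Since substitution is a ring homomorphism, $\big(g(\tau\overline{x})\big)(\tau^{-1}\overline{x}) = g(\overline{x})$, which recovers $g$ exactly as claimed.

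I anticipate the only delicate point to be the verification that the constant $k'$ lies in $\F^*$ rather than being a nontrivial unit of $\F[[\overline{x}]]$; this is handled by comparing leading coefficients in $y$ on both sides of $g(\tau\overline{x}) = k'\prod_i(y-g_i)^{c_i}$, using that $g(\tau\overline{x})$ has a constant (in $\overline{x}$) leading coefficient in $y$ after the random shift, exactly as in the proof of Theorem \ref{thm-complete-split}. Everything else is a direct bookkeeping consequence of unique factorization.
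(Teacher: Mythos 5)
Your proposal is correct and follows the same route the paper intends: the corollary is treated there as an immediate consequence of the complete split (Theorem \ref{thm-complete-split}) together with unique factorization in $\F[[\overline{x}]][y]$ (Proposition \ref{prop-ufd}) and the invertibility of $\tau$. Your write-up merely makes explicit the bookkeeping the paper leaves implicit (irreducibility and pairwise non-associateness of the $y-g_i$, and the leading-coefficient argument forcing the unit $k'$ into $\F^*$), so there is nothing to add.
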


\section{Main Results} \label{sec-main}
\vspace{-1mm}

This section proves Theorems \ref{thm1}--\ref{thm3}. The proofs are self contained and we assume for the sake of simplicity that the underlying field $\F$ is algebraically closed and has characteristic $0$. When this is not the case, we discuss the corresponding theorems in Section \ref{sec-extn}.  

\subsection{Factors of a circuit with low-degree radical: Proof of Theorem \ref{thm1}}
\label{sec-pf-thm1}
\vspace{-1mm}

In this section, we use Theorem \ref{thm-complete-split} and allRootsNI to partially solve the case of circuits with exponential degree (stated in \cite{Kaltofen86} and studied in \cite{kaltofen1987single, burgisser2004complexity}).

\begin{proof}[Proof of Theorem \ref{thm1}]

From the hypothesis $f=u_0u_1$. Define $\deg(f)=:d$. Suppose $u_1=h_1^{e_1}\hdots h_m^{e_m} $, where $h_i$'s are coprime irreducible polynomials. Let $d_0$ be the degree of $\rad(u_1) = \prod_i h_i$. Note that deg($h_i), m \leq d_0$ and the multiplicity $e_i\le d\le s^{O(s)}$, where $s$ is the size bound of the input circuit. Thus, to get the size bound of any factor of $u_1$, it is enough to show that for each $i$, $h_i$ has a circuit of size poly$(sd_0)$.

Using Theorem \ref{thm-complete-split}, we have $ \tilde{f}(\overline{x},y) := f(\tau \overline{x})= k\cdot u_0(\tau \overline{x})\cdot \prod_{i\in[d_0]} (y-g_i)^{\gamma_i}$, with $ g_{i}(\overline{0}) :=\mu_i$ being distinct. From Corollary \ref{cor1} we deduce that $h_i(\tau \overline{x})= k_i \prod_{i\in[d_0]} (y-g_i^{\leq d_0})^{\delta_i} \bmod I^{d_0+1} $, with ideal $I:= \langle x_1, \hdots, x_n \rangle$, exponent $ \delta_i \in \{0,1\} $ and  nonzero $k_i \in \F$. We can get $h_i$ by applying $\tau^{-1}$. Hence, it is enough to bound the size of $g_i^{\leq d_0}$.

Let $\tilde{u_0} := u_0(\tau \overline{x})$. From the repeated applications of Leibniz rule of the derivative $\partial_y$, we deduce,
$\partial_{y} \tilde{f}/\tilde{f} \,=\, \partial_{y}\tilde{u_0}/\tilde{u_0} + \sum_{i=1}^{d_0} \gamma_{i}/(y-g_{i})$. (Recall: $\partial_y (FG) = (\partial_y F)G+ F(\partial_y G)$.)

At this point we move to the formal power series, so that the reciprocals can be approximated as polynomials.
Note that $y-g_{i}$ is invertible in $\mathbb{F}[[\overline{x}]]$ when $y$ is assigned any value $c_i\in\F$ which is not equal to  $\mu_{i}$. We intend to find $g_{i} \bmod I^{\delta} $ inductively, for all $\delta\ge1$.  We assume that $\mu_{i}$'s and $\gamma_i$'s are known. Suppose, we have recovered up to $g_{i} \bmod I^{\delta} $ and we want to recover $g_{i} \bmod  I^{\delta+1} $. The relevant recurrence, for $\delta\ge1$, is: 

\begin{claim}[Recurrence]\label{lem-logDer-Id}
 $\sum_{i=1}^{d_0} \gamma_{i}\cdot g_{i}^{=\delta}/(y-\mu_{i})^2  \,\equiv\, \partial_{y} \tilde{f}/\tilde{f} \,-\, \partial_{y}\tilde{u_0}/\tilde{u_0} \,-\, \sum_{i} \gamma_{i}/(y-g_{i}^{<\delta}) \bmod I^{\delta+1}$. 
\end{claim}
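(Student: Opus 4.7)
\textbf{Proof proposal for Claim \ref{lem-logDer-Id}.}
The plan is to start from the logarithmic derivative of the complete split and then Taylor-expand each rational term $\gamma_i/(y-g_i)$ around $\mu_i$ in powers of $\overline{x}$, keeping only contributions of $\overline{x}$-degree $\le \delta$. Throughout, I would work in the ring $\F(y)[[\overline{x}]]$, where $y-\mu_i$ and, more generally, $y - g_i^{<\delta}$ are units (their constant term in $\overline{x}$ is the nonzero element $y-\mu_i \in \F(y)$), so that all the reciprocals appearing in the claim make formal sense modulo $I^{\delta+1}$.

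First I would apply $\partial_y \log$ to $\tilde{f} = k\cdot \tilde{u_0}\cdot \prod_{i=1}^{d_0}(y-g_i)^{\gamma_i}$ from Theorem \ref{thm-complete-split} and the factorization $f=u_0 u_1$, using the Leibniz rule $\partial_y(FG)/(FG) = \partial_y F/F + \partial_y G /G$, to obtain the ``global'' logarithmic derivative identity
\[
\frac{\partial_y \tilde{f}}{\tilde{f}} \;=\; \frac{\partial_y \tilde{u_0}}{\tilde{u_0}} + \sum_{i=1}^{d_0} \frac{\gamma_i}{y-g_i},
\]
and I would then isolate $\sum_i \gamma_i/(y-g_i)$ on one side. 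So the task reduces to showing, term by term, that
\[
\frac{\gamma_i}{y-g_i} \;\equiv\; \frac{\gamma_i}{y-g_i^{<\delta}} \,+\, \frac{\gamma_i\, g_i^{=\delta}}{(y-\mu_i)^2} \pmod{I^{\delta+1}}.
\]

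For this congruence I would write $g_i = g_i^{<\delta} + g_i^{=\delta} + g_i^{>\delta}$ with $g_i^{=\delta}\in I^{\delta}$ and $g_i^{>\delta}\in I^{\delta+1}$, set $A := y-g_i^{<\delta}$ (a unit in $\F(y)[[\overline{x}]]$), and geometrically expand
\[
\frac{1}{y-g_i} \;=\; \frac{1}{A - g_i^{=\delta} - g_i^{>\delta}} \;=\; \frac{1}{A}\sum_{k\ge 0}\left(\frac{g_i^{=\delta}+g_i^{>\delta}}{A}\right)^{k}.
\]
For $\delta\ge 1$, the $k\ge 2$ summands lie in $I^{2\delta}\subseteq I^{\delta+1}$ and can be discarded; the $k=0$ term is $1/A$; the $k=1$ term is $(g_i^{=\delta}+g_i^{>\delta})/A^2$, from which the $g_i^{>\delta}$ contribution drops out modulo $I^{\delta+1}$. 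Finally, because $A \equiv y-\mu_i \pmod{I}$, we get $1/A^2 \equiv 1/(y-\mu_i)^2 \pmod{I}$, and multiplying by $g_i^{=\delta}\in I^{\delta}$ yields the promised replacement $g_i^{=\delta}/A^2 \equiv g_i^{=\delta}/(y-\mu_i)^2 \pmod{I^{\delta+1}}$. Summing over $i$ and plugging back into the logarithmic derivative identity gives the claim.

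The only subtle point, and the one I would be most careful about, is the ambient ring: all these reciprocals are not polynomial reciprocals but elements of $\F(y)[[\overline{x}]]$, so I would first check explicitly that $y-\mu_i$ (as an element of $\F(y)$) and $y-g_i^{<\delta}$ (whose constant-in-$\overline{x}$ part equals $y-\mu_i$) are units there, which legitimises the geometric expansion and the final reduction modulo $I^{\delta+1}$. Everything else is a straightforward degree count: the key numerical fact being $\delta+\delta \ge \delta+1$ for $\delta\ge 1$, which is what lets us truncate the tail of the geometric series.
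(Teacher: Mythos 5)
Your proposal is correct and follows essentially the same route as the paper: the logarithmic-derivative identity $\partial_y\tilde{f}/\tilde{f} = \partial_y\tilde{u_0}/\tilde{u_0} + \sum_i \gamma_i/(y-g_i)$ combined with the term-by-term congruence $\frac{1}{y-g_i} \equiv \frac{1}{y-g_i^{<\delta}} + \frac{g_i^{=\delta}}{(y-\mu_i)^2} \bmod I^{\delta+1}$, which the paper delegates to Lemma \ref{lem-series-id} and which you re-derive inline by the same geometric-series-plus-degree-count argument (your expansion around $y-g_i^{<\delta}$ rather than $y-\mu_i$, and your explicit handling of the tail $g_i^{>\delta}$ and of the ambient ring, are only cosmetic differences). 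No gaps.
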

\claimproof{lem-logDer-Id}{
Using a power series calculation (Lemma \ref{lem-series-id}), we have 
$\frac{1}{y-g_{i}} \equiv \frac{1}{y-\left(g_{i}^{<\delta}+g_{i}^{=\delta}\right)} \equiv \frac{1}{y-g_{i}^{<\delta}} + \frac{g_{i}^{=\delta}}{(y-\mu_{i})^2} \bmod I^{\delta+1} $. Multiplying by $\gamma_i$ and summing  over $i\in[d_0]$, the claim follows.
}

By knowing approximation up to the $\delta-1$ homogeneous parts of $g_i$, we want to find the $\delta$-th part by solving a linear system. For concreteness, assume that we have a rational function $g'_{i,\delta-1} := C_{i,\delta-1}/D_{i,\delta-1}$ such that $g'_{i,\delta-1} \equiv g_i^{<\delta} \bmod I^\delta$. Next, we show how to compute $g_i^{\le\delta}$. 

We recall the process as outlined in allRootsNI (Section \ref{sec-tech}). In the free variable $y$, we plug-in $d_0$ random field value $c_i$'s and get the following system of linear equations: $M\cdot v_{\delta} = W_{\delta}$, where $M$ is a $d_0 \times d_0$ matrix with $(i,j)$-th entry, $M(i,j):= \gamma_j/(c_{i}-\mu_{j})^2$. 
Column $v_{\delta}$ resp.~$W_{\delta}$ is a $d_0 \times 1$ matrix whose $i$-th entry is denoted $v_{\delta}(i)$ resp.~$(\partial_{y} \tilde{f}/\tilde{f} -\partial_{y} \tilde{u_0}/\tilde{u_0})\rvert_{y=c_{i}}$ $-\,\tilde{G}_{i,\delta}$, where $\tilde{G}_{i,\delta} := \sum_{j=1}^{d_0} \gamma_{j}/(c_{i}-g_{j,\delta-1}')$. Think of the solution $v_{\delta}$ as being both in $\F(\overline{x})^{d_0}$ and in $\F[[\overline{x}]]^{d_0}$; both the views help.

Now we will prove two interesting facts. First, $M$ is invertible (Lemma \ref{lem-inv-det}). Second, define $g'_{i,0}:=\mu_i$ and, for $\delta\ge1$, $g'_{i,\delta} :=$ $g'_{i,\delta-1}+v_{\delta}(i)$. Then, $g'_{i,\delta}$ approximates $g_i$ well:

\begin{claim}[Self-correction]
\label{clm-selfCorr}
Let $i\in[d_0]$ and $\delta\ge0$.
Then, $g'_{i,\delta}\equiv$ $g_i^{\le\delta} \bmod I^{\delta+1}$. 
\end{claim}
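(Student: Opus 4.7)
The plan is to prove the claim by induction on $\delta$. The base case $\delta = 0$ is immediate, since by definition $g'_{i,0} := \mu_i = g_i(\overline{0}) = g_i^{=0}$, and $g_i^{\le 0} \equiv \mu_i \bmod I$.

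For the inductive step I would assume $g'_{j,\delta-1} \equiv g_j^{<\delta} \bmod I^\delta$ for every $j$, and write $g'_{j,\delta-1} = g_j^{<\delta} + \epsilon_j$ with $\epsilon_j \in I^\delta$ (all manipulations take place in $\F[[\overline{x}]]$; the rational functions embed into this completion because each $c_i$ was chosen with $c_i - \mu_j \ne 0$, making every denominator a unit). The crucial step is to analyze $1/(c_i - g'_{j,\delta-1})$ modulo $I^{\delta+1}$ via the Taylor-style identity of Lemma \ref{lem-series-id} with $a := c_i - g_j^{<\delta}$ and $b := \epsilon_j$: since $(b/a)^k \in I^{k\delta} \subseteq I^{\delta+1}$ for every $k\ge 2$, only the linear correction survives, giving
$$\frac{1}{c_i - g'_{j,\delta-1}} \,\equiv\, \frac{1}{c_i - g_j^{<\delta}} \,+\, \frac{\epsilon_j}{(c_i - \mu_j)^2} \bmod I^{\delta+1},$$
where the denominator $(c_i - g_j^{<\delta})^2$ was collapsed to $(c_i-\mu_j)^2$ because $\epsilon_j \in I^\delta$ absorbs any remaining $O(I)$-discrepancy into $I^{\delta+1}$.

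Next I would sum $\gamma_j$ times the above identity over $j$, obtaining $\tilde G_{i,\delta} \equiv \sum_j \gamma_j/(c_i - g_j^{<\delta}) + \sum_j \gamma_j \epsilon_j/(c_i-\mu_j)^2 \bmod I^{\delta+1}$. Plugging this into the definition of $W_\delta(i)$ and invoking Claim \ref{lem-logDer-Id} at the specialization $y = c_i$ (which is a legal substitution, since $I$ lives only in the $\overline{x}$ variables) cancels the logarithmic-derivative terms and leaves
$$W_\delta(i) \,\equiv\, \sum_{j=1}^{d_0} \frac{\gamma_j\,(g_j^{=\delta} - \epsilon_j)}{(c_i - \mu_j)^2} \,=\, \sum_{j=1}^{d_0} M(i,j)\,(g_j^{=\delta} - \epsilon_j) \bmod I^{\delta+1}.$$
Because $M \in \F^{d_0 \times d_0}$ is invertible (Lemma \ref{lem-inv-det}), the unique solution of $M v_\delta = W_\delta$ satisfies $v_\delta(j) \equiv g_j^{=\delta} - \epsilon_j \bmod I^{\delta+1}$, so $g'_{j,\delta} = g'_{j,\delta-1} + v_\delta(j) \equiv g_j^{<\delta} + g_j^{=\delta} \equiv g_j^{\le \delta} \bmod I^{\delta+1}$, closing the induction.

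The main obstacle is not any single computation but the careful book-keeping of $I$-adic orders: the self-correction works precisely because the squared error $\epsilon_j^2$ lies in $I^{2\delta} \subseteq I^{\delta+1}$, so a single linear correction suffices even though the approximant $g'_{j,\delta-1}$ may carry arbitrary garbage at degrees $\ge\delta$. A secondary subtlety is to ensure every reciprocal has a genuine power-series meaning, which is exactly why the $c_i$'s were sampled distinct from all $\mu_j$'s in the first place.
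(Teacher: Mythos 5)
Your proof is correct and follows essentially the same route as the paper's: induction on $\delta$, writing $g'_{j,\delta-1}$ as $g_j^{<\delta}$ plus an $I^\delta$-error, linearizing the reciprocal via the series-inverse identity, cancelling against Claim \ref{lem-logDer-Id} at $y=c_i$, and invoking invertibility of $M$ (Lemma \ref{lem-inv-det}) so that the garbage term cancels in $g'_{j,\delta}=g'_{j,\delta-1}+v_\delta(j)$. The only cosmetic difference is that the paper splits the error into its homogeneous degree-$\delta$ part plus an $I^{\delta+1}$ remainder before applying Lemma \ref{lem-series-id}, whereas you handle a general $\epsilon_j\in I^\delta$ directly by noting the quadratic terms land in $I^{2\delta}\subseteq I^{\delta+1}$; both are fine.
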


\claimproof{clm-selfCorr}{
We prove this by induction on $\delta$. It is true for $\delta=0$ by definition. Suppose it is true for $\delta-1$. 
This means we have $ g_{i,\delta-1}' \equiv g_{i}^{<\delta} \bmod I^{\delta} $ for all $i$. Let us write $g_{i,\delta-1}'=: g_{i}^{<\delta}+ A_{i,\delta} + A_{i,\delta}' $, where $A_{i,\delta}' \equiv 0 \bmod  I^{\delta+1}$ and $A_{i,\delta}$ is homogeneous of degree $\delta$. 
Hence, for $i\in[d_0]$, the linear constraint is: $\sum_{j=1}^{d_0} \gamma_{j}\cdot v_\delta(j)/(c_i-\mu_{j})^2 \,\equiv\, \partial_{y} \tilde{f}/\tilde{f} \,-\, \partial_{y} \tilde{u_0}/\tilde{u_0} \,-\,  \sum_{j} \gamma_{j}/(c_i-g_{j,\delta-1}')  \bmod  I^{\delta+1} $.

\smallskip
The ``garbage'' term $A_{j,\delta}$ in RHS can be isolated using Lemma \ref{lem-series-id} as: 
$1/(c_i-g_{j,\delta-1}') \equiv  \frac{1}{c_i-\left(g_{j}^{<\delta}+ A_{j,\delta}\right)} 
   \equiv 1/(c_i-g_j^{<\delta}) \,+\, A_{j,\delta}/(c_i-\mu_j)^2 \bmod I^{\delta+1} $.
So, we get: 
$$ \sum_{j=1}^{d_0} \frac{\gamma_{j}\cdot v_\delta(j)}{(c_i-\mu_j)^2} \,\equiv\,  \frac{\partial_{y} \tilde{f}}{\tilde{f}} \,-\, \frac{\partial_{y} \tilde{u_0}}{\tilde{u_0}} \,-\, \sum_{j=1}^{d_0} \frac{\gamma_{j}}{c_i-g_{j}^{<\delta}} \,-\, \sum_{j=1}^{d_0} \frac{\gamma_{j}\cdot A_{j,\delta}}{(c_i-\mu_{j})^2}\, \bmod I^{\delta+1}\,.$$ 

Rewriting this, using Claim \ref{lem-logDer-Id}, we get:
$$\sum_{j=1}^{d_0} \frac{\gamma_{j}}{(c_i-\mu_{j})^2} \left(v_\delta(j)+ A_{j,\delta} \right) \,\equiv\, \sum_{j=1}^{d_0} \frac{\gamma_{j}}{(c_i-\mu_j)^2}\cdot g_{j}^{=\delta} \,\bmod I^{\delta+1} \,.$$

Thus, $\sum_{j=1}^{d_0} \gamma_{j}\cdot (v_\delta(j)+ A_{j,\delta} \,-\, g_{j}^{=\delta})/(c_i-\mu_{j})^2  \,\equiv\, 0  \,\bmod I^{\delta+1} $. As we vary $i\in[d_0]$ we deduce, by Lemma \ref{lem-inv-det}, that $ v_\delta(j)+A_{j,\delta} -  g_{j}^{=\delta} \,\equiv\, 0 \, \bmod I^{\delta+1} $. Hence, $g_{j,\delta}' = g_{j,\delta-1}'+ v_\delta(j)$ $\equiv\, (g_{j}^{<\delta}+ A_{j,\delta}) + (g_{j}^{=\delta}-A_{j,\delta})$ $=\, g_{j}^{\leq\delta} \, \bmod I^{\delta+1} $. This proves it for all $j\in[d_0]$.
}

\textbf{Size analysis:} Here we give the overall process of finding factors using allRootsNI technique and analyze the circuit size needed at each step to establish the size bound of the factors. As discussed before, we need to analyze only the power series root approximation $g_i^{\leq\delta}$ or $g_{i,\delta}'$.

At the $(\delta-1)$-th step of allRootsNI process, we have a multi-output circuit (with division gates) computing $g_{i,\delta-1}'$ as a rational function, for all $i\in[d_0]$. Specifically, let us assume that $g_{i,\delta-1}'=: C_{i,\delta-1}/D_{i,\delta-1}$, where $D_{i,\delta-1}$ is invertible in $\F[[\overline{x}]]$. So, the circuit computing $g_{i,\delta-1}'$ has a division gate at the top that outputs  $C_{i,\delta-1}/D_{i,\delta-1}$. We would eliminate this division gate only in the end (see the standard Lemma \ref{lem-div-elim}).  Now we show how to construct the circuit for $g_{i,\delta}'$, given the circuits for $g_{i,\delta-1}'$.

From $v_\delta \,=\, M^{-1}W_\delta$, it is clear that there exist field elements $\beta_{ij}$ such that $v_\delta(i)= \sum_{j=1}^{d_0} \beta_{ij}W_\delta(j) = \sum_{j=1}^{d_0} \beta_{ij} \left( (\partial_{y} \tilde{f}/\tilde{f} \,-\, \partial_{y} \tilde{u_0}/\tilde{u_0})\rvert_{y=c_j} - \tilde{G}_{j,\delta}\right)$.  

Initially we precompute, for all $j\in[d_0]$, $(\partial_{y} \tilde{f}/\tilde{f} \,-\, \partial_{y} \tilde{u_0}/\tilde{u_0} )\rvert_{y=c_{j}}$: Note that $\partial_y \tilde{f}$ has $\poly(s)$ size circuit (high degree of the circuit does not matter, see Lemma \ref{lem-derivative}). Invertibility of $\tilde{f} \rvert_{y=c_{j}}$ and $\tilde{u_0}\rvert_{y=c_{j}}$ follows from the fact that we chose $c_j$'s randomly. In particular, $\tilde{f}(\overline{0},y)$, and so $\tilde{u_0}(\overline{0},y)$, have roots in $\F$ which are distinct from $c_j$, $j\in[d_0]$. Thus, $\tilde{f}(\overline{x},c_j)$ and $\tilde{u_0}(\overline{x},c_j)$ have non-zero constants and so are invertible in $\F[[\overline{x}]]$. Similarly, 
$\gamma_{\ell}/(c_{j}-g_{\ell,\delta-1}')$ exists in $\F[[\overline{x}]]$.

Thus, the matrix recurrence allows us to calculate the polynomials $C_{i,\delta}$ and $D_{i,\delta}$, given their $\delta-1$ analogues, by adding poly$(d_0)$ many wires and nodes. The precomputations costed us size $\poly(s,\delta)$. Hence, both $C_{i,\delta}$ and $D_{i,\delta}$ has poly$(s,\delta,d_0)$ sized circuit. 

We can assume we have only one division gate at the top, as for each gate $G$ we can keep track of numerator and denominator of the rational function computed at $G$, and simulate all the algebraic operations easily in this representation. When we reach precision $\delta=d_0$, we can eliminate the division gate at the top. As $D_{i,d_0}$ is a unit, we can compute its inverse using the power series inverse formula and approximate only up to degree $d_0$ (Lemma \ref{div-elm}). Finally, the circuit for the polynomial $g_{i}^{\leq d_0} \equiv C_{i,d_0}/D_{i,d_0} \bmod I^{d_0+1}$, for all $i\in[d_0]$, has size poly$(s,d_0)$.
 
Altogether, it implies that any factor of $u_1$  has a circuit of size poly$(s,d_0)$.   
\end{proof}

\subsection{Low degree factors of general circuits: Proof of Theorem \ref{thm2}}
\label{sec-pf-thm2}
\vspace{-1mm}

Here, we introduce an approach to handle the general case when rad$(f)$ has  exponential degree. We show that allowing a special kind of modular division gate  gives a small circuit for any low degree factor of $f$.

The {\em modular division} problem is to show that if $f/g$ has a representative in $\F[[\overline{x}]]$, where polynomials $f$ and $g$ can be computed by a circuit of size $s$, then $f/g \bmod \langle \overline{x}^d \rangle$ can be computed by a circuit of size poly$(sd)$. Note that if $g$ is invertible in $\F[[\overline{x}]]$, then the question of modular division can be solved using Strassen's trick of division elimination \cite{strassen1973vermeidung}. But, in our case $g$ is not invertible in $\F[[\overline{x}]]$ (though $f/g$ is well-defined).

\begin{proof}[Proof of Theorem \ref{thm2}]
As discussed before, to show size bound for an arbitrary factor (with low degree) of $f$, it is enough to show the size bound for the approximations of power series roots. 
From Theorem \ref{thm-complete-split}, $ \tilde{f}(\overline{x},y) = f(\tau \overline{x})= k\cdot \prod_{i=1}^{d_0} (y-g_i)^{\gamma_i}$, with $ g_{i}(\overline{0}) :=\mu_i$ being distinct. 

Fix an $i$ from now on.
To calculate $g_{i}^{\leq \delta}$, we iteratively use Newton iteration with multiplicity (as described in Section \ref{sec-tech}) for $\log \delta+1$ many times. We know that there are rational functions $\hat{g}_{i,t}$ such that $\hat{g}_{i,t+1} := \hat{g}_{i,t} \,-\, \gamma_{i}\cdot \frac{\tilde{f}}{\partial_y\tilde{f}}\big\rvert_{y=\hat{g}_{i,t}}$ and $\hat{g}_{i,t} \equiv g_{i} \bmod \langle \overline{x} \rangle ^{2^t}$. We compute $\hat{g}_{i,t}$'s incrementally, $0\le t\le \log \delta+1$, by a circuit with division gates. As before, $\tilde{f}$ and $\partial_y\tilde{f}$ have poly($s$) size circuits.

If $\hat{g}_{i,t}$ has $S_t$ size circuit with division, then $S_{t+1}=S_t+ O(1)$. Hence, $\hat{g}_{i,\lg \delta+1}$ has poly$(s,\log\delta)$ size circuit with division.

By keeping track of numerator and denominator of the rational function computed at each gate, we can assume that the only division gate is at the top. As the size of $\hat{g}_{i,\log \delta+1}$ was initially poly$(s,\log\delta)$ with intermediate division gates, it is easy to see that when division gates are pushed at the top, it computes $A/B$ with size of both $A$ and $B$ still poly$(s,\log\delta)$. 

Finally, a degree $\delta$ polynomial factor $h|f$ will require us to estimate $g_{i}^{\leq \delta}$ for that many $i$'s. Thus, such a factor has poly$(s\delta)$ size circuit, using a single modular division.
\end{proof}

\subsection{Closure of restricted complexity classes: Proof of Theorem \ref{thm3}}
\label{sec-pf-thm3}
\vspace{-1mm}

This subsection is dedicated towards proving closure results for certain algebraic complexity classes. In fact, for ``practical" fields like $\Q, \Q_p$, or $\F_q$ for prime-power $q$, we give efficient randomized algorithm to output the complete factorization of polynomials belonging to that class (stated as Theorem \ref{thm-betterThm3}). We use the notation $g \mid \mid f$ to denote that $g$ divides $f$ but $g^2$ does not divide $f$. Again, we denote $I:=\langle x_1,\hdots,x_n \rangle$

\begin{proof}[Proof of Theorem \ref{thm3}]
There are essentially two parts in the proof. The first part talks only about the existential closure results. In the second part, we discuss the algorithm.

{\em Proof of closure:} Given $f$ of degree $d$, we randomly shift by $\tau: x_i \mapsto x_i+y\alpha_i+\beta_i$. From Theorem \ref{thm-complete-split} we have that $\tilde{f}(\overline{x},y):= f(\tau\overline{x})$ splits like $\tilde{f}=\prod_{i=1}^{d_0}(y-g_i)^{\gamma_i}$, with $g_i(\overline{0})=:\mu_i$ being distinct. Here is the detailed size analysis of the factors of polynomials represented by various models of our interest.

\smallskip\noindent 
\textbf{Size analysis for formula:} Suppose $f$ has a formula of size $n^{O(\log n)}$. To show size bound for all the factors, it is enough to show that the approximations of the power series roots, i.e.~$g_i^{\leq d}$ has size $n^{O(\log n)}$ size formula. This follows from the reduction of factoring to approximations of power series roots.

We differentiate $\tilde{f}$ wrt $y$, $(\gamma_i-1)$ many times, so that the multiplicity of the root we want to recover becomes exactly one. The differentiation would keep the size poly$(n^{\log n})$ (Lemma \ref{lem-derivative}). Now, we have $(y-g_i) \mid \mid  \tilde{f}^{(\gamma_i-1)}$ and we can apply classical Newton iteration formula (Section \ref{sec-tech}). For all $0\le t\le \log d+1$, we compute $A_t$ and $B_t$ such that $A_t/B_t \equiv g_i \bmod I^{2^t} $. Moreover, $B_t$ is invertible in $\F[[\overline{x}]]$ ($\because$ $g_i$ is a simple root of $\tilde{f}^{(\gamma_i-1)}$). 

To implement this iteration using the formula model, each time there would be a blow up of $d^2$. Note that in a formula, there can be many copies of the same variable in the leaf nodes and if we want to feed something in that variable, we have to make equally many copies. That means we may need to make $s$ ($=\siz(f)$) many copies at each step. We claim that it can be reduced to only $d^2$ many copies.

We can pre-compute (with blow up at most poly($sd$)) all the coefficients $C_0,\ldots,C_d $ wrt $y$, given the formula of $\tilde{f}=: C_0+C_1y+\ldots+C_dy^d$ using interpolation. We can do the same for the derivative formula. For details on this interpolation trick, see \cite[Lem.5.3]{saptharishi2016survey}. Using interpolation, we can convert the formula of $\tilde f$ and its derivative to the form $C_0+C_1y+\ldots+C_dy^d$. In this modified formula, there are $O(d^2)$ many leaves labelled as $y$. So in the modified formula of the polynomial $\tilde{f}$ and in its derivative, we are computing and plugging in (for $y$) $d^2$ copies of $g_i^{<2^t}$ to get $g_i^{<2^{t+1}}$. This leads to $d^2$ blow up at each step of the iteration. 

As $B_t$'s are invertible, we can keep track of the division gates across iterations and, in the end, eliminate them causing a one-time size blow up of $\poly(sd)$ (Lemma \ref{lem-div-elim}). 

Now, assume that $\text{size}(A_{t},B_{t}) \leq S_{t}$. Then we have $ S_{t+1} \leq O(d^2S_{t}) + \text{poly}(sd)$.  Finally, we have $S_{\log d+1} = \text{poly}(sd)\cdot d^{2\log d}= \,\text{poly}(n^{\log n}) $. 

Hence, $g_i^{\leq d} \equiv A_{\log d +1}/B_{\log d+1} \bmod I^{d+1} $ has poly$(n^{\log n})$ size formula, and so does every polynomial factor of $f$ after applying $\tau^{-1}$. 

\smallskip\noindent
\textbf{Size analysis for ABP:} This analysis is similar to that of the formula model, as the size blow up in each NI iteration for differentiation, division, and truncation (to degree $\le d$) is the same as that for formulas. A noteworthy difference is that we need to eliminate division in {\em every} iteration (Lemma \ref{div-elm}) and we cannot postpone it. This leads to a blow up of $d^4$ in each step. Hence, $S_{\lg d+1} = \text{poly}(sd)\cdot d^{4\log d}= \,\text{poly}(n^{\log n}) $. 

\smallskip\noindent
\textbf{Size analysis for VNP:} Suppose $f$ can be computed by a verifier circuit of size, and witness size, $n^{O(\log n)}$. We call both the verifier circuit size and witness size as size parameter.  Now, our given polynomial $\tilde{f}$ has $n^{O(\log n)}$ size parameters. As before, it is enough to show that $g_i^{\leq d}$ has $n^{O(\log n)}$ size parameters. 

For the preprocessing (taking $\gamma_i-1$-th derivative of $\tilde{f}$ wrt $y$), the blow up in the size parameters is only poly$(n^{\log n})$. Now we analyze the blow up due to classical Newton iteration. We compute $A_t$ and $B_t$ such that $A_t/B_t \equiv g_i \bmod I^{2^t}$. Using the closure properties of VNP (discussed in Section \ref{VNP}), we see that each time there is a blow up of $d^4$. The main reason for this blow up is due to the \emph{composition} operation, as we are feeding a polynomial into another polynomial. 

Assume that the verifier circuit $\text{size}(A_{t},B_{t}) \leq S_{t}$ and witness size $\leq W_{t}$. Then we have $S_{t+1} \leq O(d^4S_{t}) + \text{poly}(n^{\log n})$. So, finally we have $S_{\log d+1} = \text{poly}(sd)\cdot d^{4\log d}= \text{poly}(n^{\log n})$.  It is clear that $g_i^{\leq d} \equiv A_{\log d +1}/B_{\log d+1} \bmod I^{d+1}$ has poly$(n^{\log n})$ size verifer circuit. Same analysis works for $W_{t}$ and witness size remains $n^{O(\log n)}$. Moreover, we get the corresponding bounds for every polynomial factor of $f$ after applying $\tau^{-1}$. 

\medskip
Before moving to the constructive part, we discuss a new method for computing gcd of two polynomials, which not only fits well in the algorithm but is also of independent interest. We recall the definition of gcd of two polynomials $f,g$ in the ring $\F[\overline{x}]$: $\text{gcd}(f,g)=:h \iff h|f$, $h|g$ and ($h'|f, h'|g \Rightarrow h'|h$). It is unique up to constant multiples.

\begin{claim}[Computing formula gcd]\label{clm-gcd}
Given two polynomials $f,g \in \F[\overline{x}]$ of degree $d$ and computed by a formula (resp.\ ABP) of size $s$. One can compute a formula (resp.~ABP) for gcd$(f,g)$, of size poly$(s,d^{\log d})$, in randomized poly$(s,d^{\log d})$ time.
\end{claim}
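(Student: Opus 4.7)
The approach is to bypass Euclidean-style computation entirely and instead reduce $\gcd$ to identifying common power-series roots, reusing the Newton-iteration machinery built for Theorem \ref{thm3}.

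First I would apply a single random shift $\tau: x_i \mapsto x_i + \alpha_i y + \beta_i$ to both inputs. By Theorem \ref{thm-complete-split}, $\tilde f := f(\tau\overline{x})$ and $\tilde g := g(\tau\overline{x})$ are $y$-monic and split completely over $\F[[\overline{x}]]$ as $\tilde f = k_f \prod_i (y-p_i)^{a_i}$ and $\tilde g = k_g \prod_j (y-q_j)^{b_j}$, with the constant terms $p_i(\overline{0})$ (resp.\ $q_j(\overline{0})$) pairwise distinct. Since $\tau$ is invertible, $\gcd(f,g)(\tau\overline{x}) = \gcd(\tilde f, \tilde g)$, and since $\F[[\overline{x}]][y]$ is a UFD (Proposition \ref{prop-ufd}),
$$ \gcd(\tilde f, \tilde g) \;=\; \prod_{(i,j):\; p_i = q_j} (y - p_i)^{\min(a_i,b_j)}.$$

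The core of the algorithm is to identify the common roots by matching constant terms. I would factor the explicit univariate polynomials $\tilde f(\overline{0}, y), \tilde g(\overline{0}, y) \in \F[y]$ (obtained by interpolation from the given formulas/ABPs) using standard univariate factorization over $\F$ (efficient over $\Q, \Q_p, \F_q$); this yields the constant-term roots together with their multiplicities $a_i, b_j$. For each matched pair $(p_i, q_j)$ with $p_i(\overline{0}) = q_j(\overline{0})$, compute a formula/ABP approximation $p_i^{\leq d}$ of size $\poly(s, d^{\log d})$ via classical Newton iteration, exactly as in the proof of Theorem \ref{thm3}: differentiate $\tilde f$ in $y$ enough times to make the root simple, iterate $\log d$ times, then eliminate the accumulated division gates (Lemma \ref{lem-div-elim}). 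Finally assemble $H := \prod_{i \text{ common}} (y - p_i^{\leq d})^{\min(a_i, b_j)}$ by repeated squaring inside each exponent and divide-and-conquer across factors, so that both the formula and ABP sizes stay within $\poly(s, d^{\log d})$. Since $\gcd(\tilde f, \tilde g)$ has total degree $\le d$ and each $p_i^{\leq d} \equiv p_i \bmod I^{d+1}$, the exact gcd equals the truncation of $H$ to total degree $\le d$ in $\overline{x}$; extract this via the standard $\overline{x} \mapsto T\overline{x}$ / interpolate-in-$T$ trick (a $\poly(d)$ blow-up), then apply $\tau^{-1}$ to recover $\gcd(f,g)$.

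The main obstacle I anticipate is the soundness of matching roots by their constant terms. One must show that a generic $\tau$ separates the constant terms of the non-common power-series roots. The plan is to write $\tilde f = \tilde h\,\tilde f'$ and $\tilde g = \tilde h\,\tilde g'$ with $\tilde h = \gcd(\tilde f, \tilde g)$ and $\gcd(\tilde f', \tilde g') = 1$, and argue that the univariate resultant $\text{Res}_y\bigl(\tilde f'(\overline{0},y),\, \tilde g'(\overline{0},y)\bigr)$ is a nonzero polynomial in $(\overline{\alpha}, \overline{\beta})$, hence by Schwartz--Zippel it is nonvanishing at the random shift with high probability. This forces $p_i = q_j \iff p_i(\overline{0}) = q_j(\overline{0})$, validating the matching step and closing the argument.
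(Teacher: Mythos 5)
Your proposal is correct and follows essentially the same route as the paper's proof: random shift via Theorem \ref{thm-complete-split}, identification of the gcd with the product of common power-series roots (using the UFD property, Proposition \ref{prop-ufd}), matching roots by constant terms obtained from univariate factorization, Newton iteration after differentiation to approximate each root to degree $d$, and reassembly plus $\tau^{-1}$. The soundness issue you flag at the end is exactly what the paper handles by invoking Lemma \ref{lem-coprime} (coprimality preserved under a random shift, proved via a nonvanishing resultant and Schwartz--Zippel), so your resultant argument is the same justification made explicit.
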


\claimproof{clm-gcd}{
The idea is the following. Suppose, gcd$(f,g)=: h$ is of degree $d>0$, then we will compute $h(\tau \overline{x})$ for a random map $\tau$ as in Theorem \ref{thm-complete-split}. We know wlog that $\tilde{f}:=$ $f(\tau \overline{x})= \prod_i (y-A_i)^{a_i}$ and $\tilde{g}:=$ $g(\tau \overline{x})= \prod_i (y-B_i)^{b_i}$, where $A_i,B_i \in \F[[\overline{x}]]$. Since $\F[\overline{x}]\subset \F[[\overline{x}]]$ are UFDs (Proposition \ref{prop-ufd}), we could say wlog that $h(\tau \overline{x})= \prod_{i \in S} (y-A_i)^{\min(a_i,b_i)}$, where $S=\{ i \mid A_i= B_i \}$ after possible rearrangement. Now, as $\tau$ is a random invertible map, we can assume that, for $i\ne j$, $A_i \neq B_j$ and that $A_i(\overline{0}) \neq B_j(\overline{0})$ (Lemma \ref{lem-coprime}). So, it is enough to compute $A_i^{\leq d}$ and $B_j^{\leq d}$ and compare them using evaluation at $\overline{0}$. If indeed $A_i=B_i$, then $A_i^{\leq d}=B_i^{\leq d}$. If they are not, they mismatch at the constant term itself! Hence, we know the set $S$ and so we are done once we have the power series roots with repetition.

\smallskip
Using univariate factoring, wrt $y$, we get all the multiplicities, of the roots, $a_i$ and $b_i$'s, additionally we get the corresponding starting points of classical Newton iteration, i.e.~$A_i(\overline{0})$ and $B_i(\overline{0})$'s. Using NI, one can compute $A_i^{\leq d}$ and $B_i^{\leq d}$, for all $i$. Suppose, after rearrangement of $A_i$ and $B_i$'s (if necessary), we have $A_i=B_i$ for $i \in [s]=:S$ and $A_i \neq B_j $ for $i \in [s+1,d], j\in [s+1,d]$. Lemma \ref{lem-coprime} can be used to deduce that $A_i(\overline{0}) \neq B_j(\overline{0}) $ for $i,j \in [1,d]-S$.  So, we have in $\text{gcd}(\tilde{f},\tilde{g})= \prod_{i \in S} (y-A_i)^{\min(a_i,b_i)}$: the index set $S$, the exponents and $A_i(\overline{0})$'s computed. 
 
 \emph{Size analysis:}
 We compute $A_i^{\leq d}$ and $B_i^{\leq d}$ by NI, (possibly) after making the corresponding multiplicity one by differentiation. It is clear that at each NI step there will be a multiplicative $d^2$ blow up (due to interpolation, division and truncation). There are $\log d$ iterations in NI. Altogether the truncated roots have poly$(s,d^{\log d})$ size formula (resp.\ ABP). This directly implies that gcd$(\tilde{f},\tilde{g})$ has poly$(s,d^{\log d})$ size formula (resp.\ ABP). By taking the product of the linear factors, truncating to degree $d$, and applying $\tau^{-1}$, we can compute the polynomial $\gcd(f,g)$. 
 
Randomization is needed for $\tau$ and possibly for the univariate factoring over $\F$. Also, it is important to note that $\F$ may not be algebraically closed. Then one has to go to an extension, do the algebraic operations and return back to $\F$. For details, see Section \ref{sec-not-clos}.
}

\noindent
\textbf{Randomized Algorithm. } We give the broad steps of our algorithm below. We are given $f \in \F[\overline{x}]$, of degree $d>0$, as input.
\begin{enumerate}

\item Choose $\overline{\alpha},\overline{\beta} \in_r \F^n$ and apply $\tau: x_i \rightarrow x_i + \alpha_iy+\beta_i$. Denote the transformed polynomial  $f(\tau \overline{x})$ by $\tilde{f}(\overline{x},y)$.
Wlog, from Theorem \ref{thm-complete-split}, $\tilde{f}$ has factorization of the form $\prod_{i=1}^{d_0} (y-g_i)^{\gamma_i}$, where $\mu_i:=g_i(\overline{0})$ are distinct.

\item Factorize $\tilde{f}(\overline{0},y)$ over $\F[y]$. This will give $\gamma_i$ and $\mu_i$'s.

\item Fix $i=i_0$. Differentiate $\tilde{f}$, wrt $y$, ($\gamma_{i_0}-1$) many times to make $g_{i_0}$ a simple root.
 
\item Apply Newton iteration (NI), on the differentiated polynomial, for $k:= \lceil\log (2d^2+1)\rceil$ iterations; starting with the approximation $\mu_{i_0}$ (mod $I$). We get $g_{i_0}^{< 2^k}$ at the end of the process (mod $I^{2^k}$).

\item Apply the transformation $x_i \mapsto Tx_i$ ($T$ acts as a degree-counter). Consider $\tilde{g}_{i_0} := g_{i_0}^{< 2^k}(T\overline{x})$. Solve the following homogeneous linear system of equations, over $\F[\overline{x}]$, in the unknowns $u_{ij}$ and $v_{ij}$'s, 
$$\sum_{0\le i+j <d} u_{ij} \cdot y^i T^j \,\,=\,\, (y-\tilde{g}_{i_0})\cdot \sum_{\substack{0\le i<d\\ 0\le j< 2^k} } v_{ij}\cdot y^iT^j  \,\,\bmod T^{2^k} \,.$$ 
Solve this system, using Lemma \ref{lem-linsyst}, to get a nonzero polynomial (if one exists) $u:= \sum_{0\le i+j <d} u_{ij} \cdot y^i T^j$.

\item If there is no solution, return ``$f$ is irreducible''.

\item Otherwise, find the minimal solution wrt deg$_y(u)$ by brute force (try all possible degrees wrt $y$; it is in $[d-1]$).

\item Compute $G(\overline{x},y,T):= \gcd_y (u(\overline{x},y,T), \tilde{f}(T\overline{x},y) )$ using Claim \ref{clm-gcd}. 

\item Compute $G(\overline{x},y,1)$ and transform it by $\tau^{-1}: x_i \mapsto x_i - \alpha_iy - \beta_i$, $i\in[n]$, and $y\mapsto y$. Output this as an irreducible polynomial factor of $f$.
\end{enumerate}

\begin{claim}[Existence]\label{clm-exist}	
If $f$ is reducible, then the linear system (Step 5) has a non-trivial solution.
\end{claim}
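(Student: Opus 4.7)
The plan is to exhibit an explicit nontrivial solution coming from a proper polynomial factor of $\tilde{f}$ that absorbs the linear power-series factor $(y-g_{i_0})$. Since $f$ is reducible and $\tau$ is invertible, $\tilde{f} = f(\tau\overline{x})$ is reducible in $\F[\overline{x}, y]$. Combining the power-series split $\tilde{f}=\prod_{i=1}^{d_0}(y-g_i)^{\gamma_i}$ from Theorem \ref{thm-complete-split} with the UFD property of $\F[[\overline{x}]][y]$ from Proposition \ref{prop-ufd}, every irreducible polynomial factor of $\tilde{f}$ corresponds to a product of some of the $(y-g_i)^{c_i}$'s. In particular, I can pick an irreducible polynomial factor $h \in \F[\overline{x},y]$ of $\tilde{f}$ for which $h = (y-g_{i_0}) \cdot q$ in $\F[[\overline{x}]][y]$ for some $q \in \F[[\overline{x}]][y]$; reducibility of $\tilde{f}$ then forces $\deg(h) \le d-1$.

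Next I pull the degree-counter $T$ into the factorization. Using the elementary fact that a form $p$ homogeneous of degree $\ell$ satisfies $p(T\overline{x}) = T^\ell p(\overline{x})$, we have
$$g_{i_0}(T\overline{x}) \,=\, \sum_{\ell \ge 0} g_{i_0}^{=\ell}(\overline{x})\, T^\ell \,\equiv\, g_{i_0}^{<2^k}(T\overline{x}) \,=\, \tilde{g}_{i_0} \pmod{T^{2^k}}.$$
Setting $u := h(T\overline{x}, y)$ and $v := q(T\overline{x}, y) \bmod T^{2^k}$, the identity $h(T\overline{x}, y) = (y - g_{i_0}(T\overline{x})) \cdot q(T\overline{x}, y)$ collapses, modulo $T^{2^k}$, to precisely the required equation $u \equiv (y - \tilde{g}_{i_0}) \cdot v \pmod{T^{2^k}}$. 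Moreover $u \ne 0$ because $h \ne 0$, so the resulting $(u_{ij}, v_{ij})$ is a nontrivial solution.

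What remains is bookkeeping to confirm the index ranges of the system. Every monomial $\overline{x}^{\alpha} y^i$ of $h$ satisfies $|\alpha|+i \le \deg(h) < d$, and the substitution $\overline{x}\mapsto T\overline{x}$ sends it to $T^{|\alpha|} \overline{x}^{\alpha} y^i$; hence $u = \sum_{i+j<d} u_{ij}(\overline{x})\, y^i T^j$ with $u_{ij} \in \F[\overline{x}]$. For $v$, writing $q = \sum_i q_i(\overline{x}) y^i$ with $q_i \in \F[[\overline{x}]]$, the coefficient of $y^i T^j$ in $q(T\overline{x},y)$ is the homogeneous degree-$j$ part $q_i^{=j}(\overline{x})$, which is a genuine polynomial in $\F[\overline{x}]$; combined with $\deg_y(q) \le d-2$ and the truncation mod $T^{2^k}$, this gives $v = \sum_{i<d,\, j<2^k} v_{ij}(\overline{x})\, y^i T^j$ as required. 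The only subtle step is the initial appeal to the power-series UFD structure to extract the factor $h$ that swallows $(y-g_{i_0})$; everything after that is routine substitution and truncation.
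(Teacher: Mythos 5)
Your proposal is correct and follows essentially the same route as the paper: take the irreducible polynomial factor of $\tilde{f}$ (equivalently, $\tilde{f_1}=f_1(\tau\overline{x})$ in the paper) that $(y-g_{i_0})$ divides over $\F[[\overline{x}]][y]$, substitute $\overline{x}\mapsto T\overline{x}$, truncate mod $T^{2^k}$, and check the degree constraints. The only difference is cosmetic — you extract the factor via the UFD property of the power series ring and spell out the coefficient bookkeeping that the paper leaves implicit.
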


\claimproof{clm-exist}{
If $f$ is reducible, then let $f=\prod f_i^{e_i}$ be its prime factorization. Assume wlog that $y-g_{i_0} \mid \tilde{f_1}:= f_1(\tau \overline{x})$. Of course $0< \deg_y(\tilde{f_1})= \text{deg}(f_1) < d$. 

Observe that we are done by picking $u$ to be $\tilde{f_1}(T\overline{x},y)$. For, total degree of $f_1$ is $<d$, and so that of $\tilde{f_1}(T\overline{x},y)$ wrt the variables $y, T$ is $<d$. 

Moreover, $y-g_{i_0} \mid \tilde{f_1} \implies \tilde{f_1}=(y-g_{i_0})v$, for some $v\in\F[[\overline{x}]][y]$ with $\deg_y<d$. Hence, $\tilde{f_1} \equiv (y-g_{i_0}^{< 2^k})\cdot v \, \bmod I^{2^k} \implies u \equiv (y-\tilde{g}_{i_0})\cdot v(T\overline{x},y) \,\bmod T^{2^k}$. This shows the existence of a nontrivial solution of the linear system (Step 5).
}

Now, we show that if the linear system has a solution, then the solution corresponds to a non-trivial polynomial factor of $f$.

\begin{claim}[Step 8's success]\label{clm-constr}
If the linear system (Step 5) has a non-trivial solution, then $0< \deg_y G \le \deg_y u <d$. \end{claim}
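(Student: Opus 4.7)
The plan is to handle the three inequalities separately. The upper bound $\deg_y u < d$ is immediate from the definition of $u$ in Step 5, since the total-degree constraint $\sum_{0\le i+j<d} u_{ij}\, y^i T^j$ forces $i \le d-1$. The middle inequality $\deg_y G \le \deg_y u$ is also immediate, because $G := \gcd_y(u, \tilde f(T\overline{x}, y))$ divides $u$ in $\F[\overline{x}, T][y]$ by definition. The real content is showing $\deg_y G > 0$, i.e.\ that $u$ and $\tilde f(T\overline{x}, y)$ share a common $y$-factor.

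To handle that, my plan is to study the resultant $R := \mathrm{Res}_y\bigl(u,\, \tilde f(T\overline{x}, y)\bigr) \in \F[\overline{x}, T]$: since $\tilde f(T\overline{x}, y)$ is monic in $y$, Gauss's lemma reduces ``$\deg_y G > 0$'' to ``$R = 0$''. I would first show that $R \equiv 0 \bmod T^{2^k}$. By the standard Bezout identity associated with the Sylvester determinant, there exist $A, B \in \F[\overline{x}, T][y]$ with $A\,u + B\,\tilde f(T\overline{x}, y) \,=\, R$. Substituting $y \mapsto g_{i_0}(T\overline{x}) \in \F[[T, \overline{x}]]$: the second summand vanishes \emph{exactly}, because $y = g_{i_0}(T\overline{x})$ is a genuine power-series root of $\tilde f(T\overline{x}, y)$ (from the factorization of Theorem~\ref{thm-complete-split} applied to $\tilde f$, followed by $\overline{x} \mapsto T\overline{x}$). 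For the first summand, the linear system (Step 5) gives $u \equiv (y - \tilde g_{i_0})\, v \bmod T^{2^k}$, and because $\tilde g_{i_0} = g_{i_0}^{<2^k}(T\overline{x}) \equiv g_{i_0}(T\overline{x}) \bmod T^{2^k}$, substituting $y = g_{i_0}(T\overline{x})$ yields $u|_{y = g_{i_0}(T\overline{x})} \equiv 0 \bmod T^{2^k}$ in $\F[[T, \overline{x}]]$. Hence $R \equiv 0 \bmod T^{2^k}$.

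To close the argument I would bound $\deg_T R$ using the Sylvester matrix. The $y$-coefficients of $u$ have $T$-degree $\le d-1$ (total degree in $y, T$ is $<d$), and the $y$-coefficients of $\tilde f(T\overline{x}, y)$ are of the form $c_i(T\overline{x})$ with $\deg_T c_i(T\overline{x}) \le d-i \le d$. The Sylvester matrix has size at most $2d-1$, so expanding the determinant gives $\deg_T R \le 2d^2$. Since the precision chosen in Step 4 is $k = \lceil \log(2d^2+1)\rceil$, i.e.\ $2^k \ge 2d^2 + 1 > \deg_T R$, the congruence $R \equiv 0 \bmod T^{2^k}$ forces $R = 0$ outright, proving $\deg_y G > 0$.

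The main obstacle is the calibration in Step 4: the precision $2^k$ has to strictly exceed the Sylvester bound $2d^2$, otherwise a nonzero $R$ could be divisible by $T^{2^k}$ and the argument collapses. A secondary bookkeeping point is making sure that the evaluation $y \mapsto g_{i_0}(T\overline{x})$ is carried out inside the power-series ring $\F[[T, \overline{x}]]$ (where $g_{i_0}(T\overline{x})$ genuinely lives), rather than in $\F[\overline{x}, T]/(T^{2^k})$, so that the identity $\tilde f(T\overline{x}, g_{i_0}(T\overline{x})) = 0$ is exact and not merely modular.
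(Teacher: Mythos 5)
Your proposal is correct and follows essentially the same route as the paper's proof: both rest on the Bezout identity $a\,u + b\,\tilde f(T\overline{x},y) = \mathrm{Res}_y(u,\tilde f(T\overline{x},y))$, the congruence $u \equiv (y-\tilde g_{i_0})v \bmod T^{2^k}$ together with $\tilde g_{i_0}\equiv g_{i_0}(T\overline{x}) \bmod T^{2^k}$, and the bound $\deg_T \mathrm{Res} \le 2d^2 < 2^k$ to force the resultant to vanish, whence $\deg_y G>0$ by the resultant--gcd correspondence. The only difference is presentational: you argue directly by substituting the exact power-series root $g_{i_0}(T\overline{x})$, whereas the paper argues by contradiction substituting the truncated root $\tilde g_{i_0}$ into the congruence.
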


\claimproof{clm-constr}{
Suppose $(u,v)$ is the solution provided by the algorithm in Lemma \ref{lem-linsyst} ($u$ being in the unknown LHS and $v$ being the unknown RHS). Consider $G=\gcd_y(u,\tilde{f}(Tx,y))$. We know that there are polynomials $a$ and $b$ such that $au+b \tilde{f}(Tx,y)=\text{Res}_y(u,\tilde{f}(Tx,y))$ (Section \ref{sec-res}). Consider $\deg_T(\text{Res}_y(u,\tilde{f}(Tx,y))$. As degree of $T$ in $u$ and $\tilde{f}(Tx,y)$ can be at most $d$, hence degree of $T$ in Resultant can be atmost $2d^2$ (Section \ref{sec-res}). Clearly, $\deg_y G \le \deg_y u< d$. If $\deg_y G =0$ then the resultant of $u, \tilde{f}(T\overline{x},y)$ wrt $y$ will be nonzero (Proposition \ref{prop-res}). Suppose the latter happens.

Now, we have $u = (y-\tilde{g}_{i_0})v \,\bmod T^{2^k}$. Since $y-g_{i_0} \mid \tilde{f}$ we get that $y-g_{i_0}(T\overline{x}) \mid \tilde{f}(T\overline{x},y)$. Assume that $\tilde{f}(Tx,y)=: (y-g_{i_0}(T\overline{x}))\cdot w$. 

Thus, we can rewrite the previous equation as: $au+b \tilde{f}(T\overline{x},y)\equiv$
$(y-\tilde{g}_{i_0})(av+bw) \equiv \text{Res}_y(u,\tilde{f}(Tx,y)) \,\bmod T^{2^k}$. Note that the latter is nonzero mod $T^{2^k}$ because the resultant is a nonzero polynomial of deg$_T$ $<2^k$.  Putting $y=\tilde{g}_{i_0}$ the LHS vanishes, but RHS does not ($\because$ it is independent of $y$). This gives a contradiction.

Thus, $\text{Res}_y(u,\tilde{f}(Tx,y) =0$. This implies that $0<\deg_y G <d$.
}

Next we show that if one takes the minimal solution $u$ (wrt degree of $y$), then it will correspond to an irreducible factor of $f$. We will use the same notation as above.

\begin{claim}[Irred.~factor]\label{clm-irred-fac}
Suppose $y-g_{i_0} \mid \tilde{f_1}$ and $f_1$ is an irreducible factor of $f$. Then, $G=c\cdot\tilde{f_1}(Tx,y)$, for $c\in\F^*$, and $\deg_y(G)=\deg_y(u)=\deg_y(f_1)$ in Step 8.
\end{claim}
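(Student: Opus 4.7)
The strategy is to show that $\tilde f_1(T\overline x, y) \mid u$ in $\F(\overline x, T)[y]$; combined with Claim~\ref{clm-exist} (which forces $\deg_y u \le \deg f_1$ via the minimality choice in Step~7), this will pin $G$ down exactly. The technical device is the univariate resultant
\[
R \;:=\; \text{Res}_y\bigl(u,\,\tilde f_1(T\overline x, y)\bigr) \;\in\; \F[\overline x, T],
\]
which we shall argue vanishes identically. The preparatory step is to show $u(\overline x, T, g_{i_0}(T\overline x)) \equiv 0 \bmod T^{2^k}$ in $\F[[\overline x, T]]$. Substituting $y = \tilde g_{i_0}$ into the linear-system identity $u \equiv (y - \tilde g_{i_0})\,v \bmod T^{2^k}$ immediately gives $u(\overline x, T, \tilde g_{i_0}) \equiv 0 \bmod T^{2^k}$. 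Since $\tilde g_{i_0} = g_{i_0}^{<2^k}(T\overline x)$ and the substitution $\overline x \mapsto T\overline x$ multiplies the degree-$\ell$ homogeneous piece of $g_{i_0}$ by $T^\ell$, truncating the $\overline x$-degree at $<2^k$ is the same as truncating the $T$-degree at $<2^k$; hence $\tilde g_{i_0} \equiv g_{i_0}(T\overline x) \bmod T^{2^k}$, and the congruence transports to $g_{i_0}(T\overline x)$ because substituting congruent quantities into the polynomial $u$ preserves the congruence.

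Now apply the power-series split of $\tilde f_1$. By Theorem~\ref{thm-complete-split} there is a subset $S_1 \subseteq [d_0]$ with $\tilde f_1(T\overline x, y) = \prod_{j \in S_1}(y - g_j(T\overline x))$ in $\F[[\overline x, T]][y]$, and by hypothesis $i_0 \in S_1$. Monicness of $\tilde f_1$ in $y$ lets us apply the classical product-of-roots formula to obtain $R = \prod_{j \in S_1} u(\overline x, T, g_j(T\overline x))$ inside $\F[[\overline x, T]]$; the factor indexed by $j = i_0$ is $\equiv 0 \bmod T^{2^k}$ by the preparatory step, so $R \equiv 0 \bmod T^{2^k}$. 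But $R \in \F[\overline x, T]$, and the Sylvester-matrix bound gives $\deg_T R \le 2d^2$ (the same bound already used in Claim~\ref{clm-constr}), while Step~4 ensures $2^k \ge 2d^2 + 1$. A polynomial in $T$ of degree below $2^k$ that is divisible by $T^{2^k}$ must vanish, so $R = 0$. Hence $u$ and $\tilde f_1(T\overline x, y)$ share a common factor of positive $y$-degree in $\F(\overline x, T)[y]$. Since $\tilde f_1(T\overline x, y)$ is irreducible in $\F(\overline x, T)[y]$ (any nontrivial monic-in-$y$ factorization would, upon specialising $T = 1$, yield one for $\tilde f_1(\overline x, y) \in \F[\overline x, y]$, contradicting irreducibility of $f_1$), we conclude $\tilde f_1(T\overline x, y) \mid u$.

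Putting things together: $\tilde f_1(T\overline x, y) \mid u$ and $\tilde f_1(T\overline x, y) \mid \tilde f(T\overline x, y)$ yield $\tilde f_1(T\overline x, y) \mid G$; conversely $G \mid u$ together with $\deg_y u \le \deg f_1 = \deg_y \tilde f_1(T\overline x, y)$ gives $\deg_y G \le \deg f_1$. These force $\deg_y G = \deg f_1 = \deg_y u$, and since $G$ and $\tilde f_1(T\overline x, y)$ are (up to units in $\F[\overline x, T]$) monic-in-$y$ factors of $\tilde f(T\overline x, y)$ of the same $y$-degree with $\tilde f_1 \mid G$, they must be associates: $G = c \cdot \tilde f_1(T\overline x, y)$ for some $c \in \F^*$. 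The conceptual hurdle throughout is that Claim~\ref{clm-constr} only delivered $\text{Res}_y(u, \tilde f(T\overline x, y)) = 0$, which by multiplicativity of the resultant merely forces \emph{some} irreducible factor of $\tilde f$ to divide $u$; evaluating the finer resultant against $\tilde f_1$ at the specific power-series root $g_{i_0}$ (using $i_0 \in S_1$) is precisely what isolates the correct factor $\tilde f_1$.
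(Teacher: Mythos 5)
Your proposal is correct, but the route to the key divisibility $\tilde{f_1}(T\overline{x},y)\mid G$ is genuinely different from the paper's. The paper works inside the power-series UFD: from $y-\tilde{g}_{i_0}\mid G \bmod T^{2^k}$ it lifts to the absolute divisibility $y-g_{i_0}(T\overline{x})\mid G$ (using Theorem~\ref{thm-complete-split} and Proposition~\ref{prop-ufd}), and then irreducibility of $\tilde{f_1}(T\overline{x},y)$ gives $\tilde{f_1}(T\overline{x},y)\mid G$ directly. You instead re-run the resultant mechanism of Claim~\ref{clm-constr} against the single irreducible factor: you first verify $u(\overline{x},T,g_{i_0}(T\overline{x}))\equiv 0 \bmod T^{2^k}$ (the transport from $\tilde{g}_{i_0}$ to $g_{i_0}(T\overline{x})$ is justified correctly), feed this into the product-of-roots formula for $R=\text{Res}_y\bigl(u,\tilde{f_1}(T\overline{x},y)\bigr)$ over $\F[[\overline{x},T]]$, and use $\deg_T R\le 2d^2<2^k$ (the precision fixed in Step~4) to force $R=0$; Proposition~\ref{prop-res} together with irreducibility of $\tilde{f_1}(T\overline{x},y)$ in $\F(\overline{x},T)[y]$ (your $T=1$ specialization argument for this is sound, since the leading $y$-coefficient is a nonzero field constant) then yields the stronger fact $\tilde{f_1}(T\overline{x},y)\mid u$, whence $\tilde{f_1}(T\overline{x},y)\mid\gcd_y(u,\tilde{f}(T\overline{x},y))=G$. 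The second halves of the two proofs coincide: both invoke the witness $u=\tilde{f_1}(T\overline{x},y)\bmod T^{2^k}$ of Claim~\ref{clm-exist} and the minimality in Step~7 to cap $\deg_y u$ at $\deg(f_1)$, then compare degrees to conclude $G=c\cdot\tilde{f_1}(T\overline{x},y)$. What your route buys: it is quantitative and self-contained, reusing only the precision bound and the resultant facts already deployed, it produces the extra information $\tilde{f_1}(T\overline{x},y)\mid u$, and it bypasses the paper's tersely justified intermediate assertion that $y-\tilde{g}_{i_0}$ divides the gcd $G$ modulo $T^{2^k}$. What the paper's route buys: no second resultant computation, with the divisibility read off directly from the power-series factorization of $G$ as a factor of $\tilde{f}(T\overline{x},y)$. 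Both write-ups leave the same minor points implicit (e.g.\ that $\deg_y u>0$, which is supplied by Claim~\ref{clm-constr}, and that the ``nonzero constant multiple'' reflects the monic normalization of the gcd in Step~8), so your level of rigor matches the paper's.
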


\claimproof{clm-irred-fac}{
Suppose $f$ is reducible, hence as shown above, $G$ is a non-trivial factor of $\tilde{f}(T\overline{x}, y)$. Recall that $\tilde{f}(T\overline{x},y)= \prod_i (y-g_i(T\overline{x}))^{\gamma_i}$ is a factorization over $\F[[\overline{x}, T]]$. We have that $y-\tilde{g}_{i_0} \mid G$ mod $T^{2^k}$. Thus, $y-g_{i_0}(T\overline{x}) \mid G$ absolutely ($\because$ the power series ring is a UFD and use Theorem \ref{thm-complete-split}). So, $y-g_{i_0}(T\overline{x}) \mid \gcd_y(G, \tilde{f_1}(T\overline{x},y))$ over the power series ring. Since, $\tilde{f_1}(T\overline{x},y)$ is an irreducible polynomial, we can deduce that $\tilde{f_1}(T\overline{x},y) \mid G$ in the polynomial ring. So, $\deg_y(f_1)\le \deg_y(G)$.

We have $\deg_y(\tilde{f_1}(T\overline{x},y)) = \deg(f_1) =: d_1$. By the above discussion, the linear system in Step 7 will not have a solution of $\deg_y(u)$ below $d_1$. Let us consider the linear system in Step 7 that wants to find $u$ of $\deg_y=d_1$. This system has a solution, namely the one with $u:= \tilde{f_1}(T\overline{x},y) \bmod T^{2^k}$. Then, by the above claim, we will get the $G$ as well in the subsequent Step 8. This gives $\deg_y(G)\le \deg_y(u)=d_1$. With the previous inequality we get $\deg_y(G)=\deg_y(u)=\deg_y(f_1)$. In particular, $G$ and $\tilde{f_1}(Tx,y)$ are the same up to a nonzero constant multiple.
}

{\em Alternative to Claim \ref{clm-gcd}:}
The above proof (Claim \ref{clm-irred-fac}) suggests that the gcd question of Step 8 is rather special: One can just write $u$ as $\sum_{0\le i\le d_1} c_i(\overline{x},T) y^i$ and then compute the polynomial $G=\sum_{0\le i\le d_1} (c_i/c_{d_1})\cdot y^i$ as a formula (resp.~ABP), by eliminating division (Lemma \ref{div-elm}). 

Once we have the polynomial $G$ we can fix $T=1$ and apply $\tau^{-1}$ to get back the irreducible polynomial factor $f_1$ (with power series root $g_{i_0}$).

The running time analysis of the algorithm is by now routine. If we start with an $f$  computed by a formula (resp.~ABP) of size $n^{O(\log n)}$, then as observed before, one can compute $\tilde{g}_{i_0}$ which has $n^{O(\log n)}$ size formula (resp.~ABP). This takes care of Steps 1-4.

Now, solve the linear system in Steps 5-7 of the algorithm. Each entry of the matrix is a formula (resp.~ABP) size $n^{O(\log n)}$. The time complexity is similar by invoking Lemma \ref{lem-linsyst}.

Steps 8 is to compute gcd of two $n^{O(\log n)}$ size formulas (resp.~ABPs) which again can be done in $n^{O(\log n)}$ time giving a  size $n^{O(\log n)}$ formula (resp.~ABP) as discussed above. 

This completes the randomized $\poly(n^{\log n})$-time algorithm that outputs $n^{O(\log n)}$ sized factors.  

\end{proof}

\noindent {\bf Remarks. } \begin{enumerate}
\item The above results hold true for the classes $VBP(s),VF(s),VNP(s)$ for any size function $s = n^{\Omega(\log n)}$. 

\item By using a {\em reversal} technique \cite[Sec.1.1.2]{oliveira2016factors} and a modified $\tau$, our size bound can be shown to be $\poly(s,d^{\log r})$, where $r$ (resp.~$d$) is the individual-degree (resp.~degree) bound of $f$. So, when $r$ is constant, we get a factor as a $\text{poly}(s)$-size formula (resp.\ ABP). Oliveira \cite{oliveira2016factors} proved the same result for formulas. But, \cite{oliveira2016factors} used {\em slow} Newton iteration and in each iteration the method was different, owing to which the size was $\poly(s,d^r)$. 

\item By the above remark, our result can be extended to prove closure result for polynomials in VNP with {\em constant} individual degree. There are very interesting polynomials in this class, namely Permanent.
\end{enumerate}

\section{Extensions}\label{sec-extn}
\vspace{-1mm}

\subsection{Closure of approximative complexity classes}\label{sec-approx}
\vspace{-1mm}

In this section, we show that all our closure results, under factoring, can be naturally generalized to corresponding approximative algebraic complexity classes.

In computer science, the notion of approximative algebraic complexity emerged in early works on matrix multiplication (the notion of border rank, see \cite{burgisser2013algebraic}). It is also an important concept in the geometric complexity theory program (see \cite{grochow2016boundaries}). The notion of approximative complexity can be motivated through two ways, \emph{topological} and \emph{algebraic} and both the perspectives are known to be equivalent. Both allow us to talk about the \emph{convergence} $\epsilon\rightarrow0$. 

In what follows, we can see $\epsilon$ as a formal variable and $\F(\epsilon)$ as the function field. For an algebraic complexity class $C$, the approximation is defined as follows \cite[Defn.2.1]{bringmann2017algebraic}.

\begin{definition}[Approximative closure of a class \cite{bringmann2017algebraic}]
Let $C$ be an algebraic complexity class over field $\F$. A family $(f_n)$ of polynomials from $\F[\overline{x}]$ is in the {\em class $\overline{C}(\F)$}
if there are polynomials $f_{n;i}$ and a function $t:\mathbb{N} \mapsto \mathbb{N}$ such that $g_n$  is in the class $C$ over the field $\F(\epsilon)$ with $g_n(\overline{x})= f_n(\overline{x})+ \epsilon f_{n;1}(\overline{x})+ {\epsilon}^2 f_{n;2}(x)+ \ldots+ {\epsilon}^{t(n)} f_{n;t(n)}(\overline{x})$.
\end{definition}

The above definition can be used to define closures of classes like VF, VBP, VP, VNP which are denoted as $\overline{\text{VF}}$, $\overline{\text{VBP}}$, $\overline{\text{VP}}$, $\overline{\text{VNP}}$ respectively. In these cases one can assume wlog that the degrees of $g_n$ and $f_{n;i}$ are $\poly(n)$.

{\em Following B\"urgisser \cite{burgisser2001complexity}:-}
Let $K:=\F(\epsilon)$ be the rational function field in variable $\epsilon$ over the field $\F$. Let $R$ denote the subring of $K$ that consists of rational functions defined in $\epsilon=0$. Eg.~$1/\epsilon\notin R$ but $1/(1+\epsilon)\in R$.

\begin{definition}\cite[Defn.3.1]{burgisser2001complexity}
Let $f \in \F[x_1,\ldots,x_n]$. The {\em approximative complexity} $\overline{\siz}(f)$ is the smallest number $r$, such that there exists $F$ in $R[x_1,\ldots,x_n]$ satisfying $F\rvert_{\epsilon=0}=f$ and circuit size of $F$ over {\em constants} $K$ is $\leq r$.
\end{definition}

Note that the circuit of $F$ may be using division by $\epsilon$ implicitly in an intermediate step. So, we cannot simply assign $\epsilon=0$ and get a circuit free of $\epsilon$. Also, the degree involved can be arbitrarily large wrt $\epsilon$. Thus, potentially $\overline{\siz}(f)$ can be smaller than $\siz(f)$. 

Using this new notion of size one can define the analogous class $\overline{\text{VP}}$. It is known to be closed under factors \cite[Thm.4.1]{burgisser2001complexity}.
The idea is to  work over $\F(\epsilon)$, instead of working over $\F$, and use Newton iteration to approximate power series roots. Note that in the case of $\overline{\text{VF}}$, $\overline{\text{VBP}}$, $\overline{\text{VP}}$ and $\overline{\text{VNP}}$ the polynomials have $\poly(n)$ degree. So, by using repeated differentiation, we can assume the power series root (of $\tilde{f}:= f(\tau\overline{x})$) to be simple (i.e.~multiplicity$=1$) and apply classical NI. We need to carefully analyze the implementation of this idea. 

\medskip\noindent {\bf Root finding using NI over $K$. }
For degree-$d$ $f\in\F[\overline{x}]$ if $\overline{\siz}(f)=s$ then: $\exists F \in R[\overline{x}]$ with a size $s$ circuit satisfying $F\rvert_{\epsilon=0}=f$. 
The degree of $F$ wrt $\overline{x}$ may be greater than $d$. In that case we can extract the part up to degree $d$ and truncate the rest \cite[Prop.3.1]{burgisser2004complexity}. So wlog $\deg_{\overline{x}}(F)= \deg(f)$.

By applying a random $\tau$ (using constants $\F$) we can assume that $\tilde{F} := F(\tau\overline{x}) \in R[\overline{x},y]$ is {\em monic} (i.e.~leading-coefficient, wrt $y$ in $\tilde{F}$, is invertible in $R$). Otherwise, $\deg_y(\tilde{F}) = \deg_y(\tilde{f}) = \deg_{\overline{x}}(f)$ will decrease on substituting $\epsilon=0$ contradicting $F\rvert_{\epsilon=0}=f$. Wlog, we can assume that the leading-coefficient of $\tilde{F}$ wrt $y$ is $1$ and the $y$-monomial's degree is $d$. From now on we have $\tilde{F}\rvert_{\epsilon=0}= \tilde{f}$ and both have their leading-coefficients $1$ wrt $y$.

Let $\mu$ be a root of $\tilde{f}(\overline{0},y)$ of multiplicity one (as discussed before). Since $\tilde{F}(\overline{0},y)\equiv \tilde{f}(\overline{0},y) \bmod \epsilon$, we can build a power series root $\mu(\epsilon)\in\F[[\epsilon]]$ of $\tilde{F}(\overline{0},y)$ using NI, with $\mu$ as the starting point. But $\mu(\epsilon)$ may not converge in $K$. To overcome this obstruction \cite{burgisser2001complexity} devised a clever trick. 

Define $\hat{F}:= \,\tilde{F}(\overline{x}, y+\mu+\epsilon) \,-\, \tilde{F}(\overline{0}, \mu+\epsilon)$. Note that $(\overline{0}, 0)$ is a simple root of $\hat{F}(\overline{x}, y)$ \cite[Eqn.5]{burgisser2004complexity}. So, a power series root $y_\infty$ of $\hat{F}$ can be built iteratively by classic NI (Lemma \ref{lem-NI}):
$$  y_{t+1} \;\,:=\,\; y_t \,-\, \frac{\hat{F}}{\partial_y \hat{F}}\bigg\rvert_{y=y_t} \,. $$
Where, $y_\infty \equiv y_{t}\, \bmod \langle {\bar {x}}\rangle^{2^t}$. 
One can easily prove that $y_t$ is defined over the coefficient field $K$, using induction on $t$. 

Note that $\hat{F}\rvert_{\epsilon=0}= \tilde{f}(\overline{x}, y+\mu) \,-\, \tilde{f}(\overline{0}, \mu) \,=\, \tilde{f}(\overline{x}, y+\mu)$. So, $y_\infty$ is associated with a root of $\tilde{f}$ as well. This implies that by using several such roots $y_\infty$, we can get an appropriate product $\hat{G}\in R[\overline{x},y]$, such that an actual polynomial factor of $\tilde{f}$ (over field $\F$) equals $\hat{G}\rvert_{\epsilon=0}$. 


The above process, when combined with the first part of the proof of Theorem \ref{thm3}, does imply:
 
\begin{theorem}[Approximative factors]\label{thm-vf-bar}
The approximative complexity classes $\overline{\text{VF}}(n^{\log n})$,\\ $\overline{\text{VBP}}(n^{\log n}) $ and $\overline{\text{VNP}}(n^{\log n}) $ are closed under factors.
\end{theorem}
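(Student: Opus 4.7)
The plan is to lift the argument of Theorem \ref{thm3} to the coefficient field $K := \F(\epsilon)$, run the same power series Newton iteration (NI) but over $K[[\overline{x}]]$, and specialize $\epsilon = 0$ only at the very end. Given $\{f_n\} \in \overline{\mathcal{C}}(n^{\log n})$ for $\mathcal{C} \in \{\text{VF}, \text{VBP}, \text{VNP}\}$, I would unfold the definition to obtain $F \in R[\overline{x}]$ of $\mathcal{C}$-size $n^{O(\log n)}$ over $K$ with $F\rvert_{\epsilon = 0} = f$; the degree of $F$ wrt $\overline{x}$ can be truncated down to $d := \deg(f) = \poly(n)$ without hurting size. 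A random shift $\tau$ with constants from $\F$ renders $\tilde{F} := F(\tau\overline{x})$ and $\tilde{f} := f(\tau\overline{x})$ simultaneously monic of degree $d$ in $y$ and, by Theorem \ref{thm-complete-split}, guarantees $\tilde{f} = \prod_{i=1}^{d_0}(y - g_i)^{\gamma_i}$ with distinct constant terms $\mu_i := g_i(\overline{0}) \in \F^\times$.

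To approximate the power series root of $\tilde{F}$ sitting above $g_i$, I would use Bürgisser's centering trick recalled just above the statement. After differentiating $\tilde{F}$ wrt $y$ a total of $\gamma_i - 1$ times to reduce to the simple-root case (Lemma \ref{lem-derivative} keeps this cheap since $\gamma_i \le d$), I form $\hat{F}_i := \tilde{F}(\overline{x}, y + \mu_i + \epsilon) - \tilde{F}(\overline{0}, \mu_i + \epsilon)$, which has $(\overline{0}, 0)$ as a simple root with $\partial_y \hat{F}_i(\overline{0}, 0) \in R^\times$. Classical NI
\[
  y_{t+1} \;:=\; y_t \,-\, \frac{\hat{F}_i}{\partial_y \hat{F}_i}\bigg\rvert_{y = y_t}
\]
converges in $\lceil \log d \rceil + 1$ steps and can be carried out with the exact per-step blow-ups from the proof of Theorem \ref{thm3}, namely $d^2$ for formulas and $d^4$ for ABPs and VNP (the latter inheriting the interpolation/composition bounds from Section \ref{VNP}, now applied over $K$). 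This produces in each model an $n^{O(\log n)}$-size object $\tilde{g}_i$ over $K$ with $\tilde{g}_i \rvert_{\epsilon = 0} \equiv g_i^{\le d} \bmod \langle \overline{x}\rangle^{d+1}$.

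Given an arbitrary polynomial factor $h$ of $f$, Corollary \ref{cor1} identifies $h(\tau \overline{x})$ with a product $\prod_i(y - g_i)^{c_i}$ truncated to degree $\le d$; the correct subset $(c_i)$ can be extracted by running the linear-algebra part of the Theorem \ref{thm3} argument (Steps 5--8, Claim \ref{clm-irred-fac}) over $K$. Taking the same combination of the lifted approximations $\tilde{g}_i$ yields $\hat{G}_h$ in the corresponding $\mathcal{C}$-class over $K$ of size $n^{O(\log n)}$ with $\hat{G}_h \rvert_{\epsilon = 0} = h(\tau \overline{x})$, and applying $\tau^{-1}$ certifies $h \in \overline{\mathcal{C}}(n^{\log n})$, which is exactly the required closure.

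The main obstacle I foresee is bookkeeping rather than a fundamentally new idea: every intermediate subroutine used in the Theorem \ref{thm3} analysis must continue to behave well over $K$, and in particular no inverse of $\epsilon$ may sneak into the $\mathcal{C}$-representation of $\hat{G}_h$ before the final specialization. The delicate points are (i) the invertibility of the denominators arising in NI, secured by $\partial_y \hat{F}_i(\overline{0},0) \in R^\times$; (ii) the validity of the single- or per-step division-elimination lemmas over $K$; and (iii) the stability of the interpolation and composition bounds for VNP over an extended coefficient field. The genuine subtlety is the choice to center at $\mu_i + \epsilon$ rather than at a would-be root $\mu_i(\epsilon) \in \F[[\epsilon]]$ of $\tilde{F}(\overline{0}, y)$ (which need not lie in $K$); Bürgisser's trick precisely sidesteps this convergence obstruction, and once it is imported the rest of the Theorem \ref{thm3} pipeline goes through verbatim.
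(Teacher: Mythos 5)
Your proposal matches the paper's own proof essentially step for step: truncate $F\in R[\overline{x}]$ to degree $d$, apply a random $\tau$ over $\F$ to make $\tilde F$ monic with $\tilde F\rvert_{\epsilon=0}=\tilde f$, differentiate to make the target root simple, use B\"urgisser's recentering $\hat F:=\tilde F(\overline{x},y+\mu+\epsilon)-\tilde F(\overline{0},\mu+\epsilon)$ to keep Newton iteration inside $K=\F(\epsilon)$, and then reuse the size analysis of Theorem \ref{thm3} over $K$ before setting $\epsilon=0$. This is exactly the argument in Section \ref{sec-approx}, so the proposal is correct and not a different route.
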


The same question for the classes $\overline{\text{VF}}$, $\overline{\text{VBP}}$ and $\overline{\text{VNP}}$ we leave as an open question. (Though, for the respective bounded individual-degree polynomials we have the result as before.)

\subsection{When field $\F$ is not algebraically closed}
\label{sec-not-clos}
\vspace{-1mm}

We show that all our results ``partially'' hold true for fields $\F$ which are not algebraically closed. The common technique used in all the proofs is the structural result (Theorem \ref{thm-complete-split}) which talks about power series roots with respect to $y$. Recall that we use a random linear map $\tau: x_i \mapsto x_i+ \alpha_i y + \beta_i$, where $\alpha_i,\beta_i \in_r \F$, to make the input polynomial $f$ monic in $y$ and the individual degree of $y$ equal to $d:=\deg(f)$.
If we set all the variables to zero except $y$, we get a univariate polynomial $\tilde{f}(\overline{0},y)$ whose roots we are interested in finding explicitly. 

The other common technique in our proofs is the classical NI, which starts with just one field root, say $\mu_1$ of $\tilde{f}(\overline{0},y)$, and builds the full power series on it.
Let $E\subsetneq \overline{\F}$ be the smallest field where a root $\mu_1$ can be found. Say, $g|\tilde{f}_1(\overline{0},y)$ is the minimal polynomial for $\mu_1$. The degree of the extension $E:=\F[z]/(g(z))$ is at most $d$. So, computations over $E$ can be done efficiently.
The key idea is to view $E/\F$ as a vector space and simulate the arithmetic operations over $E$ by operations over $\F$. The details of this kind of simulation can be seen in \cite{von2013modern}. In circuits it means that we make $\deg(E/\F)$ copies of each gate and simulate the algebraic operations on these `tuples' following the $\F$-module structure of $E[\overline{x}]$.

Once we have found all the power series roots of $\tilde{f}(\overline{x},y)$ over $E[[\overline{x}]]$, say starting from each of the conjugates $\mu_1,\ldots, \mu_i\in E$, it is easy to get a polynomial factor in $E[\overline{x},y]$. This factor will not be in $\F[\overline{x},y]$, unless $E$ is a splitting field of $\tilde{f}_1(\overline{0},y)$. A more practical method is: While solving the linear system over $E$ in Steps 5-7 (Algorithm in Theorem \ref{thm3}) we can demand an $\F$-solution $u$. Basically, at the level of algorithm in Lemma \ref{lem-linsyst}, we can rewrite the linear system $Mw = (\sum_{0\le i\le d} M_i z^i )\cdot w = 0$ as $M_iw=0$ ($i\in[0,d]$), where the entries of the matrix $M_i$ are given as formulas (resp.~ABP) computing a $\poly(n)$ degree polynomial in $\F[\overline{x}]$. This way we get the desired $\F$-solution $u$. 
Then, Steps 8-9 will yield an irreducible polynomial factor of $f$ in $\F[\overline{x},y]$. This sketches the following more practical version of Theorem \ref{thm3}.

\begin{theorem}\label{thm-betterThm3}
For $\F$ a number field, a local field, or a finite field (with characteristic $>\deg(f)$), there exists a randomized $\poly(sn^{\log n})$-time algorithm that: for a given $n^{O(\log n)}$ size formula (resp.\ ABP) $f$ of $\poly(n)$-degree and bitsize $s$, outputs $n^{O(\log n)}$ sized formulas (resp.\ ABPs) corresponding to each of the nontrivial factors of $f$. 
\end{theorem}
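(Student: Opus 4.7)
The plan is to run the algorithm of Theorem \ref{thm3} essentially unchanged, but to simulate every arithmetic operation over a small algebraic extension $E/\F$, and to extract $\F$-rational data at two key moments: when solving the linear system in Step 5 and when reading off the final polynomial factor. The theorem then drops out of tracking bit-complexity carefully.

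First, I would apply the random shift $\tau$ to obtain $\tilde f := f(\tau\overline{x})$ monic in $y$ of $y$-degree $d=\deg(f)$, and then factor $\tilde f(\overline{0},y) \in \F[y]$ using a standard randomized univariate factoring algorithm (Lenstra--Lenstra--Lov\'asz over $\Q$, Berlekamp--Cantor--Zassenhaus over $\F_q$, and their $\Q_p$ analogue). Each irreducible factor $g\in\F[y]$ delivers a root $\mu_1$ in the extension $E := \F[z]/(g(z))$ whose degree is at most $d$. Arithmetic in $E$ is simulated over $\F$ by representing each element as a $\deg(g)$-tuple in the basis $1,z,\ldots,z^{\deg(g)-1}$ and replacing every gate of our formula (resp.\ ABP) by the obvious $\deg(g)$-wide gadget that implements the $\F$-module structure of $E[\overline{x}]$. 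This inflates the size by at most a $\poly(d)$ factor, so Steps 1--4 of the Theorem \ref{thm3} algorithm still produce $\tilde g_{i_0} = g_{i_0}^{<2^k}(T\overline{x})$ as a formula/ABP over $E$ of size $n^{O(\log n)}$.

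The crucial twist is in Step 5: I insist on an $\F$-rational solution $u = \sum u_{ij}\, y^i T^j$ even though the linear system $Mw=0$ has entries in $E$. Expanding $M = \sum_{i=0}^{\deg(g)-1} M_i z^i$ in the $\F$-basis of $E$, the single $E$-system becomes the $\F$-system $M_0 w = \cdots = M_{\deg(g)-1}w = 0$, which is what Lemma \ref{lem-linsyst} is invoked on. A nontrivial solution exists by Claim \ref{clm-exist}: the irreducible $\F$-factor $f_1 \mid f$ whose $\F(\overline{x})$-roots contain $g_{i_0}$ gives $u := \tilde f_1(T\overline{x},y) \bmod T^{2^k}$ as an explicit $\F$-rational witness. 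Claims \ref{clm-constr} and \ref{clm-irred-fac} are stated using only UFD and resultant properties over an arbitrary coefficient field with its algebraic closure, so they apply verbatim: Step 8 (either the formula gcd of Claim \ref{clm-gcd} or its alternative via leading-coefficient normalization from Claim \ref{clm-irred-fac}) returns $c\cdot \tilde f_1(T\overline{x},y)$, and Step 9 delivers an irreducible factor in $\F[\overline{x},y]$. Iterating the whole procedure on $f/f_1$ produces the full factorization.

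The main obstacle will be bit-complexity propagation. Over $\Q$ or $\Q_p$, the NI iteration manipulates rational functions whose coefficient bitsizes can double at each of the $O(\log d)$ rounds, and the $E$-gadgets compound this. The standard remedy is to perform every intermediate computation modulo a sufficiently large random prime $p$ of bitsize $\poly(s,n,\log d)$ --- chosen so as to preserve degrees, keep the allRootsNI matrix $M$ invertible, and preserve distinctness of the $\mu_i$'s --- followed by a CRT/Hensel lift at the very end; and to reduce every $E$-element modulo $g(z)$ after each multiplication. The hypothesis $\mathrm{char}(\F) > \deg(f)$ ensures that the factorials introduced by repeated $\partial_y$ differentiation, the denominators $(c_i-\mu_j)^2$ in $M$, and the pivots used in the $E$-to-$\F$ simulation all remain invertible. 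With this bookkeeping each arithmetic step costs $\poly(s,d,\log d)$ bit operations and the total running time stays within $\poly(s\, n^{\log n})$, as claimed.
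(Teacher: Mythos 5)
Your proposal is correct and follows essentially the same route as the paper's own treatment in Section \ref{sec-not-clos}: factor $\tilde f(\overline{0},y)$ over $\F$, simulate arithmetic in the extension $E=\F[z]/(g(z))$ by $\deg(E/\F)$-tuples of gates, force an $\F$-rational solution in Step 5 by expanding $M=\sum_i M_i z^i$ and imposing $M_i w=0$ for every $i$, and then let Steps 8--9 (with Claims \ref{clm-exist}, \ref{clm-constr}, \ref{clm-irred-fac}) deliver the irreducible factor in $\F[\overline{x},y]$. Your added bookkeeping on bit-complexity (modular computation plus CRT/Hensel recovery) is a reasonable way to fill in the running-time claim that the paper only asserts, and does not change the argument.
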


Note that over these fields there are famous randomized algorithms to factor univariate polynomials in the base case, see \cite[Part III]{von2013modern} \& \cite{pauli2001factoring}.

\smallskip
The allRootsNI method in Theorem \ref{thm1} seems to require all the roots $\mu_i, i\in[d_0]$, to begin with. Let $\tilde{u}_1:= \rad(u_1(\tau\overline{x}))$. Since $\mu_i$'s are in the {\em splitting field} $E\subset\overline{\F}$ of $\rad(\tilde{u}_1(\overline{0},y))$, we do indeed get the size bound of the power series roots $g_i^{\le d_0}$ of $\tilde{u}_1$ assuming the constants from $E$. As seen in the proof, any irreducible polynomial factor $\tilde{h}_i:= h_i(\tau\overline{x})$ of $\rad(\tilde{u}_1)$ is some product of these $(y-g_i^{\le d_0})$'s mod $I^{d_0+1}$.  So, for the polynomial $\tilde{h}_i$ in $\F[\overline{x},y]$ we get a size upper bound over constants $E$. We leave it as an open question to transfer it over constants $\F$ (note: $E/\F$ can be of exponential degree). 


\subsection{Multiplicity issue in prime characteristic}\label{sec-prime}
\vspace{-1mm}

The main obstruction in prime characteristic is when the multiplicity of a factor is a $p$-multiple, where $p\ge2$ is the characteristic of $\F$. In this case, all versions of Newton iteration fail. This is because the derivative of a $p$-powered polynomial vanishes. 
When $p$ is greater than the degree of the input polynomial, these problems do not occur, so all our theorems hold (also see Section \ref{sec-not-clos}). 

When $p$ is smaller than the degree of the input polynomial in Theorem \ref{thm3}, adapting an idea from \cite[Sec.3.1]{kopparty2015equivalence}, we claim that we can give $n^{O(\lambda\log n)}$-sized formula (resp.~ABP) for the $p^{e_i}$-th power of $f_i$, where $f_i$ is a factor of $f$ whose multiplicity is divisible exactly by $p^{e_i}$, and $\lambda$ is the number of distinct $p$-powers that appear. 

Note that presently it is an open question to show that: If a circuit (resp.\ formula resp.\ ABP) of size $s$ computes $f^p$, then $f$ has a poly($sp$)-sized circuit (resp.\ formula resp.\ ABP).
   
Theorem \ref{thm3} can be extended to all characteristic as follows.
\begin{theorem}
Let $\F$ be of characteristic $p\ge2$. 
Suppose the $\poly(n)$-degree polynomial given by a $n^{O(\log n)}$ size formula (resp.~ABP) factors into irreducibles as $f(\overline{x})= \prod_{i} f_i^{{p^{e_i}}j_i}$, where $p \nmid j_i$. Let $\lambda:= \#\{e_i|i\}$.

Then, there is a poly$(n^{\lambda\log n})$-size formula (resp.\ ABP) computing $f_i^{p^{e_i}}$ over $\overline{\F}_p$. 
\end{theorem}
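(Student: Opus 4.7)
The plan is to mirror the strategy of Theorem \ref{thm3}, but to work around the obstruction that classical Newton iteration (NI) cannot reduce a root of $p$-divisible multiplicity to a simple root. The $\lambda$ distinct $p$-power levels will be treated in turn, and the only extra cost compared with Theorem \ref{thm3} is one further factor of $\log n$ per level in the size exponent.

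Apply the shift $\tau$ of Theorem \ref{thm-complete-split} (which remains valid over $\overline{\F}_p$: a random linear change ensures that each irreducible $\tilde f_i := f_i(\tau\overline x)$ is separable in $y$, i.e.\ $\partial_y\tilde f_i \ne 0$). Using that the Frobenius $\sigma:a\mapsto a^p$ is a ring endomorphism in characteristic $p$, we write
\[
  \tilde f(\overline x,y) \;=\; \prod_{k\in\{e_i\,|\,i\}} H_k(\overline x,y)^{p^k},
  \qquad H_k \;:=\; \prod_{\,i:\, e_i=k\,}\tilde f_i^{\,j_i},
\]
where every $H_k$ has all irreducible multiplicities $j_i$ coprime to $p$. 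My first step is to extract each individual block $H_k^{p^k}$ as a formula/ABP. Because $\partial_y(H_{k'}^{p^{k'}})=0$ for every $k'\ge 1$ in characteristic $p$, we have $\partial_y\tilde f = (\partial_y H_0)\cdot\prod_{k\ge 1}H_k^{p^k}$; one call to the formula-$\gcd$ procedure of Claim \ref{clm-gcd} therefore separates $\rad(H_0) = \tilde f/\gcd(\tilde f,\partial_y\tilde f)$ from the remaining block, after which a Yun-style iteration over the $\lambda$ non-trivial levels peels off each $H_k^{p^k}$ one after another. Each peel invokes Claim \ref{clm-gcd} and division elimination (Lemma \ref{lem-div-elim}), which together cost an $n^{O(\log n)}$ multiplicative size blowup, so this phase contributes $n^{O(\lambda\log n)}$ overall.

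Next, for each $k$, I invoke Theorem \ref{thm3}'s NI-factoring engine on $H_k^{p^k}$. Since $H_k^{p^k}$ is a $p^k$-th power in characteristic $p$, its monomials in $\overline x,y$ sit on $p^k$-multiples of exponents. Introducing new indeterminates $\overline u := \overline x^{p^k}$ and $z:=y^{p^k}$, we re-interpret $H_k^{p^k}$ as
\[
  \sigma^k(H_k)(\overline u,z) \;=\; \prod_{i:\,e_i=k}\sigma^k(\tilde f_i)(\overline u,z)^{\,j_i},
\]
whose multiplicities $j_i$ are coprime to $p$. Theorem \ref{thm3} applied over $\overline{\F}_p[\overline u,z]$ therefore returns each irreducible $\sigma^k(\tilde f_i)(\overline u,z)$ as a formula/ABP of size $n^{O(\log n)}$. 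Substituting back $\overline u = \overline x^{p^k}$, $z = y^{p^k}$ yields $\tilde f_i(\overline x,y)^{p^k}$, and applying $\tau^{-1}$ finally delivers $f_i^{p^{e_i}}$.

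The hardest step will be the re-interpretation $H_k^{p^k}\longleftrightarrow\sigma^k(H_k)(\overline u,z)$ at the formula/ABP level: at the polynomial level it is a pure relabeling, but a formula computing $H_k^{p^k}$ need not expose the $p^k$-power structure. I would handle it by univariate interpolation in each of $x_1,\ldots,x_n,y$, reading off only the coefficients supported at the monomials $\overline x^{p^k\alpha}y^{p^k\beta}$ and re-assembling them as a formula in $\overline u,z$; this costs a further $\poly(d)$ factor absorbed into $n^{O(\log n)}$ per level. The other delicate point is that the squarefree-decomposition must never take a $p$-th root of a formula --- doing so would resolve the open problem flagged in the paper. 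The procedure sidesteps this because it only ever produces $H_k^{p^k}$ (and eventually $f_i^{p^{e_i}}$), never the underlying $H_k$ or $f_i$ themselves, which is exactly why the theorem's conclusion is about $f_i^{p^{e_i}}$ rather than $f_i$.
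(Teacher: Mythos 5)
There is a genuine gap, and it sits exactly at the step you yourself flag as hardest. Your Step B needs, from a formula computing $H_k^{p^k}(\overline{x},y)=\sigma^k(H_k)(\overline{x}^{p^k},y^{p^k})$, a small formula for $\sigma^k(H_k)(\overline{u},z)$ in fresh variables, i.e.\ a Frobenius collapse in \emph{all} $n+1$ variables. Your proposed implementation (interpolate in each of $x_1,\ldots,x_n,y$ and re-assemble) does not cost a single $\poly(d)$ factor: each variable you process multiplies the formula size by $\Theta(d)$ or more, since every surviving coefficient is a linear combination of $\Theta(d)$ copies of the current formula, so after all $n+1$ variables the size is $s\cdot d^{\Omega(n)}$, which destroys the $n^{O(\lambda\log n)}$ bound. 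Worse, if the collapse \emph{could} be done with $\poly(s,d)$ blowup, then combined with the size-free relabeling of constants $c\mapsto c^{1/p^k}$ over the perfect field $\overline{\F}_p$ it would convert any small formula for $f^{p}$ into a small formula for $f$ --- precisely the open problem the paper flags and that you claim your procedure sidesteps; so the claim that you never implicitly take $p$-th roots is self-contradictory. The paper's proof avoids this trap by collapsing \emph{only} the variable $y$: after removing the roots of multiplicity coprime to $p$, it writes the remaining $G$ as $\sum_i c_i(\overline{x})y^i$ by univariate interpolation in $y$ alone, detects the largest $p^e$ with $c_i=0$ whenever $p^e\nmid i$, and passes to $\hat{G}:=\sum_{p^e\mid i}c_i(\overline{x})y^{i/p^e}$, leaving the $\overline{x}$-coefficients as untouched $p^e$-th powers. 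Over $\F[[\overline{x}]]$ the roots of $\hat{G}$ are $g_i^{p^e}$ with multiplicities $\gamma_i/p^e$, those at the current level being coprime to $p$, so Newton iteration recovers them directly, and substituting $y^{p^e}$ back into the recovered factor yields $\tilde{f_i}^{p^{e_i}}$ without ever inverting a Frobenius on the $\overline{x}$-side.

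Your Step A has a second, independent problem. Claim~\ref{clm-gcd} is proved (via univariate factoring, differentiation to make roots simple, and classical NI) under the standing characteristic-$0$ assumption; here you invoke it on $\tilde{f}$ and $\partial_y\tilde{f}$, whose roots have huge $p$-divisible multiplicities --- exactly the regime in which the ``differentiate $\gamma_i-1$ times'' trick and classical NI break down, so the invocation is unjustified (indeed circular: computing such gcds as formulas is essentially as hard as the theorem being proved). Moreover, after the first peel the remaining block $\prod_{k\ge1}H_k^{p^k}$ has $\partial_y\equiv 0$, so a ``Yun-style'' gcd-with-derivative iteration cannot separate the levels at all without a $y^{p}\mapsto y$ collapse, which your Step A never performs; and even where multiplicities $j_i$ are coprime to $p$ but exceed $p$, repeated differentiation gets stuck at an intermediate multiplicity divisible by $p$, so ``Theorem~\ref{thm3} applied over $\overline{\F}_p$'' also needs the generalized (multiplicity-aware) NI rather than a verbatim citation. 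The paper's level-by-level scheme (recover coprime-to-$p$ roots by NI, divide them out, collapse $y^{p^e}\mapsto y$, repeat at most $\lambda$ times, paying $d^{O(1)}$ per division elimination and $\log d$ NI steps per level) is what actually delivers the $n^{O(\lambda\log n)}$ bound.
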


\begin{proof}[Proof sketch]
Note that $\lambda = O(\log_p n)$.

Let the transformed polynomial of degree $d$ split into power series roots as follows:
$\tilde{f}:=f(\tau\overline{x},y) = \prod_{i=1}^{d_0}(y-g_i)^{\gamma_i}$. 

{\em $p\nmid \gamma_i$: }
If $g_i$ is such that $p\nmid\gamma_i$, then we can find the corresponding power series roots using Newton iteration and recover all such factors. After recovering all such irreducible polynomial factors, we can divide $\tilde{f}$ by their product. Let $G:= \tilde{f} \big/ \prod_{p \nmid \gamma_i} (y-g_i)^{\gamma_i}$. Clearly, $G$ is now a $p$-power polynomial.  

{\em $p\mid \gamma_i$: }
Computing the highest power of $p$ that divides the exponent of $G$ (given by a formula resp.~ABP) is easy. First, write the polynomial as $G=c_{0}+c_{1}y+....+c_{d}y^d$ using  interpolation.  Note that it is a $p^e$-th power iff: $c_{i}=0$ whenever $p^e\nmid i$, and $p^{e+1}$ does not have this property. After computing the right value of $p^e$, we can reduce factoring to the case of a non-$p$-power. 

Rewrite $G$ as $\hat{G}:= \sum_{p^e|i} c_i(\overline{x})\cdot y^{i/p^e}$, i.e.~replacing $y^{p^e}$ by $y$. Clearly, $g$ is an irreducible factor of  $G$ iff $\hat{g}$ is an irreducible factor of $\hat{G}$.

We can now apply NI to find the roots of $\tilde{G}$, that have multiplicity coprime to $p$. Divide by their product and then repeat the above.

\emph{Size analysis. }
If $G$ can be computed by a size $s$ formula (resp.\ ABP), $\hat{G}$ can be computed by a size $O(d^2s)$ formula (resp.\ ABP). Similarly, a single division gate leads to a blow up by a factor of $O(d^2)$.
The number of times we need to eliminate division is at most $\lambda\log d$. 
So the overall size is $n^{O(\lambda\log n)}$. 

However, the splitting field $E$ where we get all the roots of $\tilde{f}(\overline{0},y)$ may be of degree $\Omega(d!)$. So, we leave the efficiency aspects of the algorithm as an open question.
\end{proof}
 
\medskip \noindent
{\bf High degree case. }
Note that the above idea cannot be implemented efficiently in the case of high degree circuits. Still we can extend our Theorem \ref{thm1} using allRootsNI. The key observation is that the allRootsNI formula still holds but the summands that appear are exactly the ones corresponding to $g_i$ with $\gamma_i\ne0 \bmod p$.

This motivates the definition of a partial radical:  $\text{rad}_p(f):= \prod_{p \nmid e_i} f_i$, if the prime factorization of $f$ is $\prod_i f_i^{e_i}$.

\begin{theorem}
Let $\F$ be of characteristic $p\ge2$. 
Let $f=u_0u_1$ such that size($f$)$ + $size($u_0$) $\leq s$. Any factor of  $\text{rad}_p(u_1)$ has size poly($s+ \deg(\rad_p(u_1) )$) over $\overline{\F}$.
\end{theorem}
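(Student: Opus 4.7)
The plan is to adapt the allRootsNI-based proof of Theorem \ref{thm1} to characteristic $p$, using the critical observation that in the logarithmic derivative identity $\partial_y \tilde f/\tilde f \,=\, \partial_y \tilde u_0/\tilde u_0 \,+\, \sum_{i} \gamma_i/(y - g_i)$, each summand with $p \mid \gamma_i$ vanishes (since $\gamma_i$ becomes zero as a field element). Thus only the roots corresponding to irreducible factors of $u_1$ whose multiplicity is coprime to $p$ survive, and these are exactly the roots contributing to $\rad_p(u_1)$.

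First, I would invoke Theorem \ref{thm-complete-split} (whose proof rests on classical NI at a simple root, which works in any characteristic as long as we start at a simple root) after applying $\tau: x_i \mapsto x_i + \alpha_i y + \beta_i$ over $\overline{\F}$. This gives $u_1(\tau \overline{x}) = \prod_i (y - g_i)^{\gamma_i}$ with distinct $\mu_i := g_i(\overline{0})$. Let $S := \{i : p \nmid \gamma_i\}$ and $d_0 := \deg(\rad_p(u_1)) = |S|$. Then $\rad_p(u_1)(\tau\overline{x}) = \prod_{i \in S}(y - g_i)$, and any factor of $\rad_p(u_1)$ corresponds, via $\tau^{-1}$, to a subproduct $\prod_{i \in S'}(y - g_i^{\le d_0})$ mod $I^{d_0+1}$ for some $S' \subseteq S$. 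So it suffices to produce a circuit of size $\poly(s + d_0)$ for each $g_i^{\le d_0}$ with $i \in S$.

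Next, I would set up the same recurrence as in Theorem \ref{thm1}, but restricted to $S$. In characteristic $p$ the identity reads $\partial_y \tilde f/\tilde f - \partial_y \tilde u_0/\tilde u_0 \,=\, \sum_{i \in S} \gamma_i/(y - g_i)$, and the Claim \ref{lem-logDer-Id} analogue is $\sum_{i \in S} \gamma_i\cdot g_i^{=\delta}/(y - \mu_i)^2 \,\equiv\, \partial_y \tilde f/\tilde f \,-\, \partial_y \tilde u_0/\tilde u_0 \,-\, \sum_{i \in S} \gamma_i/(y - g_i^{<\delta}) \bmod I^{\delta+1}$. Evaluating at $d_0$ random values $c_j \in \overline{\F}$, chosen distinct from one another and from every $\mu_i$, yields the linear system $M v_\delta = W_\delta$, with $M(j,i) = \gamma_i/(c_j - \mu_i)^2$ for $i,j$ indexing $S$. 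The weights $\gamma_i$ are now nonzero field elements (this is precisely where $i \in S$ is used), so the Cauchy-like matrix $M$ remains invertible by the same argument as for Lemma \ref{lem-inv-det}.

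Finally, I would run the self-correcting allRootsNI iteration; the proof of Claim \ref{clm-selfCorr} goes through verbatim, as its algebraic manipulations involve only the surviving summands. Tracking numerator/denominator pairs for each $g'_{i,\delta}$ as $\delta$ grows from $0$ to $d_0$ adds $\poly(d_0)$ gates per step on top of the $\poly(s)$-sized precomputations of $\partial_y \tilde f/\tilde f$ and $\partial_y \tilde u_0/\tilde u_0$ at $y = c_j$, yielding a circuit of size $\poly(s, d_0)$ with a single division gate at the top, which Lemma \ref{lem-div-elim} turns into a division-free circuit of size $\poly(s, d_0)$ for each $g_i^{\le d_0}$. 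Any factor of $\rad_p(u_1)$ then has size $\poly(s + d_0)$ over $\overline{\F}$. The main subtlety I anticipate is ensuring the random $c_j$'s can be chosen so that all the power-series reciprocals $\tilde f(\overline{x}, c_j)^{-1}$, $\tilde u_0(\overline{x}, c_j)^{-1}$, and $(c_j - g'_{k,\delta-1})^{-1}$ exist throughout the iteration; this still follows from a standard Schwartz--Zippel-style avoidance of finitely many bad values, provided $\overline{\F}$ is sufficiently large, which we may assume by passing to the algebraic closure.
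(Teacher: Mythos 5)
Your proposal is correct and follows essentially the same route as the paper, which only sketches this proof: the observation that summands with $p\mid\gamma_i$ vanish from the logarithmic-derivative identity, so allRootsNI runs on exactly the roots of $\rad_p(u_1)$ and the linear algebra is of size $\deg(\rad_p(u_1))$, is precisely the paper's argument, and your size analysis mirrors that of Theorem \ref{thm1}. The one small point to flag is that in characteristic $p$ the invertibility of the Cauchy-like matrix is not literally ``the same argument'' as Lemma \ref{lem-inv-det}: its proof extracts the coefficient of $\prod_i y_i^{i-1}$, and the row factors $i$ may vanish mod $p$, so one needs the modification noted in the paper's remark (choose exponents $s_i-1$ with $p\nmid s_i$); with that adjustment your proof goes through.
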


{\em Proof idea: } Observe that the roots with multiplicity divisible by $p$ do not contribute to the allRootsNI process. So, the process works with $\text{rad}_p(u_1)$ and the linear algebra complexity involved is polynomial in its degree.

\section{Conclusion}
\vspace{-1mm}

The old {\em Factors conjecture} states that for a nonzero polynomial $f$: $g \mid f \implies \siz(g)\le \poly(\siz(f),\deg(g))$. Motivated by Theorem \ref{thm1}, we would like to strengthen it to:

\begin{conjecture}[radical]
For a nonzero $f$: $\text{min}\{\text{deg}(\rad(f)), \siz(\rad(f)) \}\le \poly(\siz(f))$.
\end{conjecture}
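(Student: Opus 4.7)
The plan is to proceed by a case split on $d_0 := \deg(\rad(f))$: if $d_0 \le \poly(\siz(f))$ the conjecture holds trivially, so the only substantive case is $d_0$ super-polynomial in $\siz(f)$, where I would need to exhibit a $\poly(\siz(f))$-size circuit for $\rad(f)$ itself. Since $\rad(f) \mid f$, a direct appeal to Theorem \ref{thm1} gives $\siz(\rad(f)) \le \poly(\siz(f) + d_0)$, which is useless precisely in this regime; any progress must rest on a compression tailored to the radical rather than to an arbitrary factor.

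The most natural technical route uses the power-series complete split of Theorem \ref{thm-complete-split}: after a random $\tau$ we have $f(\tau\overline{x}) = k\prod_{i=1}^{d_0}(y - g_i)^{\gamma_i}$, and $\rad(f(\tau\overline{x}))$ is the polynomial $k'\prod_{i=1}^{d_0}(y - g_i)$ whose coefficients are the elementary symmetric polynomials $e_j$ in the $g_i$'s. So it suffices to build a $\poly(\siz(f))$-size circuit computing $e_1(g_1,\ldots,g_{d_0}), \ldots, e_{d_0}(g_1,\ldots,g_{d_0})$ \emph{without} materializing any individual $g_i$. The starting data are the logarithmic derivatives $\partial_y^j \log f \equiv \sum_i \gamma_i/(y-g_i)^j$, which are $\poly(\siz(f))$-size rational functions encoding the \emph{weighted} power sums of the $g_i$'s; what I actually need are the \emph{unweighted} power sums $p_k = \sum_i g_i^k$, whence Newton's identities would yield the $e_j$'s and hence $\rad(f)$. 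A secondary formulation is to produce a direct small-circuit identity for $\rad(f)'/\rad(f)$, or to show for some variable $x_j$ that $\gcd(f, \partial_{x_j} f)$ admits a $\poly(\siz(f))$-size circuit, after which $\rad(f) = f/\gcd$ would inherit one by standard division elimination (Lemma \ref{lem-div-elim}).

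The main obstacle is the disentangling step: the weights $\gamma_i$ may be exponentially large and the root set has cardinality $d_0$, so any linear system one sets up to undo the weights has $\Omega(d_0)$-sized matrices, which is exactly what the size budget forbids. A second obstruction is that even granted a small circuit for $\rad(f)'/\rad(f)$, passing back to $\rad(f)$ amounts to a circuit-level integration of a logarithmic form, and no such size-preserving operation is known. The NP-hardness of high-degree univariate gcd \cite{plaisted1977sparse} further warns that the $\gcd$-based bypass cannot be black-box and must crucially exploit the circuit structure of $f$---plausibly via an induction on circuit depth with an invariant that simultaneously bounds $\deg(\rad)$ and $\siz(\rad)$ at every intermediate gate. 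In parallel with such an induction I would try to \emph{falsify} the conjecture by engineering $f$'s (for instance products arising from factors of $x^{2^s} - c$) for which both $\deg(\rad(f))$ and $\siz(\rad(f))$ appear simultaneously forced to be super-polynomial; persistent failure to construct such a counterexample would sharpen intuition for where the induction ought to close.
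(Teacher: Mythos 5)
There is no proof to compare against here: the statement you were asked about is the paper's \emph{Radical conjecture}, stated in the Conclusion precisely as an open problem motivated by Theorem~\ref{thm1}, and the paper offers no argument for it. Your proposal, read honestly, is also not a proof: after the trivial observation that the case $\deg(\rad(f))\le\poly(\siz(f))$ is immediate, the entire substantive case ($d_0=\deg(\rad(f))$ superpolynomial in $\siz(f)$, where one must exhibit a $\poly(\siz(f))$-size circuit for $\rad(f)$) is left open. You survey attack routes --- recovering the unweighted power sums $p_k=\sum_i g_i^k$ from the weighted sums encoded by $\partial_y^j\log f(\tau\overline{x})$ and then using Newton's identities; finding a small circuit for $\rad(f)'/\rad(f)$; or computing $\gcd(f,\partial_{x_j}f)$ in small size --- and you yourself identify why each is blocked (exponentially large and unknown multiplicities $\gamma_i$, a $d_0\times d_0$ linear system that violates the size budget, no size-preserving ``integration'' of a logarithmic derivative, and the hardness warnings of \cite{plaisted1977sparse}). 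Listing obstacles is not the same as overcoming any of them, so the gap is simply that nothing beyond the trivial case is established; this matches the status of the statement in the paper, which deliberately leaves it as a conjecture.

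One concrete technical slip worth flagging in your fallback route: even if you were handed a $\poly(\siz(f))$-size circuit for $g:=\gcd(f,\partial_{x_j}f)$, the step ``$\rad(f)=f/g$ inherits a small circuit by standard division elimination (Lemma~\ref{lem-div-elim})'' does not go through in the relevant regime. Strassen-style division elimination produces the quotient only up to a prescribed degree and costs polynomially in that degree; here the quotient $\rad(f)$ has degree $d_0$, which is superpolynomial by assumption, so the elimination itself blows up by $\poly(d_0)$ --- exactly the quantity the conjecture forbids. (Also recall the paper's caveat that $f/\gcd(f,\partial_{x_j}f)$ only ``almost'' equals $\rad(f)$: factors independent of $x_j$, or whose multiplicity is divisible by the characteristic, are not handled by a single partial derivative.) So even that bypass would need a genuinely new idea, e.g.\ a division-elimination analogue for non-unit or high-degree division, which is precisely the open problem the paper records alongside Theorem~\ref{thm2}. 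Your suggestion to stress-test the conjecture on engineered examples such as multiples built from factors of $x^{2^s}-c$ is a reasonable complementary activity, but as it stands the conjecture remains unproved by both you and the paper.
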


Is the Radical conjecture true if we replace size by $\overline{\siz}$?

\smallskip\noindent
In low degree regime also there are many open questions. Can we identify a class ``below'' VP that is closed under factoring? We conclude with some interesting questions. 

\begin{enumerate}
\item Are $\text{VF},\text{VBP}$ or $\text{VNP}$ closed under factoring? We might consider Theorem \ref{thm3} as a positive evidence. Additionally, note that these classes are already closed under $e$-th root taking. This is easy to see using the classic Taylor series of $(1+f)^{1/e}$, where $f\in\langle\overline{x}\rangle$.

In fact, what about the classes which are contained in $VF(n^{\log n})$ but larger than $VF$. For example, is VF$(n^{\log \log n})$ closed under factoring?

\item Can we find a suitable analog of Strassen's (non-unit) division elimination for high degree circuits? This, by Theorem \ref{thm2}, will resolve Factors conjecture.

\item Our results weaken when $\F$ is not algebraically closed or has a small prime characteristic (Sections \ref{sec-not-clos}, \ref{sec-prime}). Can we strengthen the methods to work for all $\F$?
\end{enumerate}

\noindent
{\bf Acknowledgements. }
We thank Rafael Oliveira for extensive discussions regarding his works and about circuit factoring in general. In particular, we used his suggestions about VNP and $\overline{\text{VP}}$ in our results. We are grateful to the organizers of WACT'16 (Tel Aviv, Israel) and Dagstuhl'16 (Germany) for the stimulating workshops. P.D.~would like to thank CSE, IIT Kanpur for the hospitality. N.S.~thanks the funding support from DST (DST/SJF/MSA-01/2013-14). We thank Manindra Agrawal, Sumanta Ghosh, Partha Mukhopadhyay, Thomas Thierauf and Nikhil Balaji for the discussions.

\medskip

\bibliographystyle{alpha}
\bibliography{bibliography}

\appendix

\section{Preliminaries}
\label{prel}

\subsection{Definition of ABP}
\label{ABP}

ABP is a {\em skew} circuit, i.e.~each multiplication gate has fanin two with at least one of its inputs being a variable or a field constant. A completely different definition can be given via layered graphs or iterated matrix multiplication or symbolic determinant. Famously, they are all equivalent up to polynomial blow up \cite{mahajan2014algebraic}.

\begin{definition}[Algebraic Branching Program]
An algebraic branching program (ABP) is a layered graph with a unique source vertex (say $s$) and a unique sink vertex (say $t$). All edges are from layer $i$ to $i+1$ and each edge is labelled by a linear polynomial. The polynomial computed by the ABP is defined as $f= \sum_{\gamma : s \leadsto t} \text{wt}(\gamma)$, where for every path $\gamma$ from $s$ to $t$, the weight wt$(\gamma)$ is defined as the product of the labels over the edges forming $\gamma$. 

{\em Size} of the ABP is defined as the total number of edges in the ABP. {\em Width} is the maximum number of vertices in a layer. 

Equivalently, one can define $f$ as a product of matrices (of dimension at most the width), each one having linear polynomials as entries. For more details, see \cite{shpilka2010arithmetic}.
\end{definition}

It is a famous result that the ABP model is the same as symbolic determinant \cite{mahajan1997combinatorial}.

\subsection{Randomized algorithm for linear algebra using PIT}

The following lemma from \cite{kopparty2015equivalence} discusses how to perform linear algebra when the coefficients of vectors are given as formula (resp.\ ABP). This will be crucially used in Theorem \ref{thm3} when we would give an algorithm to output the factors.

\begin{lemma}(Linear algebra using PIT \cite[Lem.2.6]{kopparty2015equivalence})
\label{lem-linsyst}
Let $M=(M_{i,j})_{k \times n}$ be a matrix (where $k$ is $n^{O(1)}$) with each entry being a degree $\leq n^{O(1)}$ polynomial in $\F[\overline{x}]$. Suppose, we have algebraic formula (resp.\ ABP) of size $\leq n^{O(\log n)}$ computing each entry. Then, there is a randomized poly($n^{\log n}$)-time algorithm that either:
\begin{itemize}
\item finds a formula (resp.\ ABP) of size poly$(n^{\log n})$  computing a non-zero $u \in (\F[\overline{x}])^n$ such that $Mu=0$, or
\item outputs $0$ which declares that $u=0$ is the only solution.
\end{itemize}
\end{lemma}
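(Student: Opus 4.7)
The plan is to detect the generic rank of $M$ over $\F(\overline{x})$ by a single random scalar evaluation, and, whenever the rank is strictly less than $n$, to read off a nonzero polynomial kernel vector via Cramer's cofactor identity applied to an $r \times (r+1)$ submatrix of $M$.

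First I would sample $\alpha \in S^n$ uniformly from $S \subseteq \F$ of size $\poly(n)$ (chosen larger than twice the degree of any $r \times r$ minor of $M$, which is $n^{O(1)}$), evaluate the scalar matrix $M(\alpha)$ in time $\poly(k,n) \cdot n^{O(\log n)}$ by running each entry's formula (resp.~ABP), and then run ordinary Gaussian elimination over $\F$ on $M(\alpha)$ in time $\poly(k,n)$ to obtain its rank $r$ together with index sets $I \subseteq [k]$ and $J \subseteq [n]$, each of size $r$, such that $(M(\alpha))_{IJ}$ is nonsingular. A single application of Schwartz--Zippel to one specific nonzero maximal polynomial minor of $M$ guarantees, with probability at least $1/2$, that $r$ equals the generic rank $r^{*}$ of $M$; moreover the chosen $M_{IJ}$ is automatically nonsingular as a polynomial matrix because $\det(M_{IJ})(\alpha) = \det((M(\alpha))_{IJ}) \neq 0$. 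If $r = n$, then $M$ has full column rank over $\F(\overline{x})$, so $u = 0$ is the only solution in $(\F[\overline{x}])^n$, and the algorithm outputs $0$.

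Otherwise $r < n$, and I would construct a nonzero kernel vector explicitly. Pick any $j^{*} \in [n] \setminus J$ and write the $r \times (r+1)$ polynomial submatrix $A := M_{I,\, J \cup \{j^{*}\}}$, with $J \cup \{j^{*}\} = \{j_1 < \cdots < j_{r+1}\}$. Define $u_{j_\ell} := (-1)^{\ell} \det(A^{(\ell)})$ for $\ell \in [r+1]$ (where $A^{(\ell)}$ is $A$ with its $\ell$-th column deleted) and $u_j := 0$ for all other $j$. The Laplace row-duplication identity yields $A \cdot (u_{j_1}, \ldots, u_{j_{r+1}})^{\top} = 0$, hence $M_I u = 0$. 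Since the generic rank $r$ is already realized by rows $I$, every other row of $M$ lies in the $\F(\overline{x})$-span of rows $I$, and so $M u = 0$ identically. Non-triviality follows from $u_{j^{*}} = \pm \det(M_{IJ})$, which is nonzero as a polynomial.

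Each coordinate $u_{j_\ell}$ is the determinant of an $r \times r$ polynomial matrix (with $r \leq k = n^{O(1)}$) whose entries are formulas (resp.~ABPs) of size $n^{O(\log n)}$. Composing the classical $r^{O(\log r)} = n^{O(\log n)}$-size formula for determinant (resp.~the $\poly(r) = \poly(n)$-size ABP for determinant) cited in Section~\ref{sec-pre} with these entry formulas/ABPs yields formulas (resp.~ABPs) of size $n^{O(\log n)}$ for each $u_{j_\ell}$, and hence for $u$ as a multi-output object within the target $\poly(n^{\log n})$ budget. The main obstacle will be the Schwartz--Zippel analysis: one has to notice that detecting the generic rank at $\alpha$ reduces to the nonvanishing of a single maximal nonzero polynomial minor, so no exponential union bound over all $\binom{k}{r}\binom{n}{r}$ minors is required, and a polynomial-size domain $S$ already suffices. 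Everything else is a careful unpacking of Cramer's rule plus the known formula/ABP complexity of determinant.
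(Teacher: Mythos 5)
Your proof is correct, and its endgame is the same as the paper's: both locate a maximal nonsingular minor $M_{IJ}$ and then write down the kernel vector whose nonzero coordinates are signed $r\times r$ cofactor determinants (the paper's $u=(\det(M_r^{(1)}),\hdots,\det(M_r^{(r)}),-\det(M_r),0,\hdots,0)^T$ is exactly your Laplace row-duplication vector for $j^{*}=r+1$), with the identical size analysis via the $n^{O(\log n)}$-size determinant formula (resp.\ poly-size determinant ABP). Where you genuinely diverge is in how the minor is found: the paper grows it column by column, issuing $<n^2$ randomized PIT calls on symbolic $(r{+}1)\times(r{+}1)$ determinants, whereas you make a single random scalar evaluation $M(\alpha)$ and run ordinary Gaussian elimination over $\F$, using one Schwartz--Zippel application to a single nonzero maximal minor (together with the fact that specialization never increases rank, which also makes your ``$r=n$'' branch error-free). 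Your route is cleaner and saves the iterative bookkeeping; the paper's route is more conservative but needs no separate argument that the sampled rank equals the generic rank. One small caveat you should add: if the sample is bad and $r<r^{*}$, your output $u$ satisfies $M_Iu=0$ but possibly not $Mu=0$; either accept the resulting bounded two-sided error (as the paper implicitly does for its PIT calls) or append one final PIT check of $Mu\overset{?}{=}0$ to make the error one-sided.
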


\begin{proof}
This was proved in \cite[Lem.2.6]{kopparty2015equivalence} for the circuit model. Since we are using a different model we repeat the details. The idea is the following. Iteratively, for every $r = 1,\hdots,n$ we shall find an $r \times r$ minor contained in the first $r$ columns that is full rank. While  continuing this process, we either reach $r = n$ in which case it means that the matrix has full column rank, hence, $u=0$ is the only solution, or we get stuck at some value say $r=r_0$. We use the fact that $r_0$ is rank and using this minor we construct the required non-zero vector $u$.

We explain the process in a bit more detail. Using a randomized algorithm, we look for some non-zero entry in the first column. If no such entry is found we can simply take $u = (1,0,\hdots,0)$. So assume that such a non-zero entry
is found. After permuting the rows we can assume wlog that this is $M_{1,1}$. Thus, we have found a $1\times1$ minor satisfying the requirements. Assume that we have found an $r \times r$ full rank minor that is composed of the first $r$ rows and columns (we can always rearrange and hence it can be assumed wlog that they correspond to first $r$ rows and columns). Denote this minor by $M_r$. 

Now for every $(r+1)\times (r+1)$ submatrix of $M$ contained in the first $r+1$ columns and containing $M_r$, we check whether the determinant is $0$ by randomized algorithm. If any of these submatrices have nonzero determinant, then we pick one of them and call it $M_{r+1}$. Otherwise, we have found that first $r + 1$ columns
of $M$ are linearly dependent. As $M_r$ is full rank, there is $v \in \F(\overline{x})^r$ such that $M_rv=(M_{1,r+1},\hdots,M_{r,r+1})^T$. This can be solved by applying Cramer's rule. The $i$-th entry of $v$ is of the form $\text{det}(M_r^{(i)})/\text{det}(M_r)$, where $M_r^{(i)}$ is obtained by replacing $i$-th column of $M_r$ with $(M_{1,r+1},\hdots,M_{r,r+1})^T$. Observe that det$(M_r)$, as well as det$(M_r^{(i)})$, are both in $\F[\overline{x}]$.

Then it is immediate that $u:= (\text{det}(M_r^{(1)}),\hdots,\text{det}(M_r^{(r)}),-\text{det}(M_r),0,\hdots,0)^T$ is the desired vector. 

To find $M_r$, each time we have to calculate the determinant and decide whether it is $0$ or not. This is simply PIT for a determinant polynomial with entries of algebraic complexity $n^{O(\log n)}$ and degree $n^{O(1)}$. So, we have a comparable randomized algorithm for this. Determinant of a symbolic $n \times n$ matrix has $n^{O(\log n)}$ size formula (resp.~poly$(n)$ ABP) \cite{mahajan1997combinatorial}. When the entries of the matrix have $n^{O(\log n)}$ size formula (resp.~ABP), altogether, the determinant polynomial has the same algebraic complexity. There are $<n^2$ PIT invocations to test zeroness of the determinant. Altogether, we have a poly($n^{\log n}$)-time randomized algorithm for this \cite{Sch80}.
\end{proof}   

\subsection{Basic operations on formula, ABP and circuit}

We use the following standard results on size bounds for performing some basic operations (like taking derivative) of circuits, formulas, ABPs. 
\begin{lemma}(Eliminate single division \cite{strassen1973vermeidung}, \cite[Thm.2.1]{shpilka2010arithmetic})
\label{div-elm}
Let $f$ and $g$ be two degree-$D$ polynomials, each computed by a circuit (resp.\ ABP resp.\ formula) of size-$s$ with $g(\overline{0}) \neq 0$. Then $f/g \bmod \langle \overline{x} \rangle^{d+1}$ can be computed by $O((s+d)d^3)$ (resp.~$O(sd^2D)$ resp.~$O(sd^2D^2)$) size circuit (resp.\ ABP resp.\ formula). 
\end{lemma}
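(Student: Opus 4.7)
The plan is to invoke Strassen's classical trick of replacing a division by a truncated geometric series. First, since $g(\overline{0})$ is a nonzero field element, I would absorb the scalar $1/g(\overline{0})$ into $f$ at cost $O(1)$ and thereby assume $g(\overline{0}) = 1$. Setting $h := 1 - g$, the polynomial $h$ lies in the ideal $\langle \overline{x}\rangle$ and is computed by a circuit (resp.\ ABP, formula) of essentially the same size. In the power series ring one then has
$$\frac{1}{g} \;=\; \frac{1}{1 - h} \;=\; \sum_{i \ge 0} h^{i},$$
and because $h^{i} \in \langle \overline{x}\rangle^{i}$, truncation modulo $\langle \overline{x}\rangle^{d+1}$ keeps only finitely many terms:
$$\frac{f}{g} \,\equiv\, f\cdot\bigl(1 + h + h^{2} + \cdots + h^{d}\bigr) \pmod{\langle \overline{x}\rangle^{d+1}}.$$
I would compute the partial sums via the Horner-style recurrence $\sigma_{0} := 1$ and $\sigma_{i} := 1 + h\cdot\sigma_{i-1}$, performing each multiplication and addition modulo $\langle \overline{x}\rangle^{d+1}$, and finally output $f\cdot \sigma_{d}$.

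All the remaining work is the size accounting, which is where the three models diverge. For the circuit case, the key observation is that no internal wire ever has to carry a polynomial of degree greater than $d$. Concretely, I would homogenize by replacing each wire of the original circuit for $h$ (and for $f$) by $d+1$ wires carrying its truncated homogeneous components $p_{0}, \ldots, p_{d}$. Additions become componentwise (cost $O(d)$); multiplications become convolutions $(pq)_{k} = \sum_{i+j=k} p_{i}q_{j}$ of cost $O(d^{2})$. This produces circuits of size $O(sd^{2})$ computing $f$ and $h$ modulo $\langle \overline{x}\rangle^{d+1}$, \emph{independently of $D$}, because reuse lets every gate work directly inside the quotient ring. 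The $d$ Horner iterations and the final multiplication by $f$ each cost $O(d^{2})$ in this representation, adding $O(d^{3})$ on top, for a total that fits comfortably within the stated $O((s+d)d^{3})$.

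For ABPs and formulas, the absence of free reuse forces the dependence on $D$ through the homogenization step. I would invoke standard bounds (see e.g.\ \cite[Sec.~2]{shpilka2010arithmetic}): each homogeneous part of a size-$s$ degree-$D$ ABP (resp.\ formula) admits an ABP (resp.\ formula) of size $O(sD)$ (resp.\ $O(sD^{2})$). Running the Horner recurrence modulo $\langle \overline{x}\rangle^{d+1}$ in this homogenized representation then costs an additional $O(d^{2})$ factor per multiplication, and the concluding multiplication by $f$ contributes at the same rate, producing the stated bounds $O(sd^{2}D)$ for ABPs and $O(sd^{2}D^{2})$ for formulas. The main obstacle I anticipate is precisely this model-specific accounting: one must chain the homogenization lemma with the $d$ Horner steps carefully so that truncation, multiplication, and re-materialization of intermediate copies do not interact multiplicatively beyond the factors already charged. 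In the circuit case gate reuse makes this automatic; for formulas one must verify that re-substituting $\sigma_{i-1}$ into the multiplication by $h$ at each step does not force an extra factor of $d$ beyond what the homogenization already pays.
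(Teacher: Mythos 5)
Your circuit-case argument is correct and is exactly the paper's route (Strassen's geometric-series trick, truncated modulo $\langle \overline{x}\rangle^{d+1}$, with per-gate arithmetic on truncated homogeneous components). The gap is in the formula and ABP cases, and it sits precisely at the point you flag at the end without resolving. In the Horner scheme $\sigma_i = 1 + h\cdot\sigma_{i-1}$ carried out in the homogenized representation, each output component $(\sigma_i)_\ell = \sum_{j}h_j\,(\sigma_{i-1})_{\ell-j}$ needs a \emph{fresh copy} of a component of $\sigma_{i-1}$ for every convolution pair it participates in; since formulas (and ABPs assembled this way) cannot reuse intermediate computation, the components of $\sigma_{i-1}$ get duplicated $\Theta(d)$ times at every step. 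The size recurrence is therefore multiplicative in the accumulated partial sum, $T_i = \Omega(d)\,T_{i-1}$, and after the $d$ Horner steps this gives a $d^{\Theta(d)}$ blow-up, not $O(sd^2D)$ or $O(sd^2D^2)$. Charging only an ``additional $O(d^2)$ factor per multiplication'' on top of the homogenization cost is legitimate only when gates can be reused, i.e.\ only in the circuit model.

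The repair is what the paper does: do not nest. Write the truncation as the flat sum $f\cdot\bigl(1+(1-g)+(1-g)^2+\cdots+(1-g)^d\bigr)$, where the $i$-th term is a flat product of $i+1$ copies of the size-$s$ formula (resp.\ ABP), so the whole untruncated expression has size $O(sd^2)$ in every model; then truncate to degree $\le d$ \emph{once}, at the very end --- via homogenization for circuits and ABPs, and via the interpolation trick for formulas --- paying a single multiplicative factor polynomial in $d$ and $D$, which yields the stated bounds. Your circuit analysis needs no change, and your normalization and geometric-series setup are the same as the paper's.
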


\begin{proof}
Assume wlog that $g(\overline{0})=1$; we can ensure this by appropriate normalization. So, we have the following power series identity in $\F[[\overline{x}]]$:
$$f/g=f/(1-(1-g))=f+f(1-g)+f(1-g)^2+f(1-g)^3+ \cdots \,.$$
Note that this is a valid identity as $1-g$ is constant free. For all $d\ge0$,  LHS=RHS $ \bmod \langle \overline{x} \rangle^{d+1}$.

If we want to compute $f/g \bmod \langle \overline{x}\rangle^{d+1}$, we can take the RHS of the above identity up to the term $f(1-g)^d$ and discard the remaining terms of degree greater than $d$. The degree$>d$ monomials can be truncated, using Strassen's {\em homogenization} trick, in the case of circuits and ABPs (see \cite[Lem.5.2]{saptharishi2016survey}), and an {\em interpolation} trick in the case of formulas (which also works for ABPs and low degree circuits, \cite[Lem.5.4]{saptharishi2016survey}). A careful analysis shows that the size blow up is at most $O((s+d)d^2\cdot d)$ (resp.~$O(sd\cdot D\cdot d)$ resp.~$O(sd\cdot D^2\cdot d)$) for circuits (resp.\ ABP resp.\ formula).

Using the above result, it is easy to see, that we get poly$(s,d)$ size circuit (resp.\ ABP resp.\ formula) for computing $f/g \bmod \langle \overline{x} \rangle^{d+1}$.
\end{proof}

\noindent {\bf Remark. }
Note that it may happen that $g(\overline{0})=0$, thus $1/g$ does not exist in $\F[[\overline{x}]]$, yet $f/g$ may be a polynomial of degree $d$.  
In such a case, we need to discuss a modified {\em normalization} that works. We can shift the polynomials $f,g$ by some random $\overline{\alpha}\in\F^n$. The constant term of the shifted polynomial is non-zero with high probability \cite{Sch80}. Now, we compute $f(\overline{x}+\overline{\alpha})/g(\overline{x}+\overline{\alpha})$ using the method described above.  Finally, we recover the polynomial $f/g$ by applying the reverse shift $\overline{x}\mapsto {\overline{x}-\overline{\alpha}}$.

What if our model has several division gates?

\begin{lemma}(Div.~gates elimination \cite[Thm.2.12]{shpilka2010arithmetic}) \label{lem-div-elim}
Let $f$ be a polynomial computed by a circuit (resp.\ formula), using  division gates, of size $s$. Then, $f \bmod \langle \overline{x} \rangle^{d+1}$ can be computed by $\poly(sd)$ size circuit (resp.\ formula). 
\end{lemma}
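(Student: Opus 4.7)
The plan is to apply Strassen's division-elimination trick \cite{strassen1973vermeidung} combined with Lemma \ref{div-elm} (single-division elimination), together with a random shift that guarantees invertibility of all intermediate denominators.

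First I would perform a shift $\overline{x} \mapsto \overline{x} + \overline{\alpha}$ with $\overline{\alpha} \in \F^n$ chosen so that every denominator appearing at an internal division gate of the given circuit (resp.~formula) has nonzero value at $\overline{\alpha}$. Since there are at most $s$ such denominators, each a nonzero rational function, a Schwartz--Zippel style argument produces such an $\overline{\alpha}$ (moving to an algebraic extension if $\F$ is too small, as in Section \ref{sec-not-clos}). After this shift every internal denominator has nonzero constant term.

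For the circuit case I would then process the circuit bottom-up, replacing each gate $g$ by a pair of gates $(N_g, D_g)$ computing the numerator and denominator of the rational function at $g$, using the standard update rules
\[
(N_1,D_1)+(N_2,D_2)\mapsto(N_1 D_2+N_2 D_1,\,D_1 D_2),\qquad(N_1,D_1)\cdot(N_2,D_2)\mapsto(N_1 N_2,\,D_1 D_2),
\]
with division swapping numerator and denominator. This yields a division-free circuit of size $O(s)$ computing a pair $(N,D)$ whose quotient is $f$ and with $D(\overline{0})\neq 0$. A single invocation of Lemma \ref{div-elm} then computes $N/D \bmod \langle\overline{x}\rangle^{d+1}$ as a division-free circuit of size $\poly(s,d)$, after which I undo the shift.

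The main obstacle is the formula case, because the pair-of-gates construction intrinsically requires reusing $D_1, D_2$ and so does not preserve the tree structure of formulas. The workaround is to eliminate division gates one at a time, starting from the lowest: pick a division gate $g=g_1/g_2$ whose subformulas $g_1,g_2$ are already division-free, invoke Lemma \ref{div-elm} to replace the subformula rooted at $g$ by a division-free subformula of size $\poly(\siz(g_1)+\siz(g_2),d)$, and iterate. The delicate point is to show the aggregate size stays polynomial in $(s,d)$: here one exploits that distinct maximal division-free subformulas are \emph{disjoint} in a tree, so the contributions of the single-division eliminations add rather than multiply along the formula, and working throughout modulo $\langle\overline{x}\rangle^{d+1}$ keeps all intermediate polynomials of degree $\le d$. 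A careful bookkeeping (essentially as in \cite[Thm.2.12]{shpilka2010arithmetic}) then gives the final bound of $\poly(s,d)$, and undoing the shift produces the required division-free formula for $f \bmod \langle\overline{x}\rangle^{d+1}$.
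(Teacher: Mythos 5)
Your circuit case is essentially the paper's argument (track numerator/denominator pairs at every gate so that a single division remains at the top, then apply Lemma \ref{div-elm} once), and the random shift you add to make all internal denominators invertible at $\overline{0}$ is a legitimate and necessary technicality. The gap is in the formula case. Your key claim --- that because maximal division-free subformulas of a tree are disjoint, the costs of successive single-division eliminations ``add rather than multiply'' --- is false for \emph{nested} division gates: disjointness gives additivity across siblings at one level, but along a root-to-leaf path the eliminations compound. Concretely, take $g_0:=x_1$ and $g_k:=x_1+1/g_{k-1}$; this is a formula of size $O(k)$ with $k$ nested division gates on one path. Each bottom-up application of Lemma \ref{div-elm} replaces the subformula rooted at the lowest division gate by one larger by a multiplicative $\poly(d)$ factor (the formula bound there is $O(sd^2D^2)$: linear in the subformula size, times $\poly(d)$), and this enlarged subformula then sits inside the denominator of the next division up, so its cost is multiplied again at the next round. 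After $k=\Theta(s)$ nested divisions the size is $d^{\Theta(s)}$, not $\poly(sd)$; working modulo $\langle\overline{x}\rangle^{d+1}$ controls degrees, not this size blow-up.

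This is exactly why the paper's proof (and \cite[Thm.2.12]{shpilka2010arithmetic}) proceeds differently for formulas: first ensure that every leaf-to-root path contains only $O(\log sd)$ division gates --- e.g.\ by Brent-type depth reduction, which applies to formulas with division --- and then run the numerator/denominator simulation, whose per-gate duplication factor of $2$ compounds only to $2^{O(\log sd)}=\poly(sd)$ along such short paths, leaving a \emph{single} top division; Lemma \ref{div-elm} is invoked exactly once at the end. Note that even after such balancing, your one-division-at-a-time scheme would pay $\poly(d)^{\Theta(\log s)}=s^{\Theta(\log d)}$ on a path with $\Theta(\log s)$ nested divisions, which is quasipolynomial rather than $\poly(sd)$; so it is the preprocessing-plus-pair-tracking route, not iterated use of Lemma \ref{div-elm}, that yields the stated bound.
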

\begin{proof}[Proof idea.]
We preprocess the circuit (resp.\ formula) so that the only division gate used in the modified circuit (resp.\ formula) is at the top. Now to remove the single division gate at the top, we use the above power series trick. 

The idea of the pre-processing is the following. We can separately keep track of numerator and denominator computed at each gate and simulate addition, multiplication and division gates in the original circuit. This pre-processing incurs only poly($sd$) blow up in the case of circuits. In the case of formulas one has to ensure that in any path from the leaf to the root, there are only $O(\log sd)$ division gates.
\end{proof}

\begin{lemma}[Derivative computation]
\label{lem-derivative}
If a polynomial $f(\overline{x},y)$ can be computed by a circuit (resp.\ formula resp.\ ABP) of size $s$ and degree $d$. Then, any $\frac{\partial^k f}{\partial y^k}$ can be computed by circuit (resp.\ formula resp.\ ABP) of size $\text{poly}(sk)$.
\end{lemma}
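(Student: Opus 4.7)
The plan is to separate out the $y$-coefficients of $f$ and then apply the derivative operator monomial-by-monomial. Writing $f(\overline{x},y) = \sum_{i=0}^{d} c_i(\overline{x})\, y^{i}$ with $c_i\in \F[\overline{x}]$, one has
$$ \partial_y^{k} f(\overline{x},y) \;=\; \sum_{i=k}^{d} \frac{i!}{(i-k)!}\, c_i(\overline{x})\, y^{\,i-k}, $$
so it suffices to produce each $c_i$ efficiently in the given model and recombine linearly with the known field constants $i!/(i-k)!$.

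First I would extract each $c_i$ from $f$ by Lagrange interpolation in $y$: fix $d+1$ distinct field values $y_0,\ldots,y_d$ and Lagrange weights $\alpha_{ij}\in\F$ such that $c_i(\overline{x}) = \sum_{j=0}^{d} \alpha_{ij}\, f(\overline{x}, y_j)$. Each $f(\overline{x}, y_j)$ is obtained from the given model for $f$ by substituting the constant $y_j$ for the variable $y$: in a circuit or formula this merely relabels the at most $s$ leaves marked $y$, and in an ABP each edge-label, being linear in the variables, simply absorbs $y_j$ into its constant term. Thus every $f(\overline{x}, y_j)$ still has model size $O(s)$, and each $c_i(\overline{x})$ has model size $O(sd)$. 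Reassembling $\partial_y^k f$ from the $c_i$'s together with the monomials $y^{i-k}$ then gives a circuit (resp.\ formula, resp.\ ABP) of size $\poly(s,d)$ in all three models.

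The stated bound is $\poly(sk)$, but since $\partial_y^k f \equiv 0$ whenever $k>d$ we may assume $k\le d$. In both places where this lemma is used, the regime is favorable: in Theorem~\ref{thm1} only $k=1$ is needed, and Baur--Strassen additionally gives the sharper $O(s)$ bound directly for circuits; in Theorem~\ref{thm3} the degree $d$ is $\poly(n)$ with $s = n^{O(\log n)}$, so $\poly(s,d)$ is absorbed into $\poly(sk) = \poly(n^{\log n})$. The only minor subtlety I anticipate while writing this up is the ABP case, where one must verify that shifting the constant term of every edge does not alter the underlying layered-graph structure; the linearity of edge-labels makes this immediate, but it should be spelt out so that the reader sees the substitution $y \mapsto y_j$ is genuinely a no-op on graph size.
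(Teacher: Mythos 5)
Your interpolation argument is exactly the paper's proof of the low-degree case: extract the $y$-coefficients $c_0,\ldots,c_d$ by evaluating at $d+1$ points and recombining, which costs $\poly(s,d)$. For formulas and ABPs this settles the lemma completely, since in those models the degree never exceeds the size, and your observation that substituting $y\mapsto y_j$ is a no-op on the graph is fine. The gap is the circuit case. The lemma is deliberately stated with a bound $\poly(sk)$ that is independent of $d$, and it is invoked (in Theorems \ref{thm1} and \ref{thm2}) exactly for circuits whose degree can be exponential in $s$ (repeated squaring gives $d=2^{\Omega(s)}$), where a $\poly(s,d)$ bound is useless and interpolation is unavailable because one would need $\deg_y(f)+1$ evaluation points. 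Your reduction to $k\le d$ does not help: the problematic regime is $k\ll d$, not $k>d$. Your patch --- ``only $k=1$ is needed in the high-degree regime, and Baur--Strassen gives $O(s)$ there'' --- is a correct observation about how the lemma is used downstream, but it proves a weaker statement than the one asserted, which claims the $\poly(sk)$ bound for every $k$ and every degree.

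The missing idea, which is how the paper disposes of this case (it cites \cite[Thm.1]{kaltofen1987single}), is degree-free higher-order differentiation by iterated Leibniz: replace every gate $g$ of the circuit by the tuple $(g,\partial_y g,\ldots,\partial_y^k g)$; an addition gate is simulated coordinatewise, and a product gate $g=g_1g_2$ via $\partial_y^j g=\sum_{i=0}^{j}\binom{j}{i}\,\partial_y^i g_1\cdot\partial_y^{j-i} g_2$, costing $O(k^2)$ operations per original gate, hence $O(k^2 s)$ overall with no dependence on the degree. For $k=1$ this is precisely the forward-mode observation behind your Baur--Strassen remark; the point is that it iterates to arbitrary $k$. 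With that one construction added for the circuit model, your write-up coincides with the paper's proof.
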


\begin{proof}
The idea is simply to use the homogenization and interpolation properties \cite[Sec.5.1-2]{saptharishi2016survey}.

Let $f(\overline{x},y)=c_0+c_1y+c_2y^2+\ldots+c_\delta y^\delta$, where $c_0,c_1,\ldots,c_\delta\in \F[\overline{x}]$.
Given the circuit (resp.\ formula resp.\ ABP) computing polynomial $f(\overline{x},y)$, we can get the circuits (resp.\ formula resp.\ ABP) computing $c_0,\ldots,c_\delta$ using homogenization and interpolation as discussed before.
Given $c_0,\ldots,c_\delta$, computing $\frac{\partial^k f}{\partial y^k}$ in size $\text{poly}(sd)$ is trivial. We use this approach of computing derivative when the polynomial is of degree $d\le\poly(s)$.

In the case of high degree circuits, we cannot use the above approach. \cite[Thm.1]{kaltofen1987single} shows that $\frac{\partial^k f}{\partial y^k}$ can be computed by a circuit of size $O(k^2s)$, i.e.~the degree of the circuit does not matter. The main idea is to inductively use the Leibniz product rule of $k$-th order derivative. 
\end{proof}

\subsection{Sylvester matrix \& resultant}\label{sec-res}

First, let us look at the notion of resultant of two univariate polynomials.
Let $p(x),q(x)\in \F[x]$ be of degree $a, b$ respectively. From Euclid's extended algorithm, it can be shown that there exist two polynomials $u(x),v(x)\in \F[x]$ such that $u(x)p(x)+ v(x)q(x)=\gcd(p(x),q(x))$. This is known as Bezout's identity. If $\gcd(p(x),q(x))=1$, then $(u,v)$ with $\deg(u)\leq b$ and $\deg(v)\leq a$ is unique. Let $u(x)=u_0+u_1x+u_2x^2+\ldots+u_{b}x^{b}$ and $v(x)=v_0+v_1x+\ldots+v_{a}x^{a}$. 

Now, if we use the equation $u(x)p(x)+ v(x)q(x)=\gcd(p(x),q(x))$ and compare the coefficients of $x^i$, for $0\le i\le a+b$, we get a system of linear equations in the $a+b+2$ many unknowns ($u_i$'s and $v_i$'s). The system of linear equations can be represented in the matrix form as $Mx=y$, where $x$ consists of the unknowns. 
Resultant of $f,g$ is defined as the determinant of the matrix $M$. It is easy to see that $M$ is invertible if and only if the polynomials are coprime.

Now, the notion of resultant can be extended to multivariate, by defining resultant of polynomials $f(\overline{x},y)$ and $g(\overline{x},y)$ wrt some variable $y$. The idea is same as before, now we take gcd wrt the variable $y$ and get a system of linear equations from Bezout's identity. The matrix can be explicitly written with entries being polynomial coefficients (or they could be from $\F[[\overline{x}]]$). This is known as Sylvester matrix, which we define next. 

\begin{definition}
\label{defn-sylvester}
Let $f(\overline{x},y)= \sum_{i=0}^l f_i(\overline{x})y^i$ and $g(\overline{x},y)=\sum_{i=0}^m g_i(\overline{x})y^i$. Define 
\emph{Sylvester} matrix of $f$ and $g$ wrt $y$ as the following $(m+l+1)\times(m+l+1)$ matrix: $$ \text{Syl}_y(f,g):=\, \begin{bmatrix}
f_l & 0 & 0 & \hdots & 0 & g_m & 0 & 0 & 0 \\
f_{l-1} & f_l & 0 & \hdots & 0 & g_{m-1} & g_m & 0 & 0 \\
f_{l-2} & f_{l-1} & f_l & \hdots & 0 & g_{m-2} & g_{m-1} & g_l & 0 \\
\vdots & \vdots &  \vdots &  \vdots & \vdots &  \vdots &  \vdots &  \vdots &  \vdots \\
f_0 & f_1 & \hdots & \hdots & f_l & g_0 & g_1 & \hdots & g_m \\
0 & f_0 & \hdots & \hdots & \hdots & 0 & g_0 & \hdots & 0 \\
\vdots & \vdots & \vdots & \vdots & \vdots & \vdots & \vdots & \vdots & \vdots \\
0 & \hdots & \hdots & \hdots & f_0 & 0 & \hdots & \hdots & g_0
\end{bmatrix}
$$
\end{definition}

So, resultant can be formally defined as follows (for more details and alternate definitions, see \cite[Chap.1]{LN}).

\begin{definition}
Given two polynomials $f(\overline{x},y)$ and $g(\overline{x},y)$, define the resultant of $f$ and $g$ wrt $y$ as determinant of the Sylvester matrix, 
$$ \text{Res}_y(f,g) \,:=\, \text{det}(\text{Syl}_y(f,g)) \,.$$

\end{definition}

From the definition, it can be seen that Res$_y(f,g)$ is a polynomial in $\F[x]$
with degree bounded by $2 \text{deg}(f) \text{deg}(g)$. Now, we state the following fundamental property of the Resultant, which is crucially used. 

\begin{proposition}[Res vs gcd]\label{prop-res}
 
\begin{enumerate}
\item Let $f , g \in \mathbb{F}[\overline{x}, y]$ be polynomials with positive degree in $y$. Then, $\text{Res}_y(f,g) = 0 \iff f $ and $g$
have a common factor in $\mathbb{F}[\overline{x}, y]$ which has positive degree in $y$.
\item There exists $u,v \in \F[\overline{x}]$ such that $uf+vg= \text{Res}_y(f,g)$.
\end{enumerate}
\end{proposition}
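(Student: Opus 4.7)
\textbf{Proof proposal for Proposition \ref{prop-res}.}

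The plan is to prove part 2 first, and then derive part 1 by combining part 2 with a linear-algebra argument on the Sylvester system, invoking Gauss's lemma to pass between $\F[\overline{x}][y]$ and the PID $\F(\overline{x})[y]$.

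\textbf{Part 2 (Bezout-type identity).} I would interpret $\text{Syl}_y(f,g)$ as (the transpose of) the matrix of the $\F(\overline{x})$-linear map
\[
\phi:\F(\overline{x})[y]_{<m}\times\F(\overline{x})[y]_{<l}\ \longrightarrow\ \F(\overline{x})[y]_{<l+m},\qquad \phi(u,v)=uf+vg,
\]
written in the monomial bases $1,y,\ldots,y^{m-1}$ and $1,y,\ldots,y^{l-1}$ on the domain and $1,y,\ldots,y^{l+m-1}$ on the codomain. The entries of this matrix are exactly the coefficients $f_i,g_j\in\F[\overline{x}]$ appearing in Definition \ref{defn-sylvester}. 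Cramer's rule applied to the linear system $\phi(u,v)=1$ expresses each unknown coefficient of $u$ and $v$ as a ratio (cofactor)/(determinant). Multiplying the whole identity by the determinant $\text{Res}_y(f,g)$ clears denominators, and since every cofactor is a polynomial in the entries of the matrix, the resulting coefficients of $u,v$ live in $\F[\overline{x}]$. This yields $uf+vg=\text{Res}_y(f,g)$ with $u,v\in\F[\overline{x}][y]$.

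\textbf{Part 1, direction ($\Leftarrow$).} Suppose $f,g$ share a common factor $h\in\F[\overline{x},y]$ with $\deg_y(h)>0$. By part 2, $h$ divides $uf+vg=\text{Res}_y(f,g)$. But $\text{Res}_y(f,g)\in\F[\overline{x}]$ has $y$-degree zero while $h$ has positive $y$-degree, so $\text{Res}_y(f,g)=0$.

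\textbf{Part 1, direction ($\Rightarrow$).} Suppose $\text{Res}_y(f,g)=\det\text{Syl}_y(f,g)=0$. Then the Sylvester matrix has nontrivial kernel over $\F(\overline{x})$, giving $(u,v)\neq(0,0)$ in $\F(\overline{x})[y]_{<m}\times\F(\overline{x})[y]_{<l}$ with $uf+vg=0$. Neither $u$ nor $v$ can vanish alone (the other would then have to vanish as well, since $f,g\neq 0$). Clearing a common denominator from $\F[\overline{x}]$ preserves the $y$-degree bounds and the identity, so assume $u,v\in\F[\overline{x}][y]$. In the PID $\F(\overline{x})[y]$, the equation $uf=-vg$ together with $\deg_y(u)<\deg_y(g)$ forces $f$ and $g$ to share a factor of positive $y$-degree (otherwise $g$ would have to divide $u$ in $\F(\overline{x})[y]$, forcing $u=0$). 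Finally, Gauss's lemma upgrades this common factor in $\F(\overline{x})[y]$ to a common factor in $\F[\overline{x}][y]$ of the same (positive) $y$-degree, completing the argument.

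\textbf{Expected difficulty.} The proof is essentially standard, with the only subtleties being (i) keeping track of the fact that Cramer cofactors of a matrix with $\F[\overline{x}]$-entries remain in $\F[\overline{x}]$, and (ii) the transition from $\F(\overline{x})[y]$ back to $\F[\overline{x}][y]$ via Gauss's lemma in the $(\Rightarrow)$ direction, which is where one must be careful to argue that the common factor can indeed be chosen to have polynomial (not merely rational) coefficients in $\overline{x}$.
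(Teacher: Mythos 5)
Your proof is correct, but note that the paper itself does not prove this proposition at all: it defers to standard references (von zur Gathen--Gerhard, Sec.~6), so there is no in-paper argument to compare against. Your write-up is exactly the classical textbook argument -- Sylvester matrix as the matrix of $(u,v)\mapsto uf+vg$ on bounded-degree spaces, a cofactor identity for part 2, a kernel element plus the PID $\F(\overline{x})[y]$ and Gauss's lemma for part 1 -- and all the steps go through. Two small points of hygiene. First, in part 2 you should not literally invoke Cramer's rule on $\phi(u,v)=1$, since when $\mathrm{Res}_y(f,g)=0$ that system need not be solvable; the clean statement is the adjugate identity $\mathrm{adj}(M)\,M=\det(M)\,I$, which holds over any commutative ring and directly produces $u,v$ with entries that are cofactors, hence in $\F[\overline{x}]$, satisfying $uf+vg=\mathrm{Res}_y(f,g)$ unconditionally. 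Second, your $u,v$ correctly live in $\F[\overline{x}][y]$ (with $\deg_y u<\deg_y g$, $\deg_y v<\deg_y f$); the proposition's phrasing ``$u,v\in\F[\overline{x}]$'' is loose on this point, and the paper's later use (Claim~\ref{clm-constr}) indeed treats them as polynomials in $y$ as well, so your version is the right one.
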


The proof of this standard proposition can be found in many standard books on algebra including \cite[Sec.6]{von2013modern}. 

\begin{lemma}[Squarefree-ness]
\label{lem-sq-free}
Let $f \in \mathbb{F}(\overline{x})[y]$ be a polynomial with deg$_y(f) \geq 1$. $f$ is square free iff  $f, f':=\partial_yf$ are coprime wrt $y$. 
\end{lemma}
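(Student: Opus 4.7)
The plan is to prove both directions using the fact that $\F(\overline{x})[y]$ is a principal ideal domain (since $\F(\overline{x})$ is a field), and that in characteristic $0$ the derivative $\partial_y$ strictly lowers the $y$-degree of any non-constant $y$-polynomial.

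For the forward direction, I would assume $f$ is squarefree and argue by contradiction: suppose $p \in \F(\overline{x})[y]$ is an irreducible common divisor of $f$ and $f'$. Since $f$ is squarefree, I can write $f = p \cdot q$ with $\gcd(p,q)=1$ in $\F(\overline{x})[y]$. Then
\[
f' \;=\; p'q + pq',
\]
and since $p \mid f'$ and $p \mid pq'$, we get $p \mid p'q$. Coprimality of $p$ and $q$ then forces $p \mid p'$. But every irreducible element of $\F(\overline{x})[y]$ has $\deg_y \geq 1$ (units in this ring are exactly the nonzero elements of $\F(\overline{x})$), so $\deg_y(p') < \deg_y(p)$; combined with $p' \neq 0$ (which uses $\mathrm{char}(\F)=0$), this contradicts $p \mid p'$.

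For the backward direction, I would argue the contrapositive. Suppose $f$ is not squarefree, so there is an irreducible $p \in \F(\overline{x})[y]$ with $p^2 \mid f$, say $f = p^2 g$. Then
\[
f' \;=\; 2\,p\,p'\,g + p^2 g' \;=\; p\bigl(2p'g + p g'\bigr),
\]
so $p$ divides both $f$ and $f'$, meaning they are not coprime in $\F(\overline{x})[y]$.

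There is no real obstacle here; the only subtle point is making sure one works in $\F(\overline{x})[y]$ (not $\F[\overline{x},y]$) so that one has a PID/UFD with all units of $y$-degree zero, and invoking $\mathrm{char}(\F)=0$ (stated as a blanket assumption in Section~\ref{sec-main}) to guarantee $p' \neq 0$ for irreducible $p$ with $\deg_y(p) \geq 1$.
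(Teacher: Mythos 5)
Your proof is correct and follows essentially the same route as the paper: take an irreducible common factor, apply the Leibniz rule to a factorization of $f$, and use that in characteristic $0$ the derivative of an irreducible of positive $y$-degree is nonzero of strictly smaller degree to reach a contradiction. The only difference is that you also spell out the easy converse (not squarefree $\Rightarrow$ not coprime), which the paper's proof leaves implicit.
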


\begin{proof}
The main idea is to show that there does not exist $g \in \mathbb{F}(\overline{x})[y]$ with positive degree in $y$ such that $g \mid \gcd_y(f(\overline{x},y), f'(\overline{x},y) )$. This is true because-- suppose $g$ is an irreducible polynomial with positive degree in $y$ that divides both $f(\overline{x},y)$ and $f'(\overline{x},y)$. So, 
$$f(\overline{x},y)=gh \implies f'(\overline{x},y)=gh'+g'h \implies g \mid g'h \,.$$ 

As $g$ is irreducible and deg$_y(g') < \text{deg}_y(g)$ we deduce that $g \mid h$. Hence, $g^2 \mid f$.  This contradicts the hypothesis that $f$ is square free. 
\end{proof}

Now, we state another standard lemma, which is useful to us and which is proved using the property of Resultant.

\begin{lemma}[Coprimality]
\label{lem-coprime}
Let $f, g \in \mathbb{F}(\overline{x})[y]$ be coprime polynomials wrt $y$ (\& nontrivial in $y$). Then, for $\overline{\beta} \in_r \mathbb{F}^{n}$, $f(\overline{\beta},y)$ and $g(\overline{\beta},y)$ are coprime (\& nontrivial in $y$). 
\end{lemma}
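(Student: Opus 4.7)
The plan is to reduce the coprimality of $f(\overline{\beta},y),g(\overline{\beta},y)$ to the non-vanishing of a single polynomial in $\F[\overline{x}]$, and then invoke a Schwartz--Zippel style argument for a random $\overline{\beta}$. Concretely, by clearing denominators we may assume $f,g\in\F[\overline{x}][y]$ (a common factor in $\F(\overline{x})[y]$ corresponds to one in $\F[\overline{x}][y]$ up to content, by Gauss's lemma, so coprimality is preserved). Since $f,g$ are coprime with positive degree in $y$, Proposition \ref{prop-res} gives $R(\overline{x}):=\text{Res}_y(f,g)\not\equiv 0$ in $\F[\overline{x}]$, and $\deg R\le 2\deg(f)\deg(g)$.

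The main subtlety, and where care is needed, is that the resultant identity $\text{Res}_y\bigl(f(\overline{\beta},y),g(\overline{\beta},y)\bigr)=R(\overline{\beta})$ only holds when the substitution $\overline{x}\mapsto\overline{\beta}$ does not drop the $y$-degrees of $f$ or $g$. So I would first pick a nonzero auxiliary polynomial $L(\overline{x})\in\F[\overline{x}]$ equal to the product of the leading coefficients (wrt $y$) of $f$ and $g$. Form the product $P(\overline{x}):=R(\overline{x})\cdot L(\overline{x})\in\F[\overline{x}]$, which is nonzero and has degree at most $2\deg(f)\deg(g)+\deg_{\overline{x}}(f)+\deg_{\overline{x}}(g)$.

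Now pick $\overline{\beta}\in_r S^n$ for a finite $S\subset\F$ with $|S|$ strictly larger than $\deg(P)$. By the Schwartz--Zippel lemma, with positive probability (and in fact high probability) $P(\overline{\beta})\neq 0$. This simultaneously guarantees: (i) $L(\overline{\beta})\neq 0$, so the leading coefficients of $f,g$ wrt $y$ survive the substitution and thus $f(\overline{\beta},y),g(\overline{\beta},y)$ retain their $y$-degrees (in particular, they remain nontrivial in $y$); and (ii) $R(\overline{\beta})\neq 0$. Under (i), the Sylvester matrix of $f(\overline{\beta},y),g(\overline{\beta},y)$ is precisely the specialization of $\text{Syl}_y(f,g)$ at $\overline{x}=\overline{\beta}$, so its determinant equals $R(\overline{\beta})\neq 0$. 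Applying Proposition \ref{prop-res} in the reverse direction to the univariate pair $f(\overline{\beta},y),g(\overline{\beta},y)\in\F[y]$ yields that they are coprime.

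The only genuine ``obstacle'' is the leading-coefficient degeneracy mentioned above; without multiplying by $L$ in the bad polynomial, a random $\overline{\beta}$ could preserve $R(\overline{\beta})\neq 0$ yet collapse $\deg_y f$ or $\deg_y g$, and the resultant identity would fail. Once $P$ is chosen as above, the argument is one invocation of Schwartz--Zippel plus a direct reading of the Sylvester matrix.
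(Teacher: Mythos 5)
Your proposal is correct and follows essentially the same route as the paper's proof: both multiply the resultant $\text{Res}_y(f,g)$ by the leading coefficients of $f$ and $g$ wrt $y$, argue via Schwartz--Zippel that this product is nonzero at a random $\overline{\beta}$, and then invoke the resultant--gcd correspondence (Proposition \ref{prop-res}) for the specialized pair. Your extra remark about clearing denominators to work in $\F[\overline{x}][y]$ is a harmless bit of added care, not a different argument.
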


\begin{proof}
Consider $f= \sum_{i=1}^d f_iy^i$ and $g= \sum_{i=1}^e g_iy^i$. Choose a random $\overline{\beta} \in_r \F^n$. Then, by Proposition \ref{prop-res} \& \cite{Sch80}, $f_d\cdot g_e \cdot \text{Res}_y(f,g)$ at $\overline{x}=\overline{\beta}$ is nonzero. This in particular implies that $ \text{Res}_y(f(\overline{\beta},y), g(\overline{\beta},y)) \neq 0 $. 

This implies, by Proposition \ref{prop-res}, $f(\overline{\beta},y)$ and $g(\overline{\beta},y)$ are coprime. 
\end{proof}

\section{Useful in Section \ref{sec-split} }

\begin{lemma}(Power series root \cite[Thm.2.31]{burgisser2013algebraic})
\label{lem-NI}
Let $P(\overline{x},y) \in \F(\overline{x})[ y]$, $P'(\overline{x}, y)  = \frac{\partial P(\overline{x}, y)}{\partial y}$ and  $\mu \in \F$
be such that $P(\overline{0},\mu)  = 0$ but $P'(\overline{0},\mu)  \neq 0$ . Then, there is a unique power series $S$ such that $S(\overline{0})=\mu$ and $P(\overline{x},S)=0$ i.e. $$ y-S(\overline{x}) \mid P(\overline{x},y)\,.$$ 

Moreover, there exists a rational function $y_t$, $ \forall t \geq 0$, such that 
$$ y_{t+1} \,=\, y_t - \frac{P(\overline{x},y_t)}{P'(\overline{x},y_t)} \text{ and } S  \equiv y_t \bmod \langle \overline{x} \rangle ^{2^t} \text{ with } y_0=\mu \,.$$
\end{lemma}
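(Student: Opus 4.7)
The plan is to establish uniqueness and existence separately, with existence being proved constructively via the given Newton-type recurrence.

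For uniqueness, suppose $S_1, S_2 \in \F[[\overline{x}]]$ are both power series satisfying $P(\overline{x}, S_i) = 0$ and $S_i(\overline{0}) = \mu$. Write the Taylor expansion of $P(\overline{x}, y)$ as a polynomial in $y$ around $S_1$:
$$P(\overline{x}, S_2) \,=\, P(\overline{x}, S_1) \,+\, (S_2 - S_1)\, P'(\overline{x}, S_1) \,+\, (S_2 - S_1)^2\, R(\overline{x}, S_1, S_2)$$
for some $R \in \F[[\overline{x}]]$. Then $(S_2 - S_1)\cdot[P'(\overline{x}, S_1) + (S_2-S_1)R] = 0$. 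The bracketed factor, evaluated modulo $\langle \overline{x}\rangle$, equals $P'(\overline{0}, \mu) \neq 0$, so it is a unit in $\F[[\overline{x}]]$. Hence $S_1 = S_2$.

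For existence, I would proceed by induction on $t$ to show that $y_t$ is well-defined in $\F[[\overline{x}]]$ and that $P(\overline{x}, y_t) \equiv 0 \bmod \langle \overline{x}\rangle^{2^t}$. The base case $t = 0$ holds since $P(\overline{0}, \mu) = 0$. For the inductive step, first note that $y_t \equiv \mu \bmod \langle \overline{x}\rangle$ (this is an easy side induction since $P/P'$ lies in $\langle \overline{x}\rangle$ at each stage). Therefore $P'(\overline{x}, y_t) \equiv P'(\overline{0}, \mu) \neq 0 \bmod \langle \overline{x}\rangle$, making $P'(\overline{x}, y_t)$ invertible in $\F[[\overline{x}]]$, so $y_{t+1}$ is well-defined. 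Then, Taylor-expanding $P(\overline{x}, y)$ around $y_t$,
$$P(\overline{x}, y_{t+1}) \,=\, P(\overline{x}, y_t) \,+\, (y_{t+1} - y_t)\, P'(\overline{x}, y_t) \,+\, (y_{t+1} - y_t)^2 \, Q$$
for some $Q \in \F[[\overline{x}]]$. Substituting $y_{t+1} - y_t = -P(\overline{x}, y_t)/P'(\overline{x}, y_t)$ cancels the first two terms, yielding $P(\overline{x}, y_{t+1}) = (P(\overline{x},y_t)/P'(\overline{x}, y_t))^2\cdot Q$. Since $P(\overline{x}, y_t) \in \langle\overline{x}\rangle^{2^t}$ by hypothesis and $1/P'(\overline{x}, y_t)\in\F[[\overline{x}]]$, the right side lies in $\langle\overline{x}\rangle^{2^{t+1}}$. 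This also shows $y_{t+1} - y_t \equiv 0 \bmod \langle\overline{x}\rangle^{2^t}$, so the sequence $\{y_t\}$ is Cauchy in the $\langle\overline{x}\rangle$-adic topology of $\F[[\overline{x}]]$.

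To conclude, let $S := \lim_{t\to\infty} y_t \in \F[[\overline{x}]]$; by completeness of the power series ring this limit exists and $S \equiv y_t \bmod \langle \overline{x}\rangle^{2^t}$. Continuity of evaluation gives $P(\overline{x}, S) = \lim_t P(\overline{x}, y_t) = 0$, and $S(\overline{0}) = \mu$. Finally, since $y - S$ is monic in $y$ as an element of $\F[[\overline{x}]][y]$, the usual polynomial division by $y - S$ is valid; writing $P(\overline{x}, y) = (y - S)\cdot H(\overline{x}, y) + P(\overline{x}, S)$ and using $P(\overline{x}, S) = 0$ yields $y - S \mid P(\overline{x}, y)$ in $\F[[\overline{x}]][y]$. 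The only mild subtlety to watch is the invertibility of $P'(\overline{x}, y_t)$ at each step---but that is precisely guaranteed by the $\mu$-mod-$\langle\overline{x}\rangle$ congruence, so there is no real obstacle here beyond bookkeeping.
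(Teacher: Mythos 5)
Your proposal is correct and takes essentially the same route as the paper: existence by induction on $t$ using the Newton recurrence, invertibility of $P'(\overline{x},y_t)$ via the congruence $y_t\equiv\mu \bmod \langle\overline{x}\rangle$, Taylor expansion around $y_t$ to obtain the quadratic gain in precision, and passing to the limit in the complete ring $\F[[\overline{x}]]$. Your explicit uniqueness argument (Taylor expansion around $S_1$ and factoring out a unit) merely fleshes out the one-line uniqueness remark in the paper's proof, so it is an elaboration rather than a different method.
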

\begin{proof}
 We give an inductive proof of existence and uniqueness together. Suppose $ P = \sum_{i=0}^d c_i y^i$. We show that there is $y_t$, a rational function $\frac{A_t}{B_t}$  such that $y_t \in \mathbb{F}[[\overline{x}]]$ , 
 For all $t \geq 0$,
 $
 P(\overline{x},y_t) \equiv 0 \bmod \langle \overline{x} \rangle ^{2^t}$ 
and for all $t \geq 1$,
 $ y_{t} \equiv y_{t-1} \bmod \langle \overline{x}\rangle ^{2^{t-1}} $. The proof is by induction. Let $y_0 := \mu$. Thus, base case is true. Now suppose such $y_t$ exists. Define  $y_{t+1}:= y_t - \frac{P(\overline{x},y_t)}{P'(\overline{x},y_t)} $.

 Now, $y_t \equiv y_{t-1} \bmod \langle \overline{x} \rangle ^{2^{t-1}} \implies y_t(\overline{0})=\mu $ . Hence  $P'(\overline{x},y_t) \rvert_{\overline{x}=\overline{0}} = P'(\overline{0},\mu) \neq 0$ and so $P'(\overline{x},y_t)$ is a unit in the power series ring. So, $y_{t+1}  \in \mathbb{F}[[\overline{x}]]$. Let us verify that it is an improved root of $P$; we use Taylor expansion. 
\begin{align*}
 P(\overline{x},y_{t+1})  & = P\left(\overline{x},\, y_t-\frac{P(\overline{x},y_t)}{P'(\overline{x},y_t)} \right) \\ 
 &= P(\overline{x},y_t) - P'(\overline{x},y_t)\frac{P(\overline{x},y_t)}{P'(\overline{x},y_t)}+ \frac{P''(\overline{x},y_t)}{2!} \left(\frac{P(\overline{x},y_t)}{P'(\overline{x},y_t)}\right)^2- \hdots \\ 
 & = 0 \, \bmod \langle \overline{x} \rangle ^{2^{t+1}} \,.
 \end{align*}
Thus, $P(\overline{x},y_{t+1}) \equiv 0 \bmod \langle \overline{x} \rangle ^{2^{t+1}} $ and $ y_{t+1} \equiv y_t \bmod \langle \overline{x} \rangle^{2^t}$. This completes the induction step.
 
Moreover, using the notion of limit, we have  $\lim_{t \to \infty}y_t = S$, a formal power series. It is unique as $\mu$ is a non-repeated root of 
$P(\overline{0},y)$. In particular, we get that for all $t \geq 0$, $P(\overline{x},S)=0$ or $y-S \mid P$.
\end{proof}

\begin{lemma}[Transform to monic]
For a polynomial $f(\overline{x})$ of total degree $d\ge0$ and random $\alpha_i \in_r \mathbb{F}$, the transformed polynomial $g(\overline{x},y):= f(\overline{\alpha}y+\overline{x})$ has a nonzero constant as coefficient of $y^d$, and degree wrt $y$ is $d$. 
\label{lem-monic}
\end{lemma}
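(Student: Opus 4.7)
The plan is to isolate the coefficient of $y^d$ in $g(\overline{x},y)$ as an evaluation of the top-degree homogeneous part of $f$, and then invoke Schwartz–Zippel. Write $f = \sum_{i=0}^{d} f^{=i}$, where $f^{=i}$ is the homogeneous component of $f$ of degree $i$, and by hypothesis $f^{=d} \ne 0$. Expanding each monomial $\overline{x}^{\,e}$ (with multi-index $e = (e_1,\ldots,e_n)$, $|e| = e_1+\cdots+e_n$) under the substitution $x_i \mapsto \alpha_i y + x_i$ gives
\[
\prod_{i=1}^{n} (\alpha_i y + x_i)^{e_i} \;=\; \overline{\alpha}^{\,e}\, y^{|e|} \,+\, (\text{terms of lower degree in } y),
\]
where $\overline{\alpha}^{\,e} := \prod_i \alpha_i^{e_i}$.

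First I would observe that a monomial $\overline{x}^{\,e}$ with $|e| < d$ contributes only powers $y^{j}$ with $j \le |e| < d$, so it cannot affect the coefficient of $y^d$. Hence the coefficient of $y^d$ in $g$ comes solely from the degree-$d$ part of $f$, and equals
\[
[\,y^d\,]\, g(\overline{x},y) \;=\; \sum_{|e|=d} c_e\, \overline{\alpha}^{\,e} \;=\; f^{=d}(\overline{\alpha}),
\]
where $f^{=d} = \sum_{|e|=d} c_e \overline{x}^{\,e}$. In particular, this coefficient is a constant in $\F$ (free of $\overline{x}$), which is the content of the claim about $y^d$ being a \emph{field} constant.

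Next, since $f^{=d}$ is a nonzero polynomial of total degree $d$ in $\overline{\alpha}$, the Schwartz–Zippel lemma applied over a finite set $S \subseteq \F$ of size $> d$ guarantees that $f^{=d}(\overline{\alpha}) \ne 0$ with probability $\ge 1 - d/|S|$. Choosing $\overline{\alpha} \in_r S^n$ from such a set yields a nonzero constant coefficient of $y^d$. Finally, the degree of $g$ in $y$ is at most $d$ (since every summand $\prod_i (\alpha_i y + x_i)^{e_i}$ has $y$-degree at most $|e| \le d$), and since the coefficient of $y^d$ is nonzero, this degree is exactly $d$.

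There is no real obstacle here: the argument is just bookkeeping of degrees together with a single Schwartz–Zippel application. The only point to be mildly careful about is distinguishing ``coefficient of $y^d$ in $g$ viewed as a polynomial in $y$ over $\F[\overline{x}]$'' from ``coefficient after further specializing $\overline{x}$''; the computation above shows they coincide because the $x_i$-terms in $(\alpha_i y + x_i)^{e_i}$ strictly lower the $y$-degree, so only the pure $\overline{\alpha}$-contribution survives at $y^d$.
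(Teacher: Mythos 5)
Your proposal is correct and follows essentially the same route as the paper: identify the coefficient of $y^d$ in $g$ as the top-degree homogeneous part of $f$ evaluated at $\overline{\alpha}$, and conclude it is nonzero for random $\overline{\alpha}$ via Schwartz--Zippel. Your write-up just spells out the degree bookkeeping in more detail than the paper does.
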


\begin{proof}
Suppose the transformation is $x_i \mapsto x_i + \alpha_i y $ where $ i\in[n]$. Write $f = \sum_{|\overline{\beta}|=d} c_{\overline{\beta}} \overline{x}^{\overline{\beta}} + \text{ lower degree terms } $. Coefficient of $y^d$ in $g$ is $ \sum_{|\overline{\beta}|=d} c_{\overline{\beta}} \overline{\alpha}^{\overline{\beta}}$. Clearly, for a random $\overline{\alpha}$ this coefficient will not vanish \cite{Sch80}, and it is the highest degree monomial in $g$.  

This ensures $\text{deg}_y(g) = \text{deg}(f)=d$ and that $g$ is monic wrt $y$.
\end{proof}

\section{Useful in Section \ref{sec-main}}

\begin{lemma}[Matrix inverse]
\label{lem-inv-det}
Let $\mu_i, i\in[d]$, be distinct nonzero elements in $\F$. Define a $d\times d$ matrix $A$ with the $(i,j)$-th entry $1/(y_i-\mu_j)^2$. Its entries are in the function field $\F(\overline{y})$. 
Then, det$(A)\ne0$.
\end{lemma}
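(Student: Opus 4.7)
The plan is to reduce the claim to the $\F$-linear independence of the rational functions $f_j(y) := 1/(y-\mu_j)^2$, $j\in[d]$, and then apply a specialization argument. Since $\det(A)\in\F(\overline{y})$ and the $y_i$'s are algebraically independent variables, it suffices to exhibit elements $\alpha_1,\dots,\alpha_d\in\F\setminus\{\mu_1,\dots,\mu_d\}$ making the scalar matrix $(f_j(\alpha_i))_{i,j\in[d]}$ nonsingular; any such witness certifies that $\det(A)$ is not the zero element of $\F(\overline{y})$.

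The key structural fact is that $f_1,\dots,f_d$ are $\F$-linearly independent. Indeed, in any relation $\sum_{j=1}^d c_j f_j(y)\equiv 0$ with $c_j\in\F$, the function $f_\ell$ is the unique summand with a pole at $\mu_\ell$, and that pole is of order exactly $2$. Multiplying the relation by $(y-\mu_\ell)^2$ and sending $y\to\mu_\ell$ therefore forces $c_\ell=0$, for every $\ell\in[d]$.

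Given this, I would induct on $d$. The base $d=1$ is immediate: pick any $\alpha_1\in\F\setminus\{\mu_1\}$. For the inductive step, fix by induction $\alpha_1,\dots,\alpha_{d-1}\in\F\setminus\{\mu_1,\dots,\mu_d\}$ so that the $(d-1)\times(d-1)$ sub-matrix $(f_j(\alpha_i))_{i,j\le d-1}$ is nonsingular (the sub-family $f_1,\dots,f_{d-1}$ remains linearly independent, so the hypothesis applies). Expanding the full determinant along the $d$-th row after this partial specialization gives a rational function of the single remaining variable $y_d$,
\[
g(y_d) \;=\; \sum_{j=1}^d (-1)^{d+j}\, C_j\, f_j(y_d),
\]
whose coefficients $C_j\in\F$ are the corresponding $(d-1)\times(d-1)$ minors of the already-fixed upper block. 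Crucially $C_d$ equals exactly the inductively chosen nonsingular minor, hence $C_d\ne 0$, and then the $\F$-linear independence of $f_1,\dots,f_d$ from the previous paragraph forces $g\not\equiv 0$. Since $\F$ is infinite (algebraically closed of characteristic $0$), we can pick $\alpha_d\in\F\setminus\{\mu_1,\dots,\mu_d\}$ avoiding the finitely many zeros and poles of $g$, securing $g(\alpha_d)\ne 0$ and completing the induction.

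I do not anticipate any real obstacle. The one point that needs care is arranging the cofactor attached to $f_d$ in the last row-expansion to be precisely the inductively built nonsingular minor, so that the nonvanishing of $g$ follows directly from the partial-fraction step above rather than from a more delicate cancellation. An alternative, formula-based route would be to recognise $A$ as a confluent limit of a $2d\times 2d$ Cauchy matrix and write $\det(A)$ in closed form as a product of differences, but the inductive specialization sketched above is self-contained and avoids any special identities.
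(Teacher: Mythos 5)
Your proof is correct, but it takes a different route from the paper. The paper argues via coefficient extraction: it Taylor-expands each entry $1/(y_i-\mu_j)^2$ around $\overline{y}=\overline{0}$, uses the fact that the rows involve disjoint variables to commute the coefficient operator with the determinant, and shows that the coefficient of the monomial $\prod_i y_i^{i-1}$ in $\det(A)$ is (up to nonzero row/column factors) the Vandermonde determinant $\prod_{i<j}\left(\frac{1}{\mu_i}-\frac{1}{\mu_j}\right)\neq 0$; this is where the hypothesis $\mu_j\neq 0$ enters, and the paper later remarks how to tweak the chosen exponents in small characteristic. You instead prove the $\F$-linear independence of the functions $f_j(y)=1/(y-\mu_j)^2$ from their pole structure (multiplying by $\prod_k(y-\mu_k)^2$ and evaluating at $\mu_\ell$ gives $c_\ell\prod_{k\neq\ell}(\mu_\ell-\mu_k)^2=0$), and then run the standard inductive specialization showing that an alternant matrix built from linearly independent functions is generically nonsingular: after fixing $\alpha_1,\dots,\alpha_{d-1}$, the last-row expansion is a nontrivial $\F$-combination of the $f_j$ because the cofactor of $f_d$ is the inductively nonsingular minor, and an infinite field supplies a good $\alpha_d$. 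Your argument is more elementary and slightly more general: it never uses $\mu_j\neq 0$, needs no explicit determinant identity, and applies verbatim to any family of linearly independent rational functions over an infinite field; the one bookkeeping point is that the induction hypothesis should be stated as ``specialization points can avoid any prescribed finite set'' so that $\mu_d$ can be excluded when choosing $\alpha_1,\dots,\alpha_{d-1}$, which costs nothing since $\F$ is infinite. What the paper's computation buys in exchange is an explicit nonzero coefficient (the Vandermonde in the $1/\mu_j$), which makes the prime-characteristic adaptation in its remark concrete, whereas your route relies only on $|\F|=\infty$ and yields no closed form.
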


\begin{proof}
The idea is to consider the power series of the function $1/(y_i-\mu_j)^2$ and show that a monomial appears nontrivially in that of det$(A)$.

We first need a claim about the coefficient operator on the determinant.

\begin{claim}\label{clm-coef-det}
Let $f_j=\sum_{i\ge0} \beta_{j,i} x^i$ be a power series in $\F[[x]]$, for $j\in[d]$. Then, $\text{Coeff}_{\overline{x}^{\overline{\alpha}}}\,\circ \text{det}\left(f_j(x_i)\right)$ 
$=\, \text{det}\left(\beta_{j,\alpha_i}\right)$.
\end{claim}
\claimproof{clm-coef-det}{
Observe that the rows of the matrix have disjoint variables. Thus, $x_i^{\alpha_i}$ could be produced only from the $i$-th row. This proves:
$\text{Coeff}_{\overline{x}^{\overline{\alpha}}} \circ \text{det}\left(f_j(x_i)\right) \,=\, \text{det}\left(\text{Coeff}_{x_i^{\alpha_i}} \circ f_j(x_i)\right)
\,=\, \text{det}\left(\beta_{j,\alpha_i}\right) $.
}

By Taylor expansion we have 
$$ \frac{1}{(x-\mu)^2} \,=\, \frac{1}{\mu^2}\sum_{j\geq 1} j \left(\frac{x}{\mu}\right)^{j-1} \,.$$ 
Hence, the coefficient of $y_i^{i-1}$ in $A(i,j)$ is 
$$\frac{1}{\mu_j^2} \frac{i}{\mu_j^{i-1}}= \frac{i}{\mu_j^{i+1}} \,.$$ 

By the above claim, the coefficient of $\prod_{i\in[d]} y_i^{i-1}$ in det$(A)$ is:
det$\left(\left( \frac{i}{\mu_j^{i+1}} \right)\right)$. By cancelling $i$ (from each row) and $1/\mu_j^2$ (from each column), we simplify it to the Vandermonde determinant:
$$ \text{det}\begin{bmatrix}
\frac{1}{\mu_1^0} & \frac{1}{\mu_2^0} & \hdots & \frac{1}{\mu_d^0} \\ 
\frac{1}{\mu_1^1} & \frac{1}{\mu_2^1} & \hdots & \frac{1}{\mu_d^1} \\ 
\vdots & \vdots & \hdots & \vdots \\
\frac{1}{\mu_1^{d-1}} & \frac{1}{\mu_2^{d-1}} & \hdots & \frac{1}{\mu_d^{d-1}} 
\end{bmatrix} \,=\, \prod_{i < j\in[d]} \left({\frac{1}{\mu_i}-\frac{1}{\mu_j}}\right) \,\neq\, 0 \,.$$
 
Hence, the determinant of $A$ is non-zero. 
\end{proof}

{\bf Remark.} If the characteristic of $\F$ is a prime $p\ge2$ then the above proof needs a slight modification. One should consider the coefficient of $\prod_{i\in[d]} y_i^{s_i-1}$ in det$(A)$ for a set $S=\{s_1,\ldots,s_d\}$ of distinct non-negative integers that are not divisible by $p$.

\begin{lemma}[Series inverse]
\label{lem-series-id}
Let $\delta\ge1$.
Assume that $A$ is a polynomial of degree $<\delta$ and $B$ is a homogeneous polynomial of degree $\delta$, such that $A(\overline{0})=:\mu \neq 0$. Then, we have the following identity in $\F[[\overline{x}]](y)$:
$$\frac{1}{ y-(A+B)} \,\equiv\, \frac{1}{y-A} + \frac{B}{(y-\mu)^2} \,\bmod \langle  \overline{x} \rangle ^{\delta+1}$$ 
\end{lemma}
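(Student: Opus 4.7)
The plan is to prove the identity by two successive geometric-series expansions in the power series ring $\F(y)[[\overline{x}]]$, where each of $y-A$, $y-(A+B)$, $y-\mu$ is invertible (their $\overline{x}$-constant terms are all the nonzero element $y-\mu\in\F(y)$).

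First I would rewrite
\[
\frac{1}{y-(A+B)} \;=\; \frac{1}{(y-A)\bigl(1 - B/(y-A)\bigr)} \;=\; \frac{1}{y-A}\,\sum_{k\ge 0}\Bigl(\frac{B}{y-A}\Bigr)^{k}.
\]
Since $B$ is $\overline{x}$-homogeneous of degree $\delta$, the term $B^{k}/(y-A)^{k+1}$ lies in $\langle \overline{x}\rangle^{k\delta}$. For $k\ge 2$ we have $k\delta\ge 2\delta\ge \delta+1$, so all such terms vanish modulo $\langle\overline{x}\rangle^{\delta+1}$. This leaves
\[
\frac{1}{y-(A+B)} \;\equiv\; \frac{1}{y-A} \;+\; \frac{B}{(y-A)^{2}} \pmod{\langle\overline{x}\rangle^{\delta+1}}.
\]

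Next I would replace the $(y-A)^{2}$ denominator by $(y-\mu)^{2}$. Write $A = \mu + A'$ with $A'\in\langle\overline{x}\rangle$, and expand
\[
\frac{1}{(y-A)^{2}} \;=\; \frac{1}{(y-\mu)^{2}}\,\sum_{k\ge 0}(k+1)\Bigl(\frac{A'}{y-\mu}\Bigr)^{k}.
\]
Multiplying by $B$, the general summand is $(k+1)\,B\,(A')^{k}/(y-\mu)^{k+2}$, which lies in $\langle\overline{x}\rangle^{\delta+k}$. For $k\ge 1$ this is in $\langle\overline{x}\rangle^{\delta+1}$ and can be discarded. Hence $B/(y-A)^{2}\equiv B/(y-\mu)^{2}\bmod\langle\overline{x}\rangle^{\delta+1}$, and combining with the previous display yields the claimed identity.

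There is really no obstacle here: once one commits to interpreting the inverses in the power series ring $\F(y)[[\overline{x}]]$, the proof is just two applications of $1/(1-u)=\sum u^{k}$ together with the observation that homogeneity of $B$ (degree $\delta$) and positivity of $\deg A'$ kill all higher-order tails modulo $\langle\overline{x}\rangle^{\delta+1}$. The only point worth being careful about is verifying invertibility of $y-A$ and $y-\mu$ as elements of $\F(y)[[\overline{x}]]$, which follows from $A(\overline{0})=\mu\ne 0$ and $\mu\ne 0$ respectively, so that both have a nonzero constant term $y-\mu\in\F(y)^{*}$.
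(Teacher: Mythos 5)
Your proof is correct, and it is close in spirit to the paper's, though organized differently. The paper performs a single geometric-series expansion around $y-\mu$: it writes $y-(A+B)=(y-\mu)-\bigl(A^{[1,\delta-1]}+B\bigr)$, expands $\bigl(1-\tfrac{A^{[1,\delta-1]}+B}{y-\mu}\bigr)^{-1}$, and uses that $B^2$ and $B\cdot A^{[1,\delta-1]}$ vanish modulo $\langle\overline{x}\rangle^{\delta+1}$, so $B$ survives only linearly and the remaining series resums to $\tfrac{1}{y-A}$. You instead do two expansions: first in $B$ around $y-A$, which gives $\tfrac{1}{y-A}+\tfrac{B}{(y-A)^2}$ modulo $\langle\overline{x}\rangle^{\delta+1}$, and then a second expansion replacing $(y-A)^2$ by $(y-\mu)^2$ in the correction term, since $B\cdot(A-\mu)^k$ dies for $k\ge1$. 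Both arguments rest on exactly the same two facts (geometric series plus degree counting with $B$ homogeneous of degree $\delta$); your two-step version is slightly more modular and makes transparent why the denominator of the correction term may be taken at the constant term $\mu$, while the paper's is a single chain of congruences. One small remark: invertibility of $y-\mu$ (and of $y-A$, $y-(A+B)$) in your ambient ring $\F(y)[[\overline{x}]]$ needs only that $y-\mu$ is a nonzero element of $\F(y)$, which holds because $y$ is an indeterminate; the hypothesis $\mu\ne 0$ is not what is doing the work there (it matters in the paper's later application, where $y$ is specialized to field constants).
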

\begin{proof}
We will use the notation $A^{[1,\delta-1]}$ to refer to the sum of the homogeneous parts of $A$ of degrees between $1$ and $\delta-1$ (equivalently, it is $A^{<\delta}-\mu$). Note that $B\cdot A^{[1,\delta-1]}$ vanishes mod $\langle  \overline{x} \rangle ^{\delta+1}$. Now,
\begin{align*}
\frac{1}{ y-(A+B)}  & \equiv  \frac{1}{y-\mu-\left(A^{[1,\delta-1]}+B\right)} \bmod \langle  \overline{x} \rangle ^{\delta+1}\\ 
& \equiv \frac{1}{y-\mu} \left( \frac{1}{1-\frac{A^{[1,\delta-1]}+B}{y-\mu}} \right) \bmod \langle  \overline{x} \rangle ^{\delta+1} \\ 
& \equiv \frac{1}{y-\mu} \left( 1+ \left(\frac{A^{[1,\delta-1]}+B}{y-\mu}\right) +\left(\frac{A^{[1,\delta-1]}+B}{y-\mu}\right)^2+ \hdots \hdots  \right)\bmod \langle  \overline{x} \rangle ^{\delta+1} \\ & \equiv \frac{1}{y-\mu} \left( 1+ \left(\frac{A^{[1,\delta-1]}+B}{y-\mu}\right) +\left(\frac{A^{[1,\delta-1]}}{y-\mu}\right)^2 +\left(\frac{A^{[1,\delta-1]}}{y-\mu}\right)^3 + \hdots \hdots  \right)\bmod \langle  \overline{x} \rangle ^{\delta+1} \\ 
& \equiv \frac{1}{y-\mu} \left( 1+ \left(\frac{A^{[1,\delta-1]}}{y-\mu}\right) +\left(\frac{A^{[1,\delta-1]}}{y-\mu}\right)^2+ \hdots \hdots  \right)+ \frac{B}{(y-\mu)^2}\bmod \langle  \overline{x} \rangle ^{\delta+1} \\
& \equiv \frac{1}{y-\mu}\left(\frac{1}{1-\frac{A^{[1,\delta-1]}}{y-\mu}}\right)+ \frac{B}{(y-\mu)^2}\bmod \langle  \overline{x} \rangle  ^{\delta+1}\\
& \equiv \frac{1}{y-A} + \frac{B}{(y-\mu)^2}\bmod \langle  \overline{x} \rangle ^{\delta+1} \quad.
\end{align*}
\end{proof}

\subsection{Closure properties for VNP}
\label{VNP}

{\em VNP-size parameter} $(w,v)$ of $F$ refers to $w$ being the witness size and $v$ being the size of the verifier circuit $f$.

Let $F(\overline{x},y), G(\overline{x},y), H(\overline{x})$ have verifier polynomials $f,g,h$ and the VNP size parameters $(w_f,v_f), (w_g,v_g),(w_h,v_h)$ respectively. Let the degree of $F$ wrt $y$ be $d$. Then, the following closure properties can be shown 
(\cite{burgisser2013algebraic} or \cite[Thm.2.19]{burgisser2013completeness}):
\begin{enumerate}
    \item  Add (resp.~Multiply): $F+G$ (resp.~$FG$) has VNP-size parameter $(w_f+w_g, v_f+v_g+3)$.
    \item Coefficient: $F_i(\overline{x})$ has VNP-size parameter $(w_f, (d+1)(v_f+1))$, where $F(\overline{x}, y) =: \sum_{i=0}^d F_i(\overline{x})y^i$.
    \item Compose: $F(\overline{x},H(\overline{x}))$ has VNP-size parameter $((d+1)(w_f+ dw_h), (d+1)^2(v_f+v_h+1)) $.
\end{enumerate}

\begin{proof}

All the above statements are easy to prove using the definition of VNP.
\begin{enumerate}
\item $(FG)(\overline{x},y)= \left(\sum_{u \in \{0,1\}^{w_f}} f(\overline{x},u_1,\hdots,u_{w_f})\right)\cdot \left(\sum_{u \in \{0,1\}^{w_g}} g(\overline{x},u_1,\hdots,u_{w_g})\right)$ $=$\\ $\sum_{u \in \{0,1\}^{w_f+w_g}} A(\overline{x},u_1,\ldots,u_{w_f+w_g})$. Where, $A(\overline{x},u_1,\ldots,u_{w_f+w_g}):=$ $f(\overline{x},u_1,\hdots,u_{w_f})\cdot g(\overline{x},u_{w_f+1},\hdots,u_{w_f+w_g})$. Trivially, $A$ has size $v_f+v_g+3$ (extra: one node, two edges) and witness size is $w_f+w_g$. 
    
Similarly, with $F+G$.

\item Interpolation gives, $ f_i(\overline{x})= \sum_{j=0}^d \alpha_j F(\overline{x},\beta_j)$, for some distinct arguments $\beta_j\in \F$. Clearly, $F(\overline{x},\beta_j)$ has VNP-size parameter $(w_f,v_f)$. Using the previous addition property we get that the verifier circuit has size $(d+1)(v_f+1)$. Witness size remains $w_f$ as we can reuse the witness string of $F$.

\item Write $F(\overline{x},y)=: \sum_{i=0}^d F_i(\overline{x})y^i$. We know that $F_i$ has VNP-size parameter $(w_f, (d+1)(v_f+1))$. For $0\le i\le d$, $H^i$ has VNP-size parameter $(iw_h,(i+1)v_h)$ using $i$-fold product (Item 1). Substituting $y=H$ in $F$, we can calculate the VNP-size parameter. 

Suppose $F_i$ and $H^i$ have corresponding verifier circuits $A_i$ and $B_i$ respectively. Then, $F(\overline{x},H(\overline{x})) = \sum_{i=0}^d F_i(\overline{x})H^i(\overline{x}) = \sum_{i=0}^d \left(\sum_{u \in \{0,1\}^{w_f}} A_i(\overline{x},u)\right)\cdot \left(\sum_{u \in \{0,1\}^{iw_h}} B_i(\overline{x},u)\right)$. Thus, the witness size is $< (d+1)(w_f+ dw_h)$. The corresponding verifier circuit size is $< (d+1)^2(v_f+v_h+1)$. 
\end{enumerate}

\end{proof}





\end{document}